\providecommand{\U}[1]{\protect\rule{.1in}{.1in}}
\newtheorem{theorem}{Theorem}
\newtheorem{definition}[theorem]{Definition}
\newtheorem{lemma}[theorem]{Lemma}
\newtheorem{proposition}[theorem]{Proposition}
\newenvironment{proof}[1][Proof]{\noindent\textbf{#1.} }{\ \rule{0.5em}{0.5em}}
\numberwithin{equation}{section}
\begin{document}

\title{Bounds on entanglement distillation and secret key agreement for
quantum broadcast channels}

\author{Kaushik P. Seshadreesan\thanks{Hearne Institute for Theoretical Physics, Department of Physics and
Astronomy, Louisiana State University, Baton Rouge, Louisiana 70803,
USA} \and Masahiro Takeoka\thanks{National Institute of Information and Communications Technology, Koganei,
Tokyo 184-8795, Japan} \and Mark M.~Wilde\footnotemark[1] \thanks{Center for Computation and Technology, Louisiana State University,
Baton Rouge, Louisiana 70803, USA}}
\maketitle
\begin{abstract}
The squashed entanglement of a quantum channel is an additive function
of quantum channels, which finds application as an upper bound on
the rate at which secret key and entanglement can be generated when
using a quantum channel a large number of times in addition to unlimited
classical communication. This quantity has led to an upper bound of
$\log((1+\eta)/(1-\eta))$ on the capacity of a pure-loss bosonic
channel for such a task, where $\eta$ is the average fraction of
photons that make it from the input to the output of the channel.
The purpose of the present paper is to extend these results beyond
the single-sender single-receiver setting to the more general case
of a single sender and multiple receivers (a quantum broadcast channel).
We employ multipartite generalizations of the squashed entanglement
to constrain the rates at which secret key and entanglement can be
generated between any subset of the users of such a channel, along
the way developing several new properties of these measures. We apply
our results to the case of a pure-loss broadcast channel with one sender
and two receivers.
\end{abstract}

\section{Introduction}

Quantum key distribution (QKD) refers to the quantum communication
task of generating a shared secret key between two or more cooperating
parties that is information-theoretically secure against an all-powerful
eavesdropper \cite{BB84,SBCDLP09}. The security of QKD is based on
physical principles, guaranteed by the laws of quantum mechanics.
QKD protocols such as BB84 \cite{BB84}, Ekert91 \cite{Eke91} and
CVGG02 \cite{GG02} have been studied both theoretically and experimentally
over many years since the original BB84 proposal.

Practical implementations of QKD over point-to-point fibre optical
links are known to suffer from an exponential decay of the secret
key rate with increasing distance of communication. Recently, it has
been proven mathematically that key distillation over a pure-loss bosonic
channel is fundamentally constrained by such a rate-loss
trade-off \cite{TGW14IEEE,TGW14Nat}, which cannot be circumvented unless
augumented by the use of quantum repeaters \cite{Scherer2011,Lvovsky2009},
for which we do not yet have an operational implementation. The primary
mathematical tool used in \cite{TGW14IEEE,TGW14Nat} to establish
this result is an entanglement measure known as the squashed
entanglement \cite{CW04}. (See also the related works \cite{Tucc99,Tucc02}.)
The squashed entanglement
possesses nearly all the desired properties of an entanglement measure
\cite{CW04,AF04,KW04,BCY11}. From this measure, one can construct
a function of a channel known as the squashed entanglement of the
channel \cite{TGW14IEEE}, which is defined as the maximum squashed
entanglement that can be generated between the input and output of
the channel. The idea behind these quantities stems from classical
information theory, being inspired by the intrinsic information
\cite{MW99}, which can be seen by tracing their roots to earlier
work in \cite{C02}.

Going forward from the results of \cite{TGW14IEEE,TGW14Nat}, a natural
question of interest is to determine rate-loss trade-offs in a multi-user
setting, e.g., in a setting with one sender and multiple receivers.
While one possibility is to consider optical switches or wavelength-division
multiplexing between the transmitter and each of the receiver nodes,
such an architecture would be prohibitively expensive, owing to the
need to have a full QKD system for each node. Alternatively, architectures
based on point-to-multipoint links, modeled as quantum broadcast channels,
have been suggested to accomplish secure multinode networks \cite{Tow97}.
Indeed we cannot hope to circumvent the already established rate-loss
trade-offs when going to these settings, simply because one could
always obtain an upper bound on the achievable rates in such a setting
by grouping all receivers together as a single receiver. Nevertheless, we can hope to refine
our understanding of the rate-loss trade-off. The main purpose of
the present paper is to accomplish just that for a quantum broadcast
channel connecting a single sender to an arbitrary number of receivers.

The secret-key agreement capacity of a noisy quantum channel is defined
as the highest rate at which arbitrarily secure secret-key bits can
be generated by using the channel an arbitrarily large number of times
in addition to unlimited forward and backward classical communication.
This capacity was first considered in the classical context in \cite{Mau93,AC93}
and later on in the more general quantum context. Similarly, a related quantity---the entanglement
distillation capacity of a channel, is defined as the highest rate at which entanglement
can be generated using the channel many times. Note that both of these
definitions include ``direct'' communication: the secret key generated
combined with unlimited classical communication allows for secure
communication via the one-time pad protocol and the entanglement generated
combined with the classical communication allows for quantum communication
via the teleportation protocol \cite{PhysRevLett.70.1895}.

In this work, we establish constraints on the secret-key agreement
capacity and the entanglement distillation capacity of a quantum broadcast channel. Our bounds are based on multipartite generalizations of the squashed entanglement \cite{YHHHOS09,AHS08}
and constrain the rates at which secret key or entanglement can be
established between any subset of the users of a quantum broadcast
channel. For an example, see Theorem~\ref{thm:two receivers} for
our upper bounds on a single-sender two-receiver broadcast channel.
It should be noted that the capacity of the quantum broadcast channel
has been explored in several contexts, including classical communication
\cite{GS07,GSE07,RSW14,SW11} and private and quantum communication
\cite{DHL10,YHD2006}. However, prior to our work, no nontrivial upper
bound had been established for the secret-key agreement and entanglement
distillation capacity of a quantum broadcast channel assisted by unlimited
classical communication between all parties.

The paper is organized as follows. We begin by recalling some preliminary
notions in Section~\ref{sec:Preliminaries}. As part of the preliminaries,
we include a brief review of the squashed entanglement and its multipartite
generalizations in Section~\ref{sub:Bi and Mul Sq Ent}. In Section~\ref{sec:auxiliary-lemmas},
we prove some new auxiliary lemmas regarding several multipartite
squashed entanglements, which play important roles in our main theorem.
Following that, we introduce the quantum broadcast channel in Section~\ref{sub:Quantum-broadcast-channels}
and describe a protocol for secret-key agreement and entanglement
distillation over such a channel. In Section~\ref{sec:Two Receiver UB},
we give upper bounds on the achievable secret-key agreement and entanglement
distillation rates over a single-sender, two-receiver quantum broadcast
channel. Following that, in Section~\ref{sec:m receiver UB}, we
give our general theorem constraining the rates at which secret-key
agreement and entanglement distillation are possible when using a
quantum broadcast channel with one sender and $m$ receivers. In Section~\ref{sec:BBC},
we apply our results to a single-sender, two-receiver bosonic broadcast
channel. Finally, we summarize with a conclusion.

\section{Preliminaries\label{sec:Preliminaries}}

\subsection{States, systems, channels, and measurements}

\label{sec:states channels}Let $\mathcal{{B(H)}}$ denote
the algebra of bounded linear operators acting on a Hilbert space
$\mathcal{{H}}$. Let $\mathcal{{B(H)}}_{+}$ denote the
subset of positive semi-definite operators. An operator $\rho$ is
a density operator, representing the state of a quantum system, if
$\rho\in\mathcal{{B(H)}}_{+}$ and $\operatorname{Tr}\left\{ \rho\right\} =1$.
Let $\mathcal{{S(H)}}$ denote the set of density operators
acting on $\mathcal{{H}}$. We also use the term ``quantum state\textquotedblright \ or
just ``state\textquotedblright \ interchangeably with the term ``density
operator.\textquotedblright \ The tensor product $\mathcal{{H}}_{A}\otimes\mathcal{{H}}_{B}$
of two Hilbert spaces $\mathcal{H}_{A}$ and $\mathcal{H}_{B}$ is
also denoted by $\mathcal{H}_{AB}$.\ Given a multipartite density
operator $\rho_{AB}\in\mathcal{S}(\mathcal{H}_{AB})$, the reduced
density operator on system $A$ is written in terms of the partial
trace as $\rho_{A}=\ $Tr$_{B}\left\{ \rho_{AB}\right\} $. An extension
of a state $\rho_{A}\in\mathcal{S}(\mathcal{H}_{A})$ is
a state $\Omega_{RA}\in\mathcal{S}(\mathcal{H}_{RA})$
such that Tr$_{R}\left\{ \Omega_{RA}\right\} =\rho_{A}$.

A linear map $\mathcal{N}_{A\rightarrow B}:\mathcal{B}(\mathcal{H}_{A})\rightarrow\mathcal{B}(\mathcal{H}_{B})$\ is
positive if $\mathcal{N}_{A\rightarrow B}(\sigma_{A})\in\mathcal{B}(\mathcal{H}_{B})_{+}$
whenever $\sigma_{A}\in\mathcal{B}(\mathcal{H}_{A})_{+}$.
Let id$_{A}$ denote the identity map acting on a system $A$. A linear
map $\mathcal{N}_{A\rightarrow B}$ is completely positive if the
map id$_{R}\otimes\mathcal{N}_{A\rightarrow B}$ is positive for a
reference system $R$ of arbitrary size. A linear map $\mathcal{N}_{A\rightarrow B}$
is trace-preserving if Tr$\left\{ \mathcal{N}_{A\rightarrow B}\left(\tau_{A}\right)\right\} =\ $Tr$\left\{ \tau_{A}\right\} $
for all input operators $\tau_{A}\in\mathcal{B}(\mathcal{H}_{A})$.
A linear map is a quantum channel if it is both completely positive
and trace-preserving (CPTP). An isometric extension $U_{A\rightarrow BE}^{\mathcal{N}}$
of a channel $\mathcal{N}_{A\rightarrow B}$ acting on a state $\rho_{A}\in\mathcal{S}(\mathcal{H}_{A})$
is a linear map that satisfies the following: 
\begin{equation}
\text{Tr}_{E}\left\{ U_{A\rightarrow BE}^{\mathcal{N}}\rho_{A}(U_{A\rightarrow BE}^{\mathcal{N}})^{\dag}\right\} =\mathcal{N}_{A\rightarrow B}(\rho_{A}),\ \ \ \ \ U_{\mathcal{N}}^{\dagger}U_{\mathcal{N}}=I_{A},\ \ \ \ \ U_{\mathcal{N}}U_{\mathcal{N}}^{\dagger}=\Pi_{BE},
\end{equation}
where $\Pi_{BE}$ is a projection onto a subspace of the Hilbert space
$\mathcal{H}_{B}\otimes\mathcal{H}_{E}$.

A measurement channel is a quantum channel with a quantum input and
a classical output, specified as follows:
\begin{equation}
\omega\rightarrow\sum_{m}\text{Tr}\left\{ \Lambda^{m}\omega\right\} \vert m\rangle \langle m\vert ,
\end{equation}
where $\left\{ \Lambda^{m}\right\} $ is a set of positive semi-definite
operators such that $\sum_{m}\Lambda^{m}=I$ and $\left\{ \vert m\rangle \right\} $
is an orthonormal basis. The set $\left\{ \Lambda^{m}\right\} $ is
also known as a positive operator-valued measure (POVM).

\subsection{Maximally entangled states and GHZ states}

\label{sec:ent states}Along with density operators, we also say that
any unit vector $|\psi\rangle\in\mathcal{H}$ is a quantum state.
Its corresponding density operator is $\vert \psi\rangle \langle \psi\vert $,
and we often make the abbreviation $\psi=\vert \psi\rangle \langle \psi\vert $.
Any bipartite pure state $|\psi\rangle_{AB}\in\mathcal{H}_{AB}$ has
a Schmidt decomposition as follows:
\begin{equation}
\vert \psi\rangle _{AB}\equiv\sum_{i=0}^{d-1}\sqrt{\lambda_{i}}\vert i\rangle _{A}\vert i\rangle _{B},
\end{equation}
where $\{|i\rangle_{A}\}$ and $\{|i\rangle_{B}\}$ form orthonormal
bases in $\mathcal{H}_{A}$ and $\mathcal{H}_{B}$, respectively,
$d$ is the Schmidt rank of the state, $0<\lambda_{i}\leq1$ for all
$i\in\left\{ 0,\ldots,d-1\right\} $, and $\sum_{i=0}^{d-1}\lambda_{i}=1$.
A maximally entangled state of Schmidt rank $d$ is a pure bipartite
state of the following form:
\begin{equation}
\left\vert \Phi\right\rangle _{AB}\equiv\frac{1}{\sqrt{d}}\sum_{i=0}^{d-1}\vert i\rangle _{A}\vert i\rangle _{B},\label{eq:Bi Max Ent}
\end{equation}
and is said to contain $H(A)_{\Phi}\equiv-\operatorname{Tr}\left\{ \Phi_{A}\log\Phi_{A}\right\} =\log d$ entangled bits. In the previous expression, $H$ denotes the von Neumann
entropy of the reduced state $\Phi_{A}=\operatorname{Tr}_{B}\left\{ \Phi_{AB}\right\} $. 

The Greenberger-Horne-Zeilinger (GHZ) state is a multipartite generalization
of the maximally entangled state. An $m$-party GHZ state shared between
systems $A_{1}, \ldots, A_{m}$ is written as
\begin{equation}
\left\vert \Phi\right\rangle _{A_{1}\cdots A_{m}}\equiv\frac{1}{\sqrt{d}}\sum_{i=0}^{d-1}\vert i\rangle _{A_{1}}\otimes\cdots\otimes\vert i\rangle _{A_{m}},\label{eq:GHZ def}
\end{equation}
where $\{\vert i\rangle _{A_{1}}\}$, \ldots{}, $\{\vert i\rangle _{A_{m}}\}$
are orthonormal bases, and is also said to contain $\log d$ entangled bits. Throughout this paper, we refer to GHZ\ states $\left\vert \Phi\right\rangle _{A_{1}\cdots A_{m}}$
as maximally entangled states (however, note that this terminology
depends on which entanglement measure one employs---for the entanglement
measures that we employ in this paper, they are indeed maximally entangled).

\subsection{Bipartite and multipartite private states}

\label{sec:priv states}Let $\gamma_{ABA^{\prime}B^{\prime}}$\ be
a state shared between spatially separated parties Alice and Bob,
such that Alice possesses systems $A$ and $A^{\prime}$ and Bob possesses
systems $B$ and $B^{\prime}$. $\gamma_{ABA^{\prime}B^{\prime}}$
is called a private state \cite{HHHO05,HHHO09} if Alice and Bob can
extract a secret key from it by performing local measurements on $A$
and $B$, which is product with any purifying system of $\gamma_{ABA^{\prime}B^{\prime}}$.
That is, $\gamma_{ABA^{\prime}B^{\prime}}$ is a private state of
$\log d$ private bits if, for any purification $\left\vert \varphi^{\gamma}\right\rangle _{ABA^{\prime}B^{\prime}E}$
of $\gamma_{ABA^{\prime}B^{\prime}}$, the following holds:
\begin{equation}
\left(\mathcal{M}_{A}\otimes\mathcal{M}_{B}\otimes\text{Tr}_{A^{\prime}B^{\prime}}\right)\left(\varphi_{ABA^{\prime}B^{\prime}E}^{\gamma}\right)=\frac{1}{d}\sum_{i=0}^{d-1}\vert i\rangle \langle i\vert _{A}\otimes\vert i\rangle \langle i\vert _{B}\otimes\sigma_{E},
\end{equation}
where $\mathcal{M}(\cdot)=\sum_{i}\vert i\rangle \langle i\vert (\cdot)\vert i\rangle \langle i\vert $
is a von Neumann measurement channel and $\sigma_{E}$ is some state
on the purifying system $E$ (which could depend on the particular
purification). The systems $A^{\prime}$ and $B^{\prime}$ are known
as ``shield systems\textquotedblright \ because they aid in keeping
the key secure from any party possessing the purifying system (part
or all of which might belong to a malicious party). It is a non-trivial
consequence of the above definition that a private state of $\log d$
private bits can be written in the following form \cite{HHHO05,HHHO09}:
\begin{equation}
\gamma_{ABA^{\prime}B^{\prime}}=U_{ABA^{\prime}B^{\prime}}\left(\Phi_{AB}\otimes\rho_{A^{\prime}B^{\prime}}\right)U_{ABA^{\prime}B^{\prime}}^{\dag},\label{eq:private-1}
\end{equation}
where $\Phi_{AB}$ is a maximally entangled state of Schmidt rank
$d$, and 
\begin{equation}
U_{ABA^{\prime}B^{\prime}}=\sum_{i,j=0}^{d-1}\vert i\rangle \langle i\vert _{A}\otimes\vert j\rangle \langle j\vert _{B}\otimes U_{A^{\prime}B^{\prime}}^{ij}
\end{equation}
is a controlled unitary known as a ``twisting unitary.\textquotedblright \ The
advantage of the notion of a private state shared between Alice and
Bob as opposed to a secret key is that there is no need to consider
an eavesdropper in the private state formalism, as where this is necessary
when considering a secret key. We return to this point in Section~\ref{sec: locc pc}.

A multipartite private state is a straightforward generalization of
the bipartite definition \cite{HA06}. Indeed, $\gamma_{A_{1}\cdots A_{m}A_{1}^{\prime}\cdots A_{m}^{\prime}}$\ is
a state of $\log d$ private bits if, for any purification $\left\vert \varphi^{\gamma}\right\rangle _{A_{1}\cdots A_{m}A_{1}^{\prime}\cdots A_{m}^{\prime}E}$
of $\gamma_{A_{1}\cdots A_{m}A_{1}^{\prime}\cdots A_{m}^{\prime}}$,
the following holds:
\begin{equation}
\left(\mathcal{M}_{A_{1}}\otimes\cdots\otimes\mathcal{M}_{A_{m}}\otimes\text{Tr}_{A_{1}^{\prime}\cdots A_{m}^{\prime}}\right)\left(\varphi_{A_{1}\cdots A_{m}A_{1}^{\prime}\cdots A_{m}^{\prime}E}^{\gamma}\right)=\frac{1}{d}\sum_{i=0}^{d-1}\vert i\rangle \langle i\vert _{A_{1}}\otimes\cdots\otimes\vert i\rangle \langle i\vert _{A_{m}}\otimes\sigma_{E},\label{eq:mult-secret-key}
\end{equation}
where $\mathcal{M}$ and $\sigma$ are as before. The above implies
that an $m$-partite private state of $\log d$ private bits is a
quantum state $\gamma_{A_{1}\cdots A_{m}A_{1}^{\prime}\cdots A_{m}^{\prime}}$
that can be written as
\begin{equation}
\gamma_{A_{1}\cdots A_{m}A_{1}^{\prime}\cdots A_{m}^{\prime}}=U_{A_{1}\cdots A_{m}A_{1}^{\prime}\cdots A_{m}^{\prime}}\left(\Phi_{A_{1}\cdots A_{m}}\otimes\rho_{A_{1}^{\prime}\cdots A_{m}^{\prime}}\right)U_{A_{1}\cdots A_{m}A_{1}^{\prime}\cdots A_{m}^{\prime}}^{\dag},\label{eq:mult_private state}
\end{equation}
where $\Phi_{A_{1}\cdots A_{m}}$ is an $m$-qudit maximally entangled
state and 
\begin{equation}
U_{A_{1}\cdots A_{m}A_{1}^{\prime}\cdots A_{m}^{\prime}}=\sum_{i_{1},\ldots, i_{m}=0}^{d-1}\vert i_{1},\ldots,i_{m}\rangle \langle i_{1},\ldots,i_{m}\vert _{A_{1}\cdots A_{m}}\otimes U_{A_{1}^{\prime}\cdots A_{m}^{\prime}}^{i_{1},\ldots,i_{m}}
\end{equation}
is a twisting unitary, where each unitary $U_{A_{1}^{\prime}\cdots A_{m}^{\prime}}^{i_{1},\ldots,i_{m}}$
depends on the values $i_{1},\ldots,i_{m}$.

Note that for brevity of notation, we sometimes suppress the shield
systems when writing a private state. In such a case, it is implicit
that the shield systems are contained within; e.g., the notation $\gamma_{AB}$
implies that system $A$ contains both Alice's share of the key and
a shield and likewise $B$ contains both Bob's share of the key and
a shield.

\subsection{LOCC and LOPC}

\label{sec: locc pc}Local operations and classical communication
(LOCC) is a commonly considered paradigm for distributed quantum information
processing between two or more honest parties \cite{BDSW96}\ (for
a recent discussion, see, e.g., \cite{CLMOW14}). In this paradigm,
$m$ cooperating parties $A_{1},\ldots,A_{m}$ begin by sharing
a quantum state $\rho_{A_{1}\cdots A_{m}}$. They are subsequently
allowed to perform local quantum operations on their own systems and
to communicate with each other using a classical communication channel.
A typical goal of an LOCC\ protocol is to distill a GHZ entangled
state or a multipartite private state.

Local operations and public communication (LOPC) is a related paradigm
that is particularly relevant for quantum key distribution \cite{HHHO05,HHHO09}.
In this paradigm, we have the honest parties $A_{1}, \ldots,
A_{m}$ and an additional untrusted party $E$. All parties begin
by sharing a quantum state $\rho_{A_{1}\cdots A_{m}E}$, and the honest
parties are allowed to perform local quantum operations and public
classical communication, such that all parties have access to the
classical data being communicated over a public classical channel.
The public classical channel is ``authenticated,\textquotedblright \ meaning
that the untrusted party can only learn the classical information
but is not allowed to tamper with it. The usual aim of the trusted
parties in the LOPC\ paradigm is to distill a state that is nearly
indistinguishable from a secret key state of the form in (\ref{eq:mult-secret-key}).

One of the main insights of \cite{HHHO05,HHHO09} was to prove that
the approximate LOCC\ distillation of private states is equivalent
to the approximate LOPC\ distillation of a secret key, when we are
dealing with the ``most paranoid\textquotedblright \ scenario in
which the untrusted party possesses a purifying system of the states
of the honest parties. Thus, this result introduces an important simplification
in which it suffices to focus on the LOCC\ distillation of private
states. Also, we can unify this setting with entanglement distillation,
in which the goal of a given protocol could be to distill both entangled
states and private states at the same time. This is exactly the kind
of scenario that we will consider in this paper.

\subsection{Conditional mutual information and conditional multipartite information}

\label{sec: cmi}Let $\rho_{ABE}$ be a tripartite quantum state on
systems $A$, $B$, and $E$. The quantum conditional mutual information
(QCMI) is defined as
\begin{equation}
I(A;B|E)_{\rho}\equiv H(AE)_{\rho}+H(BE)_{\rho}-H(E)_{\rho}-H(ABE)_{\rho},\label{eq:qcmi}
\end{equation}
where, e.g., $H(AE)_{\rho}\equiv-\operatorname{Tr}\left\{ \rho_{AE}\log\rho_{AE}\right\} $
denotes the von Neumann entropy of the state $\rho_{AE}$, which is
the reduced density operator $\rho_{AE}=\operatorname{Tr}_{B}\left\{ \rho_{ABE}\right\} $.
The QCMI is non-negative, which is a non-trivial fact known as the
strong subadditivity of quantum entropy \cite{LR73,LR73PRL}.
The QCMI is non-increasing under the action of local quantum channels
performed on the systems $A$ and $B$ \cite{CW04}, i.e., 
\begin{equation}
I(A;B|E)_{\rho}\geq I(A^{\prime};B^{\prime}|E)_{\omega},\label{eq:cmi_monotonicity}
\end{equation}
where $\omega_{A^{\prime}B^{\prime}E}\equiv\left(\mathcal{N}_{A\rightarrow A^{\prime}}\otimes\mathcal{M}_{B\rightarrow B^{\prime}}\right)\left(\rho_{ABE}\right)$
with $\mathcal{N}_{A\rightarrow A^{\prime}}$ and $\mathcal{M}_{B\rightarrow B^{\prime}}$
arbitrary local quantum channels performed on the input systems
$A$ and $B$, leading to output systems $A^{\prime}$ and $B^{\prime}$,
respectively. Another interesting property of the QCMI is that for
a four-party pure state $\psi_{ABED}$ it obeys a duality relation
given by $I(A;B|E)_{\psi}=I(A;B|D)_{\psi}$ \cite{DY08,YD09}. The
QCMI finds an operational meaning in the information theoretic task
of quantum state redistribution \cite{DY08}.

For an $m+1$-partite quantum state $\rho_{A_{1}\cdots A_{m}E}$,
there are at least two distinct ways to generalize the conditional
mutual information:
\begin{align}
I(A_{1};\cdots;A_{m}|E)_{\rho} & =\sum_{i=1}^{m}H(A_{i}|E)-H(A_{1}\cdots A_{m}|E)_{\rho},\label{eq:cmmi_I}\\
\widetilde{I}(A_{1};\cdots;A_{m}|E)_{\rho} & =\sum_{i=1}^{m}H(A_{\left[m\right]\backslash\left\{ i\right\} }|E)_{\rho}-\left(m-1\right)H(A_{1}\cdots A_{m}|E)_{\rho}\label{eq:cmmi_S}\\
 & =H(A_{1}\cdots A_{m}|E)_{\rho}-\sum_{i=1}^{m}H(A_{i}|A_{\left[m\right]\backslash\left\{ i\right\} }E)_{\rho},\label{eq:cmmi_S_2}
\end{align}
where the shorthand $A_{\left[m\right]\backslash\left\{ i\right\} }$
indicates all systems $A_{1}\cdots A_{m}$ except for system $A_{i}$.\footnote{In previous work \cite{CMS02}, the quantity $\widetilde{I}(A_{1};\cdots;A_{m}|E)_{\rho}$
was denoted by $S_{m}(A_{1};\cdots;A_{m}|E)_{\rho}$, but
there are at least two difficulties with using this notation. First
and foremost, the letter $S$ is widely used in quantum physics to
denote entropy or uncertainty, while the measure here is not a measure
of uncertainty but rather of correlations. Second, having the subscript
$m$ limits the extension of the notation to the more general scenarios
considered in this paper (c.f., Section~\ref{sec:partition-notation}).} The former is the conditional version of a quantity known as the
total correlation \cite{Watan60} and has been used in a variety of
contexts \cite{PHH08,YHW08,W14}, while the latter was introduced
in \cite{CMS02} and employed later on in \cite{YHHHOS09,YHW08}.
The above two quantities are generally incomparable, but related by the following
formula \cite{YHHHOS09}:
\begin{equation}
I(A_{1};\cdots;A_{m}|E)_{\rho}+\widetilde{I}(A_{1};\cdots;A_{m}|E)_{\rho}=\sum_{i=1}^{m}I(A_{i};A_{\left[m\right]\backslash\left\{ i\right\} }|E)_{\rho}.\label{eq:cmmi_dual}
\end{equation}
For a state $\rho_{BA_{1}A_{2}\cdots A_{m}E}$, the above conditional
multipartite informations obey the following chain rules, respectively
\cite[Section III]{YHHHOS09}: 
\begin{align}
I(BA_{1};\cdots;A_{m}|E)_{\rho} & =I(A_{1};\cdots;A_{m}|BE)_{\rho}+\sum_{i=2}^{m}I(B;A_{i}|E)_{\rho},\label{eq:I_Chain}\\
\widetilde{I}(BA_{1};A_{2}\cdots;A_{m}|E)_{\rho} & =\widetilde{I}(A_{1};A_{2};\cdots;A_{m}|BE)_{\rho}+I(B;A_{2}\cdots A_{m}|E)_{\rho}.\label{eq:S_Chain}
\end{align}
Also, they are additive with respect to tensor-product states, non-negative,
and monotone non-increasing under local quantum channels acting on systems
$A_{1},\ldots, A_{m}$ \cite[Section III]{YHHHOS09}, i.e.,
\begin{align}
I(A_{1};\cdots;A_{m}|E)_{\rho} & \geq I(A_{1}^{\prime};\cdots;A_{m}^{\prime}|E)_{\omega},\label{eq:cmmi_monotonicity}\\
\widetilde{I}(A_{1};\cdots;A_{m}|E)_{\rho} & \geq\widetilde{I}(A_{1}^{\prime};\cdots;A_{m}^{\prime}|E)_{\omega},
\end{align}
where
\begin{equation}
\omega_{A_{1}^{\prime}\cdots A_{m}^{\prime}E}\equiv\left(\mathcal{N}_{A_{1}\rightarrow A_{1}^{\prime}}^{\left(1\right)}\otimes\cdots\otimes\mathcal{{N}}_{A_{m}\rightarrow A_{m}^{\prime}}^{\left(m\right)}\right)\left(\rho_{A_{1}\cdots A_{m}E}\right),
\end{equation}
with $\mathcal{N}_{A_{i}\rightarrow A_{i}^{\prime}}^{\left(i\right)}$
an arbitrary local quantum channel performed on the input system
$A_{i}$, leading to output system $A_{i}^{\prime}$.

\subsection{Bipartite and multipartite squashed entanglement}

\label{sub:Bi and Mul Sq Ent}We begin this section by recalling the
definition of the bipartite squashed entanglement \cite{CW04}.

\begin{definition} 
\label{def:Bi_Sq ent}The squashed entanglement
of a bipartite state $\rho_{AB}$ is defined as
\begin{equation}
E_{\operatorname{sq}}(A;B)_{\rho}\equiv\frac{{1}}{2}\inf_{\omega_{ABE}}\left\{ I(A;B|E)_{\omega}:\operatorname{Tr}_{E}\left\{ \omega_{ABE}\right\} =\rho_{AB}\right\} ,
\end{equation}
where the infimum is taken over all possible extensions $\omega_{ABE}$
of $\rho_{AB}$ and $I(A;B|E)_{\omega}$ is the quantum
conditional mutual information of \eqref{eq:qcmi}. 
\end{definition}

The squashed entanglement possesses many of the properties that are
desired of an entanglement measure. For example, it is monotone non-increasing
under LOCC, additive with respect to tensor-product states, and subadditive
in general \cite{CW04}. It is a faithful entanglement measure, in
the sense that it is equal to zero if and only if the state is separable
\cite{BCY11,LW14}. It is also asymptotically continuous~\cite{AF04}.
The squashed entanglement is normalized on maximally entangled states
and private states: for a maximally entangled state of Schmidt rank
$d$, the squashed entanglement equals $\log d$ \cite{CW04}, and
it is at least $\log d$ for a private state of $\log d$ private
bits \cite[Proposition 4.19]{Chr06}. Furthermore, the squashed entanglement
of a state $\rho_{AB}$ is an upper bound on the rate at which Bell
states or private states can be distilled per copy of $\rho_{AB}$
when using LOCC \cite{CW04,CEHHOR07}.

There are at least two different multipartite generalizations of the
squashed entanglement \cite{YHHHOS09,AHS08}:

\begin{definition} \label{def:Mult_Sq ent}For an $m$-partite quantum
state $\rho_{A_{1}\cdots A_{m}}$, the squashed entanglement measures
$E_{\operatorname{sq}}$ and $\widetilde{E}_{\operatorname{sq}}$
are defined as
\begin{align}
E_{\operatorname{sq}}(A_{1};\cdots;A_{m})_{\rho} & \equiv\frac{{1}}{2}\inf_{\omega_{A_{1}A_{2}\cdots A_{m}E}}\left\{ I(A_{1};\cdots;A_{m}|E)_{\omega}:\operatorname{Tr}_{E}\left\{ \omega_{A_{1}\cdots A_{m}E}\right\} =\rho_{A_{1}\cdots A_{m}}\right\} ,\label{eq:I_Squashed}\\
\widetilde{E}_{\operatorname{sq}}(A_{1};\cdots;A_{m})_{\rho} & \equiv\frac{{1}}{2}\inf_{\omega_{A_{1}A_{2}\cdots A_{m}E}}\left\{ \widetilde{I}(A_{1};\cdots;A_{m}|E)_{\omega}:\operatorname{Tr}_{E}\left\{ \omega_{A_{1}\cdots A_{m}E}\right\} =\rho_{A_{1}\cdots A_{m}}\right\} ,\label{eq:S_Squashed}
\end{align}
where the infima are taken over all possible extensions $\omega_{A_{1}\cdots A_{m}E}$
of $\rho_{A_{1}\cdots A_{m}}$, and $I$ and $\widetilde{I}$ are
the quantum conditional multipartite information quantities given
in \eqref{eq:cmmi_I} and \eqref{eq:cmmi_S}, respectively. \end{definition}

The squashed entanglements defined above have the following alternative
characterization in terms of a ``squashing channel,\textquotedblright \ which
follows from the same reasoning that justifies \cite[Eq.~(3)]{CW04}:

\begin{lemma} Let $\left\vert \varphi^{\rho}\right\rangle _{A_{1}\cdots A_{m}E}$
be a purification of $\rho_{A_{1}\cdots A_{m}}$. Then
\begin{align}
E_{\operatorname{sq}}(A_{1};\cdots;A_{m})_{\rho} & \equiv\frac{{1}}{2}\inf_{\mathcal{S}_{E\rightarrow E^{\prime}}}I(A_{1};\cdots;A_{m}|E^{\prime})_{\omega},\\
\widetilde{E}_{\operatorname{sq}}(A_{1};\cdots;A_{m})_{\rho} & \equiv\frac{{1}}{2}\inf_{\mathcal{S}_{E\rightarrow E^{\prime}}}\widetilde{I}(A_{1};\cdots;A_{m}|E^{\prime})_{\omega},
\end{align}
where the infima are over all squashing channels $\mathcal{S}_{E\rightarrow E^{\prime}}$
and
\begin{equation}
\omega_{A_{1}\cdots A_{m}E^{\prime}}\equiv\mathcal{S}_{E\rightarrow E^{\prime}}\!\left(\varphi_{A_{1}\cdots A_{m}E}^{\rho}\right).
\end{equation}

\end{lemma}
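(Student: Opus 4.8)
The plan is to show that the two infima appearing in the lemma—one over all extensions $\omega_{A_1\cdots A_m E'}$ of $\rho_{A_1\cdots A_m}$, the other over all squashing channels $\mathcal{S}_{E\rightarrow E'}$ applied to the fixed purification $\varphi^{\rho}_{A_1\cdots A_m E}$—range over precisely the same set of states on the systems $A_1\cdots A_m E'$. Since both $I(A_1;\cdots;A_m|E')_{\omega}$ and $\widetilde{I}(A_1;\cdots;A_m|E')_{\omega}$ depend only on the reduced state $\omega_{A_1\cdots A_m E'}$, establishing this set equality immediately yields both claimed identities at once, and the argument is identical for $E_{\operatorname{sq}}$ and $\widetilde{E}_{\operatorname{sq}}$.

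One inclusion is immediate. For any squashing channel $\mathcal{S}_{E\rightarrow E'}$, the output $\omega_{A_1\cdots A_m E'}\equiv\mathcal{S}_{E\rightarrow E'}(\varphi^{\rho}_{A_1\cdots A_m E})$ is a valid extension of $\rho_{A_1\cdots A_m}$: because $\mathcal{S}_{E\rightarrow E'}$ acts only on the purifying system $E$ and is trace-preserving, we have $\operatorname{Tr}_{E'}\{\omega_{A_1\cdots A_m E'}\}=\operatorname{Tr}_{E}\{\varphi^{\rho}_{A_1\cdots A_m E}\}=\rho_{A_1\cdots A_m}$. Thus every squashing-channel output lies in the set of extensions, and the squashing-channel infimum is no smaller than the extension infimum.

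The substantive direction is to realize an arbitrary extension as such an output. First I would take an arbitrary extension $\omega_{A_1\cdots A_m E'}$ and pass to a purification $|\phi\rangle_{A_1\cdots A_m E'E''}$ of it, introducing an auxiliary system $E''$. Tracing out $E'E''$ shows that $|\phi\rangle$ is itself a purification of $\rho_{A_1\cdots A_m}$. Since $|\varphi^{\rho}\rangle_{A_1\cdots A_m E}$ is another purification of the same state, the essential uniqueness of purifications supplies an isometry $W_{E\rightarrow E'E''}$ with $(I_{A_1\cdots A_m}\otimes W_{E\rightarrow E'E''})|\varphi^{\rho}\rangle_{A_1\cdots A_m E}=|\phi\rangle_{A_1\cdots A_m E'E''}$. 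Defining the squashing channel $\mathcal{S}_{E\rightarrow E'}(\cdot)\equiv\operatorname{Tr}_{E''}\{W(\cdot)W^{\dag}\}$ then gives $\mathcal{S}_{E\rightarrow E'}(\varphi^{\rho}_{A_1\cdots A_m E})=\operatorname{Tr}_{E''}\{\phi_{A_1\cdots A_m E'E''}\}=\omega_{A_1\cdots A_m E'}$, exhibiting the extension as a squashing-channel output and completing the reverse inclusion.

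I expect the only delicate point to be the dimension bookkeeping in the uniqueness-of-purifications step, which requires $\dim(E'E'')\geq\dim E$ for the isometry $W_{E\rightarrow E'E''}$ to exist. This is handled exactly as in the bipartite argument of \cite{CW04}: one enlarges the auxiliary system $E''$ by appending an unused ancilla, which leaves $|\phi\rangle$ a purification of $\omega_{A_1\cdots A_m E'}$ and can always be done so that the dimension condition holds. With this in place the two sets of states coincide, so the two infima are equal for both $I$ and $\widetilde{I}$, establishing the lemma.
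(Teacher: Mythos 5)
Your proposal is correct and is essentially the argument the paper has in mind: the paper gives no separate proof, instead citing the reasoning behind Eq.~(3) of \cite{CW04}, which is exactly this identification of the set of extensions of $\rho_{A_{1}\cdots A_{m}}$ with the set of outputs of channels acting on the purifying system $E$ (via uniqueness of purifications and a Stinespring-type dilation). Your handling of the dimension issue by enlarging the auxiliary system $E^{\prime\prime}$ with an ancilla is the standard and correct way to make the isometry $W_{E\rightarrow E^{\prime}E^{\prime\prime}}$ exist, so the two infima coincide for both $I$ and $\widetilde{I}$.
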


The above multipartite squashed entanglements are both monotone non-increasing
under LOCC, additive with respect to tensor-product states, subadditive
in general, and asymptotically continuous \cite[Section IV]{YHHHOS09}.
They both reduce to the bipartite squashed entanglement from Definition~\ref{def:Bi_Sq ent}
when $m=2$. They also satisfy the following lemmas:

\begin{lemma}{\cite[Section IV-A]{YHHHOS09}}\label{lem:sqent} For
a classical-quantum state
\begin{equation}
\rho_{XAB_{1}\cdots B_{m}}=\sum_{x}p_{X}(x)\vert x\rangle \langle x\vert _{X}\otimes\rho_{AB_{1}\cdots B_{m}}^{x},
\end{equation}
the squashed entanglement measures of Definition~\ref{def:Mult_Sq ent}
satisfy the following property:
\begin{align}
E_{\operatorname{sq}}(AX;B_{1};\cdots;B_{m})_{\rho} & =\sum_{x}p(x)E_{\operatorname{sq}}(A;B_{1};\cdots;B_{m})_{\rho^{x}},\label{eq:pseudo conv I sq}\\
\widetilde{E}_{\operatorname{sq}}(AX;B_{1};\cdots;B_{m})_{\rho} & =\sum_{x}p(x)\widetilde{E}_{\operatorname{sq}}(A;B_{1};\cdots;B_{m})_{\rho^{x}}.\label{eq:pseudo conv S Sq}
\end{align}

\end{lemma}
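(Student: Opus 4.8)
The plan is to prove the two inequalities ``$\leq$'' and ``$\geq$'' separately, for both $E_{\operatorname{sq}}$ and $\widetilde{E}_{\operatorname{sq}}$. Since the register $X$ is classical and is grouped with the first party $AX$, the identity says that the measure is affine in the distribution $p_X$. I would carry out the argument for $E_{\operatorname{sq}}$ in detail; the one for $\widetilde{E}_{\operatorname{sq}}$ is structurally identical and differs only in which chain rule is invoked.

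For the achievability direction ``$\leq$'', I would take, for each $x$, a near-optimal extension $\omega_{AB_{1}\cdots B_{m}E}^{x}$ of $\rho^{x}$ and assemble a single extension of $\rho_{XAB_{1}\cdots B_{m}}$ by adjoining to the environment a coherent copy $X'$ of the classical register:
\begin{equation}
\omega_{XAB_{1}\cdots B_{m}EX'}\equiv\sum_{x}p(x)\,|x\rangle\langle x|_{X}\otimes\omega_{AB_{1}\cdots B_{m}E}^{x}\otimes|x\rangle\langle x|_{X'}.
\end{equation}
Because $X$ and $X'$ are perfectly correlated classical registers, conditioning on $X'$ inside the environment turns every conditional entropy appearing in $I(AX;B_{1};\cdots;B_{m}|EX')_{\omega}$ into the corresponding $p(x)$-weighted average, while the system copy $X$ contributes no entropy. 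Hence $I(AX;B_{1};\cdots;B_{m}|EX')_{\omega}=\sum_{x}p(x)\,I(A;B_{1};\cdots;B_{m}|E)_{\omega^{x}}$, and taking the infimum over each branch independently yields $E_{\operatorname{sq}}(AX;B_{1};\cdots;B_{m})_{\rho}\leq\sum_{x}p(x)E_{\operatorname{sq}}(A;B_{1};\cdots;B_{m})_{\rho^{x}}$.

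For ``$\geq$'', I would take an arbitrary extension $\omega_{XAB_{1}\cdots B_{m}E}$ of $\rho_{XAB_{1}\cdots B_{m}}$. The obstacle is that, although $X$ is classical after $E$ is traced out, it need not be classical jointly with $E$, so the conditional multipartite information does not split into branches on its own. I would first dephase $X$ with the measurement channel $\mathcal{M}_{X}$, which is a local operation on the party $AX$; by monotonicity \eqref{eq:cmmi_monotonicity} this only decreases $I(AX;B_{1};\cdots;B_{m}|E)$, and since $\rho$ is already $X$-classical the dephased state $\omega'=\sum_{x}p(x)|x\rangle\langle x|_{X}\otimes\omega_{AB_{1}\cdots B_{m}E}^{x}$ remains a valid extension with $\operatorname{Tr}_{E}\{\omega^{x}\}=\rho^{x}$. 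Applying the chain rule \eqref{eq:I_Chain} with $X$ in the role of $B$ gives $I(AX;B_{1};\cdots;B_{m}|E)_{\omega'}=I(A;B_{1};\cdots;B_{m}|XE)_{\omega'}+\sum_{i=1}^{m}I(X;B_{i}|E)_{\omega'}$. The last sum is non-negative by strong subadditivity, and since $X$ is now classical the first term equals $\sum_{x}p(x)\,I(A;B_{1};\cdots;B_{m}|E)_{\omega^{x}}\geq\sum_{x}p(x)\,2E_{\operatorname{sq}}(A;B_{1};\cdots;B_{m})_{\rho^{x}}$, because each $\omega^{x}$ extends $\rho^{x}$. Taking the infimum over $\omega$ then establishes the reverse inequality, and combining the two directions gives \eqref{eq:pseudo conv I sq}.

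The proof of \eqref{eq:pseudo conv S Sq} for $\widetilde{E}_{\operatorname{sq}}$ proceeds verbatim, with the copy-into-environment construction again producing an exact $p(x)$-average (now of $\widetilde{I}$, using its definition \eqref{eq:cmmi_S}) and with the chain rule \eqref{eq:S_Chain} replacing \eqref{eq:I_Chain}; there the peeled-off term is $I(X;B_{1}\cdots B_{m}|E)_{\omega'}\geq 0$, which again only helps the lower bound. I expect the main subtlety to be the ``$\geq$'' direction: the combination of dephasing-under-monotonicity to restore classicality of $X$ jointly with $E$, followed by the chain rule and non-negativity of the conditional mutual information, is what converts an unstructured extension into a branchwise bound.
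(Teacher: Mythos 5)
The paper does not prove this lemma at all---it is imported with a citation to \cite[Section IV-A]{YHHHOS09}---so there is no internal proof to compare against; your argument has to stand on its own, and it does. Both directions are sound. For ``$\leq$'', the decisive device is the classical copy $X'$ placed in the extension system: without it, conditioning on $E$ alone would not split the conditional entropies of a cq state into branchwise averages, whereas with $X'$ every conditional entropy appearing in $I(AX;B_{1};\cdots;B_{m}|EX')$ and $\widetilde{I}(AX;B_{1};\cdots;B_{m}|EX')$ collapses exactly to its $p(x)$-average (the $X$ register being pure within each block), and the assembled state is indeed a valid extension of $\rho_{XAB_{1}\cdots B_{m}}$. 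For ``$\geq$'', you correctly identify the real obstacle---an arbitrary extension need not keep $X$ classical relative to $E$---and your repair works: dephasing $X$ is a local channel on the party $AX$, so the monotonicity \eqref{eq:cmmi_monotonicity} applies, and it leaves the reduced state on $XAB_{1}\cdots B_{m}$ unchanged because $\rho$ is already $X$-classical, so the dephased state is still an extension whose blocks $\omega^{x}$ extend the $\rho^{x}$; the chain rules \eqref{eq:I_Chain} and \eqref{eq:S_Chain} then peel off non-negative conditional-mutual-information terms and leave the branchwise lower bounds. This flag-copy plus chain-rule strategy is the same one used in the cited reference, so your proposal correctly reconstructs the standard proof of a fact that the paper only quotes.
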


\begin{lemma} {\cite[Observation~1]{YHHHOS09}}\label{lem:perfect-ent-key} Let
$\Phi_{A_{1}\cdots A_{m}}$ be a maximally entangled GHZ\ state of
Schmidt rank $d$. Then
\begin{equation}
E_{\operatorname{sq}}(A_{1};\cdots;A_{m})_{\Phi}=\widetilde{E}_{\operatorname{sq}}(A_{1};\cdots;A_{m})_{\Phi}=\frac{m}{2}\log d.
\end{equation}
Let $\gamma_{A_{1}\cdots A_{m}}$ be a private state, such that each
key system has dimension $d$. Then
\begin{equation}
\min\left\{ E_{\operatorname{sq}}(A_{1};\cdots;A_{m})_{\gamma},\widetilde{E}_{\operatorname{sq}}(A_{1};\cdots;A_{m})_{\gamma}\right\} \geq\frac{m}{2}\log d.
\end{equation}

\end{lemma}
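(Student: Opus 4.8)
The plan is to handle the two assertions separately, reducing both to statements about the conditional multipartite informations $I$ and $\widetilde{I}$ of \eqref{eq:cmmi_I}--\eqref{eq:cmmi_S} evaluated on arbitrary extensions.

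\emph{GHZ state.} I would first use that $\Phi_{A_{1}\cdots A_{m}}$ is pure, so every extension $\omega_{A_{1}\cdots A_{m}E}$ factorizes as $\Phi_{A_{1}\cdots A_{m}}\otimes\sigma_{E}$ for some $\sigma_{E}$. Consequently the conditioning system drops out and both $I(A_{1};\cdots;A_{m}|E)_{\omega}$ and $\widetilde{I}(A_{1};\cdots;A_{m}|E)_{\omega}$ equal their unconditioned values on $\Phi$, \emph{independently of the extension}, so the infima in Definition~\ref{def:Mult_Sq ent} are attained trivially. It then remains to compute these from the reduced entropies of the GHZ state: $H(A_{1}\cdots A_{m})_{\Phi}=0$, while $H(A_{i})_{\Phi}=\log d$ and more generally $H(A_{S})_{\Phi}=\log d$ for every nonempty proper subset $S\subsetneq\{1,\dots,m\}$ (the marginal on such an $S$ is the uniform mixture $\tfrac{1}{d}\sum_{i}(|i\rangle\langle i|)^{\otimes|S|}$). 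Substituting into \eqref{eq:cmmi_I} and \eqref{eq:cmmi_S} gives $I=\widetilde{I}=m\log d$, and the factor $\tfrac12$ yields $E_{\operatorname{sq}}=\widetilde{E}_{\operatorname{sq}}=\frac{m}{2}\log d$.

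\emph{Private state: setup.} Here it suffices to show $I(A_{1};\cdots;A_{m}|E)_{\omega}\geq m\log d$ and $\widetilde{I}(A_{1};\cdots;A_{m}|E)_{\omega}\geq m\log d$ for \emph{every} extension $\omega$ of $\gamma$, each party $A_{i}$ carrying its key and shield. I would extract two ingredients from the structure \eqref{eq:mult-secret-key}. First, across any single-party cut $A_{i}\,|\,A_{[m]\setminus\{i\}}$ the state $\gamma$ is a \emph{bipartite} private state of $\log d$ bits, so the bipartite lower bound recalled in Section~\ref{sub:Bi and Mul Sq Ent} gives $I(A_{i};A_{[m]\setminus\{i\}}|E)_{\omega}\geq 2\log d$ for every extension. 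Second, measuring the key systems of any subset of parties in the computational basis produces perfectly correlated, uniformly random outcomes that are in tensor product with $E$; this follows from \eqref{eq:mult-secret-key}, since the keys are perfectly correlated and the privacy condition forces the diagonal part of the twist to leave the same marginal on the purifying system, a property preserved by the squashing channel that produces $E$.

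\emph{The bound on $E_{\operatorname{sq}}$.} I would use the identity, a direct consequence of \eqref{eq:cmmi_I},
\[ I(A_{1};\cdots;A_{m}|E)_{\omega}=I(A_{1};\cdots;A_{m-1}|E)_{\omega}+I(A_{m};A_{1}\cdots A_{m-1}|E)_{\omega}. \]
The last term is a single-party cut, hence $\geq 2\log d$. For the first term I would coherently copy the keys of parties $1,\dots,m-1$ and discard everything else (local channels, so \eqref{eq:cmmi_monotonicity} applies), reducing it to the total correlation of $m-1$ perfectly correlated uniform variables that are product with $E$, namely $(m-2)\log d$. Adding gives $I(A_{1};\cdots;A_{m}|E)_{\omega}\geq (m-2)\log d+2\log d=m\log d$, so $E_{\operatorname{sq}}(\gamma)\geq\frac{m}{2}\log d$.

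\emph{The bound on $\widetilde{E}_{\operatorname{sq}}$, and the main obstacle.} The analogue is the recursion, obtained by combining \eqref{eq:cmmi_S} with the chain rule \eqref{eq:S_Chain},
\[ \widetilde{I}(A_{1};\cdots;A_{m}|E)_{\omega}=\widetilde{I}(A_{m}A_{1};A_{2};\cdots;A_{m-1}|E)_{\omega}+I(A_{1};A_{m}|A_{2}\cdots A_{m-1}E)_{\omega}. \]
Merging two parties leaves an $(m-1)$-partite private state of $\log d$ bits, so the first term is $\geq (m-1)\log d$ by induction on the number of parties (base case $m=2$ being the bipartite bound). The hard part is the residual term $I(A_{1};A_{m}|A_{2}\cdots A_{m-1}E)_{\omega}$: I expect it to be $\geq\log d$ — it equals $\log d$ on the GHZ state, and unconditioned one always has $I(A_{1};A_{m})_{\gamma}\geq\log d$ because the two keys are perfectly correlated — but this is a genuinely \emph{quantum} contribution that does \emph{not} follow from the single-cut bipartite bound, because the conditioning system now contains the other, correlated, key parties, and coherently measuring $A_{1},A_{m}$ destroys precisely the correlation one needs. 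Note that the dual identity \eqref{eq:cmmi_dual} only yields the weaker combined bound $I+\widetilde{I}\geq 2m\log d$, which does not separate into the required individual bound on $\widetilde{I}$. Establishing the residual estimate uniformly over all extensions — presumably via a strengthened induction hypothesis that carries the extra correlated key systems inside the conditioning, calibrated for tightness against the GHZ computation of the first part — is the crux of the private-state half of the lemma.
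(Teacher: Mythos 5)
Your GHZ computation is correct and is the standard one: purity forces every extension to be of product form, so both conditional multipartite informations collapse to their unconditioned values, which the GHZ marginal entropies evaluate to $m\log d$. Your $E_{\operatorname{sq}}$ bound for private states is also correct, and it is a genuinely different route from the one this paper relies on: you split $I(A_{1};\cdots;A_{m}|E)=I(A_{1};\cdots;A_{m-1}|E)+I(A_{m};A_{1}\cdots A_{m-1}|E)$, bound the second term by $2\log d$ via the bipartite result \cite[Proposition 4.19]{Chr06} (valid, since a multipartite private state is a bipartite private state of $\log d$ bits across any single-party cut), and bound the first term by $(m-2)\log d$ using key measurements, the privacy condition \eqref{eq:mult-secret-key}, and monotonicity \eqref{eq:cmmi_monotonicity}. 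Both steps hold for every extension, so that half is sound. (For context: the paper does not prove this lemma at all --- it imports it from \cite{YHHHOS09} --- but it proves the generalization, Lemma~\ref{lem:max ent state sq ent}, by the untwisting technique described below.)

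The $\widetilde{E}_{\operatorname{sq}}$ half has a genuine gap, which you yourself flag: the residual term $I(A_{1};A_{m}|A_{2}\cdots A_{m-1}E)_{\omega}\geq\log d$ is asserted but not proved, and without it your recursion only gives $\widetilde{I}\geq(m-1)\log d$. You are right that no black-box bipartite bound can close it: the reduced state on $A_{1}A_{m}$ alone is not a private state (tracing the other keys dephases the GHZ core and compromises the key), so the conditioning on the correlated parties is essential. The missing idea is to exploit the twisting structure directly. Because the GHZ core $\Phi$ in \eqref{eq:mult_private state} is pure, every extension of $\gamma$ has the form $\omega=U\left(\Phi\otimes\rho_{S_{1}\cdots S_{m}E}\right)U^{\dagger}$, where $S_{j}$ denotes party $j$'s shield and $U$ is the twisting unitary; set $\gamma_{SE}^{i}\equiv U_{S}^{i,\ldots,i}\rho_{SE}\left(U_{S}^{i,\ldots,i}\right)^{\dagger}$. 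Writing $K_{T},S_{T}$ for the keys and shields of the parties in $T$, one then computes, for every nonempty proper subset $T\subsetneq\left\{ 1,\ldots,m\right\} $ (tracing any key dephases the GHZ core, leaving a uniform classical flag),
\begin{equation}
H\left(K_{T}S_{T}|E\right)_{\omega}=\log d+\frac{1}{d}\sum_{i}H\left(S_{T}|E\right)_{\gamma^{i}},
\end{equation}
whereas for $T=\left\{ 1,\ldots,m\right\} $ the key contribution is absent: $H\left(K_{T}S_{T}|E\right)_{\omega}=\frac{1}{d}\sum_{i}H\left(S_{T}|E\right)_{\gamma^{i}}$. Substituting these into \eqref{eq:cmmi_S} gives
\begin{equation}
\widetilde{I}(A_{1};\cdots;A_{m}|E)_{\omega}=m\log d+\frac{1}{d}\sum_{i}\widetilde{I}(S_{1};\cdots;S_{m}|E)_{\gamma^{i}}\geq m\log d,
\end{equation}
by non-negativity of $\widetilde{I}$. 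The same substitution into \eqref{eq:cmmi_I} recovers your $E_{\operatorname{sq}}$ bound in one stroke, and it also yields your residual inequality as a byproduct (it equals $\log d$ plus an average conditional mutual information of untwisted shield states). In short, your decomposition stalls exactly where a structural property of private states is needed rather than black-box consequences of the definition, and the untwisting argument of \cite[Proposition 4.19]{Chr06}, as deployed in the proof of Lemma~\ref{lem:max ent state sq ent}, supplies it --- at which point the recursion and the induction become unnecessary.
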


\subsection{Shorthand for multipartite information measures in terms of partitions}

\label{sec:partition-notation}A partition $\mathcal{G}$\ of a set
$\mathcal{K}$ is a set of non-empty subsets of $\mathcal{K}$ such
that 
\begin{equation}
\bigcup\limits _{\mathcal{X}\in\mathcal{G}}\mathcal{X}=\mathcal{K}
\end{equation}
and for all $\mathcal{X}_{1},\mathcal{X}_{2}\in\mathcal{G}$, 
\begin{equation}
\mathcal{X}_{1}\cap\mathcal{X}_{2}=\emptyset.
\end{equation}
For example, if $\mathcal{K}=\left\{ A,B,C\right\} $, then $\mathcal{G=}\left\{ \left\{ A\right\} ,\left\{ B,C\right\} \right\} $\ is
a partition of $\mathcal{K}$.

The power set $\mathcal{P}(\mathcal{K})$ of a set $\mathcal{K}$
is the set of all subsets of $\mathcal{K}$. Let $\mathcal{P}_{\geq1}(\mathcal{K})$
and $\mathcal{P}_{\geq2}(\mathcal{K})$ denote the set
of all non-empty subsets of $\mathcal{K}$ and the set of all non-empty
and non-singleton subsets of $\mathcal{K}$, respectively. For example,
if $\mathcal{K}=\left\{ A,B,C\right\} $, then 
\begin{align}
\mathcal{P}(\mathcal{K}) & =\left\{ \emptyset,\left\{ A\right\} ,\left\{ B\right\} ,\left\{ C\right\} ,\left\{ A,C\right\} ,\left\{ A,B\right\} ,\left\{ B,C\right\} ,\left\{ A,B,C\right\} \right\} ,\\
\mathcal{P}_{\geq1}(\mathcal{K}) & =\left\{ \left\{ A\right\} ,\left\{ B\right\} ,\left\{ C\right\} ,\left\{ A,C\right\} ,\left\{ A,B\right\} ,\left\{ B,C\right\} ,\left\{ A,B,C\right\} \right\} ,\\
\mathcal{P}_{\geq2}(\mathcal{K}) & =\left\{ \left\{ A,C\right\} ,\left\{ A,B\right\} ,\left\{ B,C\right\} ,\left\{ A,B,C\right\} \right\} .\label{eq:Pgeq2}
\end{align}

Let $\mathcal{S}$ be a set, and let $\omega_{\mathcal{S}}$ be an
$\left\vert \mathcal{S}\right\vert $-partite state shared among parties
specified by the elements of $\mathcal{S}$. Let $\mathcal{G}$ be
a partition of $\mathcal{S}$. Then we use the shorthand 
\begin{equation}
E_{\operatorname{sq}}(\mathcal{G})_{\omega}
\end{equation}
to denote a multipartite squashed entanglement with grouping of parties
according to the partition$~\mathcal{G}$. For example, if $\omega\equiv\omega_{ABC}$,
$\mathcal{S}=\left\{ A,B,C\right\} $ and $\mathcal{G}_{1}=\left\{ \left\{ A\right\} ,\left\{ B,C\right\} \right\} $,
then 
\begin{equation}
E_{\operatorname{sq}}(\mathcal{G}_{1})_{\omega}=E_{\operatorname{sq}}(A;BC)_{\omega}.
\end{equation}
Similarly, if $\mathcal{G}_{2}=\left\{ \left\{ A\right\} ,\left\{ B\right\} ,\left\{ C\right\} \right\} $,
then 
\begin{equation}
E_{\operatorname{sq}}(\mathcal{G}_{2})_{\omega}=E_{\operatorname{sq}}(A;B;C)_{\omega}.
\end{equation}
We also employ similar shorthands for $I$, $\widetilde{I}$, and
$\widetilde{E}_{\operatorname{sq}}$.

\section{Auxiliary lemmas for the multipartite squashed entanglements}

\label{sec:auxiliary-lemmas}

\subsection{Subadditivity}

The following lemma is a multipartite generalization of \cite[Theorem 3]{TGW14IEEE},
which was one of the main tools used to prove that the squashed entanglement
of a quantum channel is an upper bound on its quantum capacity or
private capacity when assisted by forward and backward classical communication.
Naturally, the following lemma will be one of the important tools
used to prove the main result in this paper.

\begin{lemma}[Subadditivity]\label{lem:TGW Mult} For a $\left(2m+3\right)$-partite
pure quantum state $\psi_{SP_{1}\cdots P_{m}Q_{1}\cdots Q_{m}E_{1}E_{2}},$
the following subadditivity inequalities hold 
\begin{align}
E_{\operatorname{sq}}(S;P_{1}Q_{1};\cdots;P_{m}Q_{m})_{\psi} & \leq E_{\operatorname{sq}}(SQ_{\left[m\right]}E_{2};P_{1};\cdots;P_{m})_{\psi}+E_{\operatorname{sq}}(SP_{\left[m\right]}E_{1};Q_{1};\cdots;Q_{m})_{\psi},\label{eq:mult_TGW_1}\\
\widetilde{E}_{\operatorname{sq}}(S;P_{1}Q_{1};\cdots;P_{m}Q_{m})_{\psi} & \leq\widetilde{E}_{\operatorname{sq}}(SQ_{\left[m\right]}E_{2};P_{1};\cdots;P_{m})_{\psi}+\widetilde{E}_{\operatorname{sq}}(SP_{\left[m\right]}E_{1};Q_{1};\cdots;Q_{m})_{\psi}.\label{eq:mult_TGW_2}
\end{align}

\end{lemma}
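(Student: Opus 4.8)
The plan is to reduce both inequalities to a single information inequality evaluated on a cleverly squashed state, following the template of the bipartite result \cite[Theorem 3]{TGW14IEEE} but replacing the quantum conditional mutual information with the conditional multipartite informations $I$ and $\widetilde{I}$. Since $\psi_{SP_{[m]}Q_{[m]}E_1E_2}$ is pure, $E_1$ purifies $\rho_{SQ_{[m]}E_2P_{[m]}}$, $E_2$ purifies $\rho_{SP_{[m]}E_1Q_{[m]}}$, and $E_1E_2$ jointly purify $\rho_{SP_{[m]}Q_{[m]}}$. Fixing $\varepsilon>0$, I would use the squashing-channel characterization to pick a near-optimal channel $\mathcal{S}^{(1)}_{E_1\to E_1'}$ for the first right-hand term, with output $\omega^{(1)}$ such that $\tfrac12 I(SQ_{[m]}E_2;P_1;\cdots;P_m|E_1')_{\omega^{(1)}}$ lies within $\varepsilon$ of $E_{\operatorname{sq}}(SQ_{[m]}E_2;P_1;\cdots;P_m)_\psi$, and similarly $\mathcal{S}^{(2)}_{E_2\to E_2'}$ with output $\omega^{(2)}$ near-optimal for the second term. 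Because $\mathcal{S}^{(1)}_{E_1\to E_1'}\otimes\mathcal{S}^{(2)}_{E_2\to E_2'}$ is itself a legitimate squashing channel acting on the purifying system $E_1E_2$ of the left-hand grouping, its output $\theta_{SP_{[m]}Q_{[m]}E_1'E_2'}$ gives $2E_{\operatorname{sq}}(S;P_1Q_1;\cdots;P_mQ_m)_\psi\le I(S;P_1Q_1;\cdots;P_mQ_m|E_1'E_2')_\theta$. It then suffices to prove $I(S;P_1Q_1;\cdots;P_mQ_m|E_1'E_2')_\theta\le I(SQ_{[m]}E_2;P_1;\cdots;P_m|E_1')_{\omega^{(1)}}+I(SP_{[m]}E_1;Q_1;\cdots;Q_m|E_2')_{\omega^{(2)}}$ and let $\varepsilon\to0$.

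For the $I$-based inequality \eqref{eq:mult_TGW_1} I would first telescope the conditional total correlation into bipartite pieces, $I(S;P_1Q_1;\cdots;P_mQ_m|E_1'E_2')=\sum_{i=1}^m I(P_iQ_i;SP_{<i}Q_{<i}|E_1'E_2')$, where $P_{<i}Q_{<i}$ abbreviates $P_1Q_1\cdots P_{i-1}Q_{i-1}$ and the $i=1$ summand is $I(P_1Q_1;S|E_1'E_2')$. Splitting each summand by the chain rule as $I(P_i;SP_{<i}Q_{<i}|E_1'E_2')+I(Q_i;SP_{<i}Q_{<i}P_i|E_1'E_2')$, I would bound the $P_i$-family by the first right-hand quantity and the $Q_i$-family by the second. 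For a $P_i$ term, I apply $\mathcal{S}^{(2)}_{E_2\to E_2'}$ to the grouped system $E_2$ (monotonicity \eqref{eq:cmi_monotonicity} carries $\omega^{(1)}$ to $\theta$), then move $E_2'$ out of the grouped system into the conditioning at the cost of a nonnegative QCMI term, and finally discard $Q_{\ge i}$; each move only decreases the quantity, giving $I(P_i;SP_{<i}Q_{<i}|E_1'E_2')_\theta\le I(P_i;SQ_{[m]}E_2P_{<i}|E_1')_{\omega^{(1)}}$. Summing over $i$ and recognizing the right side as the telescoped total correlation $I(SQ_{[m]}E_2;P_1;\cdots;P_m|E_1')_{\omega^{(1)}}$ closes this half. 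The $Q_i$ terms are handled in the same manner with the roles of $E_1\leftrightarrow E_2$ and $P\leftrightarrow Q$ exchanged; the extra $P_i$ appearing in the grouped system there sits harmlessly inside $P_{[m]}$ and is simply retained when discarding $P_{>i}$.

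For \eqref{eq:mult_TGW_2} I would run the identical product-squashing-channel reduction and then perform the analogous decomposition using the chain rule \eqref{eq:S_Chain} for $\widetilde{I}$ in place of the total-correlation telescoping, again invoking the monotonicity \eqref{eq:cmmi_monotonicity} of $\widetilde{I}$ together with nonnegativity of the auxiliary $I$-type terms it generates in order to transfer the grouped environments $E_1',E_2'$ into the conditioning and to discard spectator systems. I expect the main obstacle to be precisely this bookkeeping for $\widetilde{I}$: unlike the clean symmetric telescoping available for the total correlation, the rule \eqref{eq:S_Chain} spawns an extra $I$ term each time a party is absorbed into the first slot, so one must fix the order of absorption carefully and verify that every auxiliary term carries the sign needed for the chain of monotonicity inequalities to run in the correct direction. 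Once each step is confirmed to decrease the relevant quantity, resumming the pieces and sending $\varepsilon\to0$ yields both \eqref{eq:mult_TGW_1} and \eqref{eq:mult_TGW_2}.
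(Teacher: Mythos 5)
Your proof of \eqref{eq:mult_TGW_1} is correct, and it takes a genuinely different route from the paper's. The paper expands $I(S;P_1Q_1;\cdots;P_mQ_m|E_1'E_2')$ into conditional entropies, passes to a purification of the squashed state so that duality of conditional entropy can be applied, splits via strong subadditivity, dualizes back, and only then recognizes the two right-hand quantities before undoing the squashing by data processing. You never purify the squashed state: you telescope the conditional total correlation into bipartite QCMIs, $I(S;P_1Q_1;\cdots;P_mQ_m|E_1'E_2')=\sum_{i=1}^m I(P_iQ_i;SP_{<i}Q_{<i}|E_1'E_2')$, and transport each piece to the correct right-hand side using only manifestly monotone moves: data processing on a \emph{party} system (squashing $E_2$ inside the group $SQ_{[m]}E_2$), the chain rule to shift $E_2'$ into the conditioning at the cost of a nonnegative QCMI, and partial trace of spectator systems from a party slot; re-telescoping $\sum_i I(P_i;SQ_{[m]}E_2P_{<i}|E_1')_{\omega^{(1)}}=I(SQ_{[m]}E_2;P_1;\cdots;P_m|E_1')_{\omega^{(1)}}$ closes the argument, and the near-optimal-channel bookkeeping with $\varepsilon\to0$ is equivalent to the paper's ``arbitrary channels, then take infima.'' (One quibble: your split $I(P_iQ_i;C|E)=I(P_i;C|E)+I(Q_i;CP_i|E)$ is not an identity --- the exact chain rule has $I(Q_i;C|P_iE)$, and your right side exceeds it by $I(P_i;Q_i|E)\geq0$ --- but since you only need ``$\leq$,'' the step stands.)

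The genuine gap is \eqref{eq:mult_TGW_2}: you have not proven it, and the obstacle you flag is real, not mere bookkeeping. Used as written, the chain rule \eqref{eq:S_Chain} absorbs the composite party into the first slot, and iterating it yields $\widetilde{I}(SQ_{[m]}E_2;P_1;\cdots;P_m|E_1')=\widetilde{I}(P_1;\cdots;P_m|SQ_{[m]}E_2E_1')+I(SQ_{[m]}E_2;P_{[m]}|E_1')$, so that $E_2$ and $Q_{[m]}$ land in the \emph{conditioning} register of the residual term. Neither $I$ nor $\widetilde{I}$ is monotone under channels or partial trace acting on the conditioning system, so from that position none of your moves (squash $E_2\to E_2'$, discard $Q$'s) can be executed, and the chain of inequalities stalls. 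The repair stays inside your toolkit but hinges on the peeling order: iterating \eqref{eq:S_Chain} so as to peel off the parties $P_1,\ldots,P_m$ one at a time while keeping the composite party \emph{last} gives the ordered telescoping $\widetilde{I}(A_1;\cdots;A_m|E)=\sum_{i=1}^{m-1}I(A_i;A_{>i}|A_{<i}E)$ (checkable directly from \eqref{eq:cmmi_S}), in which $SQ_{[m]}E_2$, resp.\ $SP_{[m]}E_1$, always occupies a party slot. Concretely, write $\widetilde{I}(P_1Q_1;\cdots;P_mQ_m;S|E_1'E_2')_\theta=\sum_i I(P_iQ_i;SP_{>i}Q_{>i}|P_{<i}Q_{<i}E_1'E_2')_\theta$, split each summand as in your first half, and bound the two halves by the terms of $\sum_i I(P_i;SQ_{[m]}E_2P_{>i}|P_{<i}E_1')_{\omega^{(1)}}$ and $\sum_i I(Q_i;SP_{[m]}E_1Q_{>i}|Q_{<i}E_2')_{\omega^{(2)}}$ using exactly your three moves; these sums re-telescope to the right-hand $\widetilde{I}$ quantities because $\widetilde{I}$ is permutation symmetric. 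The paper never meets this ordering issue at all: its duality argument treats $I$ and $\widetilde{I}$ in parallel, which is why its proof of \eqref{eq:mult_TGW_2} is a near-verbatim copy of its proof of \eqref{eq:mult_TGW_1}, whereas your method needs a genuinely different (and correctly ordered) decomposition for the second inequality.
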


\begin{proof} Let
\begin{align}
\tau_{SP_{1}\cdots P_{m}Q_{1}\cdots Q_{m}E_{1}^{\prime}E_{2}} & =\mathcal{{S}}_{E_{1}\rightarrow E_{1}^{\prime}}(\psi_{SP_{1}\cdots P_{m}Q_{1}\cdots Q_{m}E_{1}E_{2}}),\\
\sigma_{SP_{1}\cdots P_{m}Q_{1}\cdots Q_{m}E_{1}E_{2}^{\prime}} & =\mathcal{{S}}_{E_{2}\rightarrow E_{2}^{\prime}}(\psi_{SP_{1}\cdots P_{m}Q_{1}\cdots Q_{m}E_{1}E_{2}}),\\
\omega_{SP_{1}\cdots P_{m}Q_{1}\cdots Q_{m}E_{1}^{\prime}E_{2}^{\prime}} & =\left(\mathcal{{S}}_{E_{1}\rightarrow E_{1}^{\prime}}\otimes\mathcal{{S}}_{E_{2}\rightarrow E_{2}^{\prime}}\right)(\psi_{SP_{1}\cdots P_{m}Q_{1}\cdots Q_{m}E_{1}E_{2}}),
\end{align}
where each $\mathcal{{S}}_{E_{i}\rightarrow E_{i}^{\prime}}$ is an
arbitrary local squashing channel. Let
\begin{equation}
\left\vert \phi^{\omega}\right\rangle _{SP_{1}\cdots P_{m}Q_{1}\cdots Q_{m}E_{1}^{\prime}E_{2}^{\prime}R}
\end{equation}
be a purification of $\omega$ with purifying system $R$.

We first prove $\left(\ref{eq:mult_TGW_1}\right)$. Consider the following
chain of inequalities:
\begin{align}
 & 2E_{\operatorname{sq}}(S;P_{1}Q_{1};\cdots;P_{m}Q_{m})_{\psi}\nonumber \\
 & \leq I(S;P_{1}Q_{1};\cdots;P_{m}Q_{m}\left\vert E_{1}^{\prime}E_{2}^{\prime}\right.)_{\omega}\\
 & =\sum_{i=1}^{m}H(P_{i}Q_{i}\left\vert E_{1}^{\prime}E_{2}^{\prime}\right.)_{\omega}-H\left(P_{1}\cdots P_{m}Q_{1}\cdots Q_{m}\left\vert SE_{1}^{\prime}E_{2}^{\prime}\right.\right)_{\omega}\\
 & =\sum_{i=1}^{m}H(P_{i}Q_{i}\left\vert E_{1}^{\prime}E_{2}^{\prime}\right.)_{\phi}+H\left(P_{1}\cdots P_{m}Q_{1}\cdots Q_{m}\left\vert R\right.\right)_{\phi}\\
 & \leq\sum_{i=1}^{m}\left[H\left(P_{i}\left\vert E_{1}^{\prime}\right.\right)_{\phi}+H\left(Q_{i}\left\vert E_{2}^{\prime}\right.\right)_{\phi}\right]+H\left(P_{1}\cdots P_{m}\left\vert R\right.\right)_{\phi}+H\left(Q_{1}\cdots Q_{m}\left\vert R\right.\right)_{\phi}
\end{align}
The first inequality follows from Definition~\ref{def:Mult_Sq ent}.
The first equality follows from the definition of the quantum conditional
multipartite information. The second equality follows from the duality
of conditional entropy, namely, for a tripartite pure state $\phi_{KLM}$,
$H\left(K\left\vert L\right.\right)_{\phi}=-H\left(K\left\vert M\right.\right)_{\phi}$.
The second inequality is a consequence of the strong subadditivity
of quantum entropy $I\left(K;L\left\vert M\right.\right)\geq0$. Continuing
from above,
\begin{align}
 & =\sum_{i=1}^{m}H\left(P_{i}\left\vert E_{1}^{\prime}\right.\right)_{\omega}-H\left(P_{1}\cdots P_{m}\left\vert SQ_{1}\cdots Q_{m}E_{1}^{\prime}E_{2}^{\prime}\right.\right)_{\omega}\nonumber \\
 & \ \ \ \ \ \ \ \ \ \ +\sum_{i=1}^{m}H\left(Q_{i}\left\vert E_{2}^{\prime}\right.\right)_{\omega}-H\left(Q_{1}\cdots Q_{m}\left\vert SP_{1}\cdots P_{m}E_{1}^{\prime}E_{2}^{\prime}\right.\right)_{\omega}\\
 & =I\left(SQ_{1}\cdots Q_{m}E_{2}^{\prime};P_{1};\cdots;P_{m}\left\vert E_{1}^{\prime}\right.\right)_{\omega}+I\left(SP_{1}\cdots P_{m}E_{1}^{\prime};Q_{1};\cdots;Q_{m}\left\vert E_{2}^{\prime}\right.\right)_{\omega}\\
 & \leq I\left(SQ_{1}\cdots Q_{m}E_{2};P_{1};\cdots;P_{m}\left\vert E_{1}^{\prime}\right.\right)_{\tau}+I\left(SP_{1}\cdots P_{m}E_{1};Q_{1};\cdots;Q_{m}\left\vert E_{2}^{\prime}\right.\right)_{\sigma}
\end{align}
The first equality follows from the duality of conditional entropy
and the second from rewriting the linear sum of conditional entropies
in terms of a multipartite conditional mutual information. The final
inequality follows from the data processing inequality for the quantum
conditional multipartite information. Since the above calculations
are independent of the choice of squashing channels $\mathcal{{S}}_{E_{i}\rightarrow E_{i}^{\prime}}$,
and since $E_{1}$ purifies the state on $SP_{1}Q_{1}\cdots P_{m}Q_{m}E_{2}$
and $E_{2}$ purifies the state on $SP_{1}Q_{1}\cdots P_{m}Q_{m}E_{1}$,
the inequality in $\left(\ref{eq:mult_TGW_1}\right)$ follows.

We now prove the inequality in $\left(\ref{eq:mult_TGW_2}\right)$.
The proof idea is similar to the above, but we give it below for completeness.
Consider the following chain of inequalities:
\begin{align}
 & 2\widetilde{E}_{\operatorname{sq}}(S;P_{1}Q_{1};\cdots;P_{m}Q_{m})_{\psi}\nonumber \\
 & \leq\widetilde{I}(S;P_{1}Q_{1};\cdots;P_{m}Q_{m}\left\vert E_{1}^{\prime}E_{2}^{\prime}\right.)_{\omega}\\
 & =H\left(SP_{1}Q_{1}\cdots P_{m}Q_{m}|E_{1}^{\prime}E_{2}^{\prime}\right)_{\omega}-H\left(S|P_{1}Q_{1}\cdots P_{m}Q_{m}E_{1}^{\prime}E_{2}^{\prime}\right)_{\omega}\nonumber \\
 & \ \ \ \ \ \ \ \ \ \ -\sum_{i=1}^{m}H\left(P_{i}Q_{i}|SP_{\left[m\right]\backslash\left\{ i\right\} }Q_{\left[m\right]\backslash\left\{ i\right\} }E_{1}^{\prime}E_{2}^{\prime}\right)_{\omega}\\
 & =H\left(P_{1}Q_{1}\cdots P_{m}Q_{m}|E_{1}^{\prime}E_{2}^{\prime}\right)_{\omega}-\sum_{i=1}^{m}H\left(P_{i}Q_{i}|SP_{\left[m\right]\backslash\left\{ i\right\} }Q_{\left[m\right]\backslash\left\{ i\right\} }E_{1}^{\prime}E_{2}^{\prime}\right)_{\omega}\\
 & =H\left(P_{1}Q_{1}\cdots P_{m}Q_{m}|E_{1}^{\prime}E_{2}^{\prime}\right)_{\omega}+\sum_{i=1}^{m}H\left(P_{i}Q_{i}|R\right)_{\phi}
\end{align}
The first inequality follows from the definition of $\widetilde{E}_{\operatorname{sq}}$.
The first equality follows from expanding $\widetilde{I}$ with (\ref{eq:cmmi_S_2}).
The second equality follows from $H\left(AB|C\right)-H\left(A|BC\right)=H\left(B|C\right)$
with $A\equiv S$, $B\equiv P_{1}Q_{1}\cdots P_{m}Q_{m}$, and $C=E_{1}'E_{2}'$.
The third equality follows from duality of conditional entropy. Continuing
from above,
\begin{align}
 & \leq H\left(P_{1}\cdots P_{m}|E_{1}^{\prime}\right)_{\omega}+H\left(Q_{1}\cdots Q_{m}|E_{2}^{\prime}\right)_{\omega}+\sum_{i=1}^{m}H\left(P_{i}|R\right)_{\phi}+H\left(Q_{i}|R\right)_{\phi}\\
 & =H\left(P_{1}\cdots P_{m}|E_{1}^{\prime}\right)_{\omega}+H\left(Q_{1}\cdots Q_{m}|E_{2}^{\prime}\right)_{\omega}-\sum_{i=1}^{m}H\left(P_{i}|E_{1}^{\prime}E_{2}^{\prime}SP_{\left[m\right]\backslash\left\{ i\right\} }Q_{\left[m\right]}\right)_{\omega}\nonumber \\
 & \ \ \ \ \ \ \ \ \ \ -\sum_{i=1}^{m}H\left(Q_{i}|E_{1}^{\prime}E_{2}^{\prime}SP_{\left[m\right]}Q_{\left[m\right]\backslash\left\{ i\right\} }\right)_{\omega}\\
 & =\widetilde{I}\left(SQ_{1}\cdots Q_{m}E_{2}^{\prime};P_{1};\cdots;P_{m}\left\vert E_{1}^{\prime}\right.\right)_{\omega}+\widetilde{I}\left(SP_{1}\cdots P_{m}E_{1}^{\prime};Q_{1};\cdots;Q_{m}\left\vert E_{2}^{\prime}\right.\right)_{\omega}\\
 & \leq\widetilde{I}\left(SQ_{1}\cdots Q_{m}E_{2};P_{1};\cdots;P_{m}\left\vert E_{1}^{\prime}\right.\right)_{\tau}+\widetilde{I}\left(SP_{1}\cdots P_{m}E_{1};Q_{1};\cdots;Q_{m}\left\vert E_{2}^{\prime}\right.\right)_{\sigma}.
\end{align}
The first inequality follows from several applications of the strong
subadditivity of quantum entropy. The first equality is again duality
of conditional entropy. The final equality is from the definition
of $\widetilde{I}$ in\ (\ref{eq:cmmi_S_2}) and the last inequality
follows from the monotonicity of $\widetilde{I}$ under local quantum
operations. Since the above calculations are independent of the choice
of squashing channels $\mathcal{{S}}_{E_{i}\rightarrow E_{i}^{\prime}}$,
and since $E_{1}$ purifies the state on $SP_{1}Q_{1}\cdots P_{m}Q_{m}E_{2}$
and $E_{2}$ purifies the state on $SP_{1}Q_{1}\cdots P_{m}Q_{m}E_{1}$,
the inequality in $\left(\ref{eq:mult_TGW_2}\right)$ follows. \end{proof}

\subsection{Monotonicity under groupings}

\begin{lemma} \label{lem:Grouping systems}The squashed entanglement
measures of Definition \ref{def:Mult_Sq ent} are non-increasing under
grouping of subsystems, i.e., for a state $\rho_{A_{1}\cdots A_{m}}$,
\begin{align}
E_{\operatorname{sq}}(A_{1};\cdots;A_{m})_{\rho} & \geq E_{\operatorname{sq}}\left(A_{1}A_{2};A_{3;}\cdots;A_{m}\right)_{\rho}\label{eq:I sq grp}\\
\widetilde{E}_{\operatorname{sq}}(A_{1};\cdots;A_{m})_{\rho} & \geq\widetilde{E}_{\operatorname{sq}}\left(A_{1}A_{2};A_{3;}\cdots;A_{m}\right)_{\rho}\label{eq:S sq grp}
\end{align}

\end{lemma}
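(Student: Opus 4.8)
The plan is to reduce both inequalities to \emph{pointwise} statements that hold for every fixed extension of $\rho$, and then pass to the infimum. The load-bearing observation is that the infimum defining $E_{\operatorname{sq}}(A_{1};\cdots;A_{m})_{\rho}$ and the infimum defining $E_{\operatorname{sq}}(A_{1}A_{2};A_{3};\cdots;A_{m})_{\rho}$ run over \emph{exactly the same} set of states, namely all extensions $\omega_{A_{1}\cdots A_{m}E}$ of $\rho_{A_{1}\cdots A_{m}}$: grouping $A_{1}$ and $A_{2}$ changes only how the systems are partitioned inside the information quantity, not the underlying Hilbert space nor the constraint $\operatorname{Tr}_{E}\{\omega\}=\rho_{A_{1}\cdots A_{m}}$. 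Hence it suffices to show, for each such $\omega$, that $I(A_{1};\cdots;A_{m}|E)_{\omega}\geq I(A_{1}A_{2};A_{3};\cdots;A_{m}|E)_{\omega}$, together with the analogous inequality for $\widetilde{I}$; taking the infimum over the common extension set then yields \eqref{eq:I sq grp} and \eqref{eq:S sq grp} directly, with no appeal to data processing under channels.

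For the first inequality I would expand both sides using the definition \eqref{eq:cmmi_I}. Every term $H(A_{i}|E)_{\omega}$ with $i\geq 3$ appears identically on both sides, as does the single term $-H(A_{1}\cdots A_{m}|E)_{\omega}$, so the difference collapses to
\[
I(A_{1};\cdots;A_{m}|E)_{\omega}-I(A_{1}A_{2};A_{3};\cdots;A_{m}|E)_{\omega}=H(A_{1}|E)_{\omega}+H(A_{2}|E)_{\omega}-H(A_{1}A_{2}|E)_{\omega}=I(A_{1};A_{2}|E)_{\omega},
\]
which is non-negative by strong subadditivity. This establishes the pointwise inequality for $I$.

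For the second inequality I would use the representation \eqref{eq:cmmi_S_2} rather than \eqref{eq:cmmi_S}, since there the conditioning systems on the right automatically absorb the grouped parties. Writing both quantities via \eqref{eq:cmmi_S_2}, the common term $H(A_{1}\cdots A_{m}|E)_{\omega}$ and every summand $H(A_{i}|A_{[m]\backslash\{i\}}E)_{\omega}$ with $i\geq 3$ cancel, leaving
\[
\widetilde{I}(A_{1};\cdots;A_{m}|E)_{\omega}-\widetilde{I}(A_{1}A_{2};A_{3};\cdots;A_{m}|E)_{\omega}=H(A_{1}A_{2}|A_{3}\cdots A_{m}E)_{\omega}-H(A_{1}|A_{2}\cdots A_{m}E)_{\omega}-H(A_{2}|A_{1}A_{3}\cdots A_{m}E)_{\omega}.
\]
Applying the chain rule $H(A_{1}A_{2}|F)_{\omega}=H(A_{1}|A_{2}F)_{\omega}+H(A_{2}|F)_{\omega}$ with $F\equiv A_{3}\cdots A_{m}E$, this simplifies to $H(A_{2}|A_{3}\cdots A_{m}E)_{\omega}-H(A_{2}|A_{1}A_{3}\cdots A_{m}E)_{\omega}=I(A_{1};A_{2}|A_{3}\cdots A_{m}E)_{\omega}\geq 0$, again by strong subadditivity.

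In both cases the remainder after cancellation is a genuine quantum conditional mutual information, hence non-negative, so the two pointwise inequalities hold for every extension $\omega$; taking the infimum over the common set of extensions then proves the lemma (and, by relabeling and iteration, monotonicity under grouping of any subsets). The computation is essentially bookkeeping, and I expect the only point requiring care is the $\widetilde{E}_{\operatorname{sq}}$ case: one must select the form \eqref{eq:cmmi_S_2} so that the surviving conditional-entropy terms reassemble into a \emph{single} non-negative conditional mutual information $I(A_{1};A_{2}|A_{3}\cdots A_{m}E)_{\omega}$, rather than an unsigned combination of entropies whose positivity is not manifest.
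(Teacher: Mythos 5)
Your proof is correct and takes essentially the same route as the paper's: for each fixed extension $\omega$, expand both information quantities, cancel the common terms, recognize the remainder as $I(A_{1};A_{2}|E)_{\omega}$ (respectively $I(A_{1};A_{2}|A_{3}\cdots A_{m}E)_{\omega}$), invoke strong subadditivity, and take infima over the common set of extensions. The only cosmetic difference is that you expand $\widetilde{I}$ via \eqref{eq:cmmi_S_2} while the paper uses \eqref{eq:cmmi_S}; both reductions terminate in the same non-negative conditional mutual information.
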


\begin{proof} Consider the chain rule expansion for $I(A_{1};\cdots;A_{m})_{\rho}$
given in $\left(\ref{eq:cmmi_I}\right)$:
\begin{align}
I\left(A_{1};\cdots;A_{m}\left\vert E\right.\right)_{\rho} & =\sum_{i=1}^{m}H\left(A_{i}\left\vert E\right.\right)_{\rho}-H\left(A_{1}\cdots A_{m}\left\vert E\right.\right)_{\rho}\\
 & =H\left(A_{1}\left\vert E\right.\right)_{\rho}+H\left(A_{2}\left\vert E\right.\right)_{\rho}+\sum_{i=3}^{m}H\left(A_{i}\left\vert E\right.\right)_{\rho}-H\left(A_{1}\cdots A_{m}\left\vert E\right.\right)_{\rho}.\label{eq:S grp manip1-1}
\end{align}
Now consider the same chain rule expansion for $I\left(A_{1}A_{2};A_{3};\cdots;A_{m}\right)_{\rho}$:
\begin{equation}
I\left(A_{1}A_{2};A_{3};\cdots;A_{m}\left\vert E\right.\right)_{\rho}=H\left(A_{1}A_{2}\left\vert E\right.\right)_{\rho}+\sum_{i=3}^{m}H\left(A_{i}\left\vert E\right.\right)_{\rho}-H\left(A_{1}\cdots A_{m}\left\vert E\right.\right)_{\rho}.\label{eq:S grp manip2-1}
\end{equation}
Taking the difference of $\left(\ref{eq:S grp manip1-1}\right)$ and
$\left(\ref{eq:S grp manip2-1}\right)$, we find that 
\begin{align}
 & I\left(A_{1};\cdots;A_{m}\left\vert E\right.\right)_{\rho}-I\left(A_{1}A_{2};A_{3};\cdots;A_{m}\left\vert E\right.\right)_{\rho}\nonumber \\
 & =H\left(A_{1}\left\vert E\right.\right)_{\rho}+H\left(A_{2}\left\vert E\right.\right)_{\rho}-H\left(A_{1}A_{2}\left\vert E\right.\right)_{\rho}\\
 & =I\left(A_{1};A_{2}\left\vert E\right.\right)_{\rho}\\
 & \geq0,
\end{align}
where the last inequality follows from the strong subadditivity of quantum
entropy.

Similarly, consider the chain rule expansion for $\widetilde{I}(A_{1};\cdots;A_{m})$
given in $\left(\ref{eq:cmmi_S}\right)$. We have 
\begin{align}
\widetilde{I}\left(A_{1};\cdots;A_{m}\left\vert E\right.\right)_{\rho} & =\sum_{i=1}^{m}H\left(A_{\left[m\right]\backslash\left\{ i\right\} }\left\vert E\right.\right)_{\rho}-\left(m-1\right)H\left(A_{1}\cdots A_{m}\left\vert E\right.\right)_{\rho}\\
 & =H\left(A_{2}\cdots A_{m}\left\vert E\right.\right)_{\rho}+H\left(A_{1}A_{3}\cdots A_{m}\left\vert E\right.\right)_{\rho}\nonumber \\
 & \ \ \ \ \ \ \ \ \ \ +\sum_{i=3}^{m}H\left(A_{\left[m\right]\backslash\left\{ i\right\} }\left\vert E\right.\right)_{\rho}-\left(m-1\right)H\left(A_{1}\cdots A_{m}\left\vert E\right.\right)_{\rho}.\label{eq:S grp manip1}
\end{align}
Now consider the same chain rule expansion for $\widetilde{I}\left(A_{1}A_{2};A_{3};\cdots;A_{m}\right)$
where we have grouped systems $A_{1}$ and $A_{2}$ into one system.
We have 
\begin{multline}
\widetilde{I}\left(A_{1}A_{2};A_{3};\cdots;A_{m}\left\vert E\right.\right)_{\rho}=H\left(A_{3}\cdots A_{m}\left\vert E\right.\right)_{\rho}\\
+\sum_{i=3}^{m}H\left(A_{\left[m\right]\backslash\left\{ i\right\} }\left\vert E\right.\right)_{\rho}-\left(m-2\right)H\left(A_{1}\cdots A_{m}\left\vert E\right.\right)_{\rho}.\label{eq:S grp manip2}
\end{multline}
Taking the difference of $\left(\ref{eq:S grp manip1}\right)$ and
$\left(\ref{eq:S grp manip2}\right)$, we find that 
\begin{align}
 & \widetilde{I}\left(A_{1};\cdots;A_{m}\left\vert E\right.\right)_{\rho}-\widetilde{I}\left(A_{1}A_{2};A_{3};\cdots;A_{m}\left\vert E\right.\right)_{\rho}\nonumber \\
 & =H\left(A_{2}\cdots A_{m}\left\vert E\right.\right)_{\rho}+H\left(A_{1}A_{3}\cdots A_{m}\left\vert E\right.\right)_{\rho}\nonumber \\
 & \ \ \ \ \ \ \ \ \ \ -H\left(A_{3}\cdots A_{m}\left\vert E\right.\right)_{\rho}-H\left(A_{1}\cdots A_{m}\left\vert E\right.\right)_{\rho}\\
 & =I\left(A_{1};A_{2}\left\vert A_{3}\cdots A_{m}E\right.\right)_{\rho}\\
 & \geq0,
\end{align}
where the last inequality follows from the strong subadditivity of quantum
entropy. The statement of the lemma follows from the above inequalities
and by taking infima. \end{proof}

\subsection{Reduction for product states}

\begin{lemma} \label{lem:prod-state-reduction}Let $\omega_{A_{1}A_{2}\cdots A_{m}}=\rho_{A_{1}}\otimes\sigma_{A_{2}\cdots A_{m}}$,
where $\rho_{A_{1}}$ and $\sigma_{A_{2}\cdots A_{m}}$ are density
operators. Then
\begin{align}
E_{\operatorname{sq}}\left(A_{1};A_{2};\cdots;A_{m}\right)_{\omega} & =E_{\operatorname{sq}}\left(A_{2};A_{3;}\cdots;A_{m}\right)_{\sigma},\\
\widetilde{E}_{\operatorname{sq}}\left(A_{1};A_{2};\cdots;A_{m}\right)_{\omega} & =\widetilde{E}_{\operatorname{sq}}\left(A_{2};A_{3};\cdots;A_{m}\right)_{\sigma}.
\end{align}

\end{lemma}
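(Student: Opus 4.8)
The plan is to establish the two matching inequalities separately, and the asymmetry between them is the key structural observation: the inequality $E_{\operatorname{sq}}(A_1;\cdots;A_m)_\omega \geq E_{\operatorname{sq}}(A_2;\cdots;A_m)_\sigma$ holds for \emph{every} state (it is a ``discard a party'' monotonicity), whereas the reverse inequality is where the product hypothesis $\omega=\rho_{A_1}\otimes\sigma$ genuinely enters. So I would first prove the $\geq$ direction in full generality and then use the product structure only for $\leq$. The same two-step scheme handles both $E_{\operatorname{sq}}$ and $\widetilde{E}_{\operatorname{sq}}$.

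For the $\geq$ direction, I would take an arbitrary extension $\omega_{A_1\cdots A_m E}$ of $\omega$ and observe that tracing out $A_1$ yields $\omega_{A_2\cdots A_m E}$, which (since $\operatorname{Tr}_{A_1 E}\omega = \sigma_{A_2\cdots A_m}$) is a legitimate extension of $\sigma_{A_2\cdots A_m}$. Then the whole content is the pointwise entropy inequality, which I would verify by direct expansion: for $I$ one finds
\begin{equation}
I(A_1;A_2;\cdots;A_m|E)_\omega - I(A_2;\cdots;A_m|E)_\omega = I(A_1;A_2\cdots A_m|E)_\omega \geq 0,
\end{equation}
a single quantum conditional mutual information, nonnegative by strong subadditivity. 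For $\widetilde{I}$, using the form \eqref{eq:cmmi_S_2} and cancelling the common $H(A_1|A_2\cdots A_m E)$ term, the difference collapses to
\begin{equation}
\widetilde{I}(A_1;\cdots;A_m|E)_\omega - \widetilde{I}(A_2;\cdots;A_m|E)_\omega = \sum_{i=2}^{m} I\!\left(A_1;A_i\,\middle|\,A_{[m]\backslash\{1,i\}}E\right)_\omega \geq 0,
\end{equation}
again a sum of nonnegative conditional mutual informations. Since $I(A_2;\cdots;A_m|E)_\omega \geq 2E_{\operatorname{sq}}(A_2;\cdots;A_m)_\sigma$ for the induced extension, taking the infimum over extensions of $\omega$ gives the desired bound, and likewise for $\widetilde{E}_{\operatorname{sq}}$.

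For the $\leq$ direction I would build a specific extension of $\omega$ from a near-optimal extension $\sigma_{A_2\cdots A_m E'}$ of $\sigma$, namely $\omega_{A_1\cdots A_m E'} \equiv \rho_{A_1}\otimes\sigma_{A_2\cdots A_m E'}$, which manifestly reduces to $\omega$. On this product state $A_1$ is uncorrelated with $A_2\cdots A_m E'$, so $H(A_1|E')_\omega = H(A_1)_\rho$ and $H(A_1\cdots A_m|E')_\omega = H(A_1)_\rho + H(A_2\cdots A_m|E')_\sigma$; the $H(A_1)_\rho$ contributions cancel and one obtains $I(A_1;\cdots;A_m|E')_\omega = I(A_2;\cdots;A_m|E')_\sigma$ exactly, so infimizing over extensions of $\sigma$ finishes this case. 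The analogous computation for $\widetilde{I}$ via \eqref{eq:cmmi_S_2} is where I expect the only real bookkeeping difficulty: one must track that for each $i\geq 2$ the conditioning set $A_{[m]\backslash\{i\}}$ contains the independent system $A_1$, use the product structure to peel it off as $H(A_i|A_{[m]\backslash\{i\}}E')_\omega = H(A_i|A_{[m]\backslash\{1,i\}}E')_\sigma$, and check that the single term $H(A_1|A_2\cdots A_m E')_\omega = H(A_1)_\rho$ cancels against the $H(A_1)_\rho$ in $H(A_1\cdots A_m|E')_\omega$, leaving precisely $\widetilde{I}(A_2;\cdots;A_m|E')_\sigma$. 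Combining the two directions yields the stated equalities.
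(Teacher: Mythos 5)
Your proof is correct, and it diverges from the paper's in an interesting way on one of the two directions. The $\leq$ direction is essentially the paper's own argument: build a product extension of $\omega$ from a (near-optimal) extension of $\sigma$, expand the conditional multipartite informations, cancel the $H(A_1)_\rho$ contributions, and infimize over extensions of $\sigma$ (the paper additionally allows an extension $\rho_{A_1E}$ of $\rho_{A_1}$, but as your computation shows this is immaterial because those terms cancel anyway). Where you genuinely depart is the $\geq$ direction. The paper dispatches it in two lines by citing machinery it has already established: first its monotonicity-under-groupings result (Lemma~\ref{lem:Grouping systems}), giving $E_{\operatorname{sq}}(A_1;A_2;\cdots;A_m)_\omega \geq E_{\operatorname{sq}}(A_1A_2;A_3;\cdots;A_m)_\omega$, and then monotonicity under LOCC (discarding $A_1$) to reach $E_{\operatorname{sq}}(A_2;\cdots;A_m)_\sigma$. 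You instead prove pointwise identities for an arbitrary extension: $I(A_1;\cdots;A_m|E)_\omega - I(A_2;\cdots;A_m|E)_\omega = I(A_1;A_2\cdots A_m|E)_\omega$ and $\widetilde{I}(A_1;\cdots;A_m|E)_\omega - \widetilde{I}(A_2;\cdots;A_m|E)_\omega = \sum_{i=2}^m I\bigl(A_1;A_i\big|A_{[m]\backslash\{1,i\}}E\bigr)_\omega$, each nonnegative by strong subadditivity, combined with the observation that tracing $A_1$ out of any extension of $\omega$ yields a valid extension of $\sigma$. Both identities check out (the second after cancelling the common $H(A_1|A_2\cdots A_m E)$ term, exactly as you note). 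Your route buys self-containedness: it needs only strong subadditivity, not the grouping lemma nor the LOCC monotonicity of the multipartite measures imported from the literature, and it makes explicit that the $\geq$ inequality is a general ``discard a party'' monotonicity valid for any state with marginal $\sigma$, the product hypothesis entering only in the $\leq$ direction. The paper's route buys brevity by reusing lemmas already proved---and indeed its grouping lemma is itself proved by precisely this kind of pointwise entropy computation, so your argument can be read as an inlined version of the paper's.
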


\begin{proof} We first prove LHS $\geq$ RHS for the inequalities
in the statement of the lemma. Consider that
\begin{align}
E_{\operatorname{sq}}\left(A_{1};A_{2};\cdots;A_{m}\right)_{\omega} & \geq E_{\operatorname{sq}}\left(A_{1}A_{2};\cdots;A_{m}\right)_{\omega}\\
 & \geq E_{\operatorname{sq}}\left(A_{2};\cdots;A_{m}\right)_{\omega}\\
 & =E_{\operatorname{sq}}\left(A_{2};\cdots;A_{m}\right)_{\sigma},
\end{align}
where the first inequality is from monotonicity under groupings (Lemma~\ref{lem:Grouping systems})
and the second inequality is from monotonicity under LOCC. The same
reasoning gives
\begin{equation}
\widetilde{E}_{\operatorname{sq}}\left(A_{1};A_{2};\cdots;A_{m}\right)_{\omega}\geq\widetilde{E}_{\operatorname{sq}}\left(A_{2};A_{3};\cdots;A_{m}\right)_{\sigma}.
\end{equation}
We now prove LHS $\leq$ RHS for the inequalities in the statement
of the lemma. Let $\rho_{A_{1}E}$ extend $\rho_{A_{1}}$ and $\sigma_{A_{2}\cdots A_{m}F}$
extend $\sigma_{A_{2}\cdots A_{m}}$. Then
\begin{align}
2E_{\operatorname{sq}}\left(A_{1};A_{2};\cdots;A_{m}\right)_{\omega} & \leq I\left(A_{1};A_{2};\cdots;A_{m}|EF\right)_{\rho\otimes\sigma}\\
 & =H\left(A_{1}|EF\right)_{\rho\otimes\sigma}+\sum_{i=2}^{m}H\left(A_{i}|EF\right)_{\rho\otimes\sigma}-H\left(A_{1}A_{2}\cdots A_{m}|EF\right)_{\rho\otimes\sigma}\\
 & =H\left(A_{1}|E\right)_{\rho}+\sum_{i=2}^{m}H\left(A_{i}|F\right)_{\sigma}-\left[H\left(A_{1}|E\right)_{\rho}+H\left(A_{2}\cdots A_{m}|F\right)_{\sigma}\right]\\
 & =\sum_{i=2}^{m}H\left(A_{i}|F\right)_{\sigma}-H\left(A_{2}\cdots A_{m}|F\right)_{\sigma}\\
 & =I\left(A_{2};\cdots;A_{m}|F\right)_{\sigma}.
\end{align}
Since the calculation holds independently of the particular extension
of $\sigma$, we can conclude that
\begin{equation}
E_{\operatorname{sq}}\left(A_{1};A_{2};\cdots;A_{m}\right)_{\omega}\leq E_{\operatorname{sq}}\left(A_{2};\cdots;A_{m}\right)_{\sigma}.
\end{equation}
For the other inequality, consider that
\begin{align}
2\widetilde{E}_{\operatorname{sq}}\left(A_{1};A_{2};\cdots;A_{m}\right)_{\omega} & \leq\widetilde{I}\left(A_{1};A_{2};\cdots;A_{m}|EF\right)_{\rho\otimes\sigma}\\
 & =H\left(A_{1}A_{2}\cdots A_{m}|EF\right)_{\rho\otimes\sigma}\nonumber \\
 & \ \ \ \ \ \ \ \ \ \ -\left[H\left(A_{1}|A_{2}\cdots A_{m}EF\right)_{\rho\otimes\sigma}+\sum_{i=2}^{m}H\left(A_{i}|A_{1}A_{\left[2:m\right]\backslash\left\{ i\right\} }EF\right)_{\rho\otimes\sigma}\right]\\
 & =H\left(A_{1}|E\right)_{\rho}+H\left(A_{2}\cdots A_{m}|F\right)_{\rho\otimes\sigma}\nonumber \\
 & \ \ \ \ \ \ \ \ \ \ -\left[H\left(A_{1}|E\right)_{\rho}+\sum_{i=2}^{m}H\left(A_{i}|A_{\left[2:m\right]\backslash\left\{ i\right\} }F\right)_{\sigma}\right]\\
 & =H\left(A_{2}\cdots A_{m}|F\right)_{\rho\otimes\sigma}-\sum_{i=2}^{m}H\left(A_{i}|A_{\left[2:m\right]\backslash\left\{ i\right\} }F\right)_{\sigma}\\
 & =\widetilde{I}\left(A_{2};\cdots;A_{m}|F\right)_{\sigma}.
\end{align}
By the same reasoning as above, we can conclude
\begin{equation}
\widetilde{E}_{\operatorname{sq}}\left(A_{1};A_{2};\cdots;A_{m}\right)_{\omega}\leq\widetilde{E}_{\operatorname{sq}}\left(A_{2};\cdots;A_{m}\right)_{\sigma}.
\end{equation}
This completes the proof.
\end{proof}

\subsection{Multipartite squashed entanglements of maximally entangled states
and private states}

Consider a set $\mathcal{S}=\left\{ A,B,C\right\} $. Let $\Psi_{ABC}$
be a joint state over $A$, $B$, and $C$ of the form
\begin{multline}
\Psi_{ABC} \equiv  \Phi_{A^{\left(1\right)}B^{\left(1\right)}}\otimes\Phi_{A^{\left(2\right)}C^{\left(2\right)}}\otimes\Phi_{B^{\left(3\right)}C^{\left(3\right)}}\otimes\Phi_{A^{\left(4\right)}B^{\left(4\right)}C^{\left(4\right)}}\\
\otimes\gamma_{A^{\left(5\right)}B^{\left(5\right)}}\otimes\gamma_{A^{\left(6\right)}C^{\left(6\right)}}\otimes\gamma_{B^{\left(7\right)}C^{\left(7\right)}}\otimes\gamma_{A^{\left(8\right)}B^{\left(8\right)}C^{\left(8\right)}},\label{eq:example-ideal-state}
\end{multline}
where each $\Phi$ represents a maximally entangled state of the form in \eqref{eq:GHZ def}
and each $\gamma$  represents a private state of the form in \eqref{eq:mult_private state},
and where the quantum system $A$, e.g., has been split into subsystems
$A^{\left(1\right)}A^{\left(2\right)}A^{\left(4\right)}A^{\left(5\right)}A^{\left(6\right)}A^{\left(8\right)}$
to segregate the different kinds of correlations that it holds with
$B$ and $C$. Thus, if $\mathcal{S}$ corresponds to a set of three
parties Alice $A$, Bob $B$, and Charlie $C$, then the state $\Psi_{ABC}$
represents a collection of maximally entangled states and private
states shared between all the non-trivial subsets of the parties, which are
enlisted in the following set:
\begin{equation}
\mathcal{P}_{\geq2}(\mathcal{S})=\left\{ \left\{ A,B\right\} ,\left\{ A,C\right\} ,\left\{ B,C\right\} ,\left\{ A,B,C\right\} \right\} .
\end{equation}
Let us denote the number of entangled bits and private bits shared
between the three parties over the various elements of $\mathcal{{P}}_{\geq2}\mathcal{\left(S\right)}$
by the tuple $\left(E_{AB},E_{AC},E_{BC},E_{ABC},K_{AB},K_{AC},K_{BC},K_{ABC}\right)$
(where $E$ stands for ``Entanglement'' and $K$ for ``Key''). Then, for the
state $\Psi_{ABC}$ in \eqref{eq:example-ideal-state}, we have 
\begin{align}
E_{AB} & \equiv H\left(A^{\left(1\right)}\right)_{\Phi},\ \ \ E_{AC}\equiv H\left(A^{\left(2\right)}\right)_{\Phi},\ \ \ E_{BC}\equiv H\left(B^{\left(3\right)}\right)_{\Phi},\ \ \ E_{ABC}\equiv H\left(A^{\left(4\right)}\right)_{\Phi},\label{rate11}\\
K_{AB} & \equiv H\left(A^{\left(5\right)}\right)_{\gamma},\ \ \ K_{AC}\equiv H\left(A^{\left(6\right)}\right)_{\gamma},\ \ \ K_{BC}\equiv H\left(B^{\left(7\right)}\right)_{\gamma},\ \ \ K_{ABC}\equiv H\left(A^{\left(8\right)}\right)_{\gamma},\label{rate12}
\end{align}
where $H$ denotes the von Neumann entropy. Note that the quantity
$E_{AB}$, which, e.g., is the amount of entanglement shared between Alice
and Bob, is to be understood as the amount of entanglement between the systems
$A^{\left(1\right)}$ and $B^{\left(1\right)}$. Also, for the private
$\gamma$-states it is implicit that we are evaluating the entropies
with respect to the key systems, so that the entropy is equal to the
number of private bits in the state.

Our goal in this section is as follows. For a given state $\Psi_{ABC}$
of the form in \eqref{eq:example-ideal-state}, we want to establish
constraints relating the elements of the tuple $\left(E_{AB},E_{AC},E_{BC},E_{ABC},K_{AB},K_{AC},K_{BC},K_{ABC}\right)$
using the multipartite squashed entanglement quantities discussed
in Section~\ref{sub:Bi and Mul Sq Ent}. For this, we are interested
in determining the multipartite squashed entanglements of $\Psi_{ABC}$
with respect to various nontrivial partitions of $\mathcal{S}=\left\{ A,B,C\right\} $,
which are given by
\begin{align}
\mathcal{G}_{1} & =\left\{ \left\{ A\right\} ,\left\{ B,C\right\} \right\} ,\label{eq:partition-ABC-1}\\
\mathcal{G}_{2} & =\left\{ \left\{ A,B\right\} ,\left\{ C\right\} \right\} ,\\
\mathcal{G}_{3} & =\left\{ \left\{ A,C\right\} ,\left\{ B\right\} \right\} ,\label{eq:partition-ABC-3}\\
\mathcal{G}_{4} & =\left\{ \left\{ A\right\} ,\left\{ B\right\} ,\left\{ C\right\} \right\} .\label{eq:partition-ABC-4}
\end{align}
(Note that we have excluded the trivial partition $\mathcal{G}_{5}=\left\{ \mathcal{S}\right\} $.) 

For partition $\mathcal{G}_{1}$, we obtain
\begin{align}
E_{\operatorname{sq}}(\mathcal{G}_{1})_{\Psi} & =\widetilde{E}_{\operatorname{sq}}(\mathcal{G}_{1})_{\Psi}\\
 & =E_{\operatorname{sq}}\left(A^{\left(1\right)}A^{\left(2\right)}A^{\left(4\right)}A^{\left(5\right)}A^{\left(6\right)}A^{\left(8\right)};B^{\left(1\right)}B^{\left(3\right)}B^{\left(4\right)}B^{\left(5\right)}B^{\left(7\right)}B^{\left(8\right)}C^{\left(2\right)}C^{\left(3\right)}C^{\left(4\right)}C^{\left(6\right)}C^{\left(7\right)}C^{\left(8\right)}\right)_{\Psi}\\
 & =E_{\operatorname{sq}}(A^{\left(1\right)};B^{\left(1\right)})_{\Phi}+E_{\operatorname{sq}}(A^{\left(2\right)};C^{\left(2\right)})_{\Phi}+E_{\operatorname{sq}}(A^{\left(4\right)};B^{\left(4\right)}C^{\left(4\right)})_{\Phi}\nonumber \\
 & \ \ \ \ \ \ +E_{\operatorname{sq}}(A^{\left(5\right)};B^{\left(5\right)})_{\gamma}+E_{\operatorname{sq}}(A^{\left(6\right)};C^{\left(6\right)})_{\gamma}+E_{\operatorname{sq}}(A^{\left(8\right)};B^{\left(8\right)}C^{\left(8\right)})_{\gamma}\\
 & \geq E_{AB}+E_{AC}+E_{ABC}+K_{AB}+K_{AC}+K_{ABC}.\label{eq:G1 E sq 2}
\end{align}
The first equality follows because the two squashed entanglements
are identical in the bipartite case. The second equality follows from
the additivity of squashed entanglement with respect to tensor-product
states and from Lemma~\ref{lem:prod-state-reduction}. The inequality
follows from Lemma~\ref{lem:perfect-ent-key}. A similar line of
reasoning for partitions $\mathcal{G}_{2}$ and $\mathcal{G}_{3}$
yields the following constraints:
\begin{align}
E_{\operatorname{sq}}(\mathcal{G}_{2})_{\Psi} & =\widetilde{E}_{\operatorname{sq}}(\mathcal{G}_{2})_{\Psi}\geq E_{AC}+E_{BC}+E_{ABC}+K_{AC}+K_{BC}+K_{ABC},\\
E_{\operatorname{sq}}\left(\mathcal{G}_{3}\right)_{\Psi} & =\widetilde{E}_{\operatorname{sq}}\left(\mathcal{G}_{3}\right)_{\Psi}\geq E_{AB}+E_{BC}+E_{ABC}+K_{AB}+K_{BC}+K_{ABC}.\label{eq:part-g3}
\end{align}
Finally, for partition $\mathcal{G}_{4}$, we obtain
\begin{align}
E_{\operatorname{sq}}(\mathcal{G}_{4})_{\Psi} & =E_{\operatorname{sq}}\left(A^{\left(1\right)}A^{\left(2\right)}A^{\left(4\right)}A^{\left(5\right)}A^{\left(6\right)}A^{\left(8\right)};B^{\left(1\right)}B^{\left(3\right)}B^{\left(4\right)}B^{\left(5\right)}B^{\left(7\right)}B^{\left(8\right)};C^{\left(2\right)}C^{\left(3\right)}C^{\left(4\right)}C^{\left(6\right)}C^{\left(7\right)}C^{\left(8\right)}\right)_{\Psi}\\
 & =E_{\operatorname{sq}}(A^{\left(1\right)};B^{\left(1\right)})_{\Phi}+E_{\operatorname{sq}}(A^{\left(2\right)};C^{\left(2\right)})_{\Phi}+E_{\operatorname{sq}}(B^{\left(3\right)};C^{\left(3\right)})_{\Phi}+E_{\operatorname{sq}}(A^{\left(4\right)};B^{\left(4\right)};C^{\left(4\right)})_{\Phi}\nonumber \\
 & \ \ \ \ \ \ \ \ \ \ +E_{\operatorname{sq}}(A^{\left(5\right)};B^{\left(5\right)})_{\gamma}+E_{\operatorname{sq}}(A^{\left(6\right)};C^{\left(6\right)})_{\gamma}+E_{\operatorname{sq}}(B^{\left(7\right)};C^{\left(7\right)})_{\gamma}+E_{\operatorname{sq}}(A^{\left(8\right)};B^{\left(8\right)};C^{\left(8\right)})_{\gamma}\\
 & \geq E_{AB}+E_{AC}+E_{BC}+\frac{{3}}{2}E_{ABC}+K_{AB}+K_{AC}+K_{BC}+\frac{{3}}{2}K_{ABC}.\label{eq:G4 E sq}
\end{align}
The second equality follows from the additivity of squashed entanglement
with respect to tensor-product states and from Lemma~\ref{lem:prod-state-reduction}.
The inequality follows from Lemma~\ref{lem:perfect-ent-key}. Similarly,
we also obtain
\begin{equation}
\widetilde{E}_{\operatorname{sq}}(\mathcal{G}_{4})_{\Psi}\geq E_{AB}+E_{AC}+E_{BC}+\frac{{3}}{2}E_{ABC}+K_{AB}+K_{AC}+K_{BC}+\frac{{3}}{2}K_{ABC}.
\end{equation}
Since $\mathcal{G}_{4}$ is a tripartition and the two multipartite
squashed entanglements are not identical in general, we can pick the minimum of
$\widetilde{E}_{\operatorname{sq}}(\mathcal{G}_{4})_{\Psi}$ and $E_{\operatorname{sq}}(\mathcal{G}_{4})_{\Psi}$
to give a tighter upper bound.

The above analysis can be further extended to sets containing more
than three elements. Consider a set of $m+1$ elements, $\mathcal{S}=\left\{ A,B_{1},\ldots,B_{m}\right\} $,
for arbitrary but finite $m$. Let $\Psi_{\mathcal{S}}$ be the following
state:
\begin{equation}
\Psi_{\mathcal{S}}\equiv\bigotimes\limits _{\mathcal{K\in P}_{\geq2}(\mathcal{S})}\Phi_{\mathcal{K}}\otimes\gamma_{\mathcal{K}},\label{eq:big-entangled-state}
\end{equation}
which is a tensor product of all possible entangled states and private
states that could be shared between all subsets of the parties in
$\mathcal{S}$, with it understood that each $\mathcal{K}$ has a
set of distinct subsystems in the tensor product (as is the case for
the example in \eqref{eq:example-ideal-state}). For a given subset
$\mathcal{K\in P}_{\geq2}(\mathcal{S})$, let $E_{\mathcal{K}}$
denote the number of entangled bits (logarithm of the Schmidt rank)\ in
the multiparty GHZ\ entangled state $\Phi_{\mathcal{K}}$, and let
$K_{\mathcal{K}}$ denote the number of private bits in the private
state $\gamma_{\mathcal{K}}$.

\begin{definition} For a given nontrivial partition $\mathcal{G}$
of a set $\mathcal{S}$, we define the set $\mathcal{C}(\mathcal{G})$
of sets by the following procedure. Let $\mathcal{X}_{1},\ldots,\mathcal{X}_{\left\vert \mathcal{G}\right\vert }$
denote all of the sets in the partition $\mathcal{G}$. For each $\mathcal{L}_{\mathcal{X}_{1}}\in\mathcal{P}\left(\mathcal{X}_{1}\right)$,
\ldots{}, $\mathcal{L}_{\mathcal{X}_{\left\vert \mathcal{G}\right\vert }}\in\mathcal{P}\left(\mathcal{X}_{\left\vert \mathcal{G}\right\vert }\right)$,
form the set $\mathcal{L}_{\mathcal{X}_{1}}\cup\cdots\cup\mathcal{L}_{\mathcal{X}_{\left\vert \mathcal{G}\right\vert }}$
and add it to $\mathcal{C}(\mathcal{G})$. At the end,
remove the null set and any singleton sets. \end{definition}

For example, for the partition $\mathcal{G}_{1}=\left\{ \left\{ A\right\} ,\left\{ B,C\right\} \right\} $
of $\mathcal{S}=\left\{ A,B,C\right\} $, this procedure leads to
\begin{equation}
\mathcal{C}(\mathcal{G}_{1})=\left\{ \left\{ A,B\right\} ,\left\{ A,C\right\} ,\left\{ A,B,C\right\} \right\} .\label{eq:CG for G1}
\end{equation}

\begin{definition} For a given nontrivial partition $\mathcal{G}$
of a set $\mathcal{S}$ and an element $\mathcal{M}$ of $\mathcal{{C\left(G\right)}}$,
we define the set $\mathcal{A}\left(\mathcal{M},\mathcal{G}\right)$
as
\begin{equation}
\mathcal{A}\left(\mathcal{M},\mathcal{G}\right)\equiv\left\{ \mathcal{X}\cap\mathcal{M}\ |\ \mathcal{X}\in\mathcal{G}\right\} \backslash\{\emptyset\}.
\end{equation}
 \end{definition}

For example, for the partition $\mathcal{G}_{4}=\left\{ \left\{ A\right\} ,\left\{ B\right\} ,\left\{ C\right\} \right\} $
of $\mathcal{S}=\left\{ A,B,C\right\} $, we have
\begin{equation}
\mathcal{C}(\mathcal{G}_{4})=\left\{ \left\{ A,B\right\} ,\left\{ A,C\right\} ,\left\{ B,C\right\} ,\left\{ A,B,C\right\} \right\} .
\end{equation}
Let us denote the elements of $\mathcal{C}(\mathcal{G}_{4})$
as $\left\{ \mathcal{M}_{1},\mathcal{M}_{2},\mathcal{M}_{3},\mathcal{M}_{4}\right\} .$
Then, we have
\begin{align}
\mathcal{A}(\mathcal{M}_{1},\mathcal{G}_{4}) & =\left\{ \left\{ A\right\} ,\left\{ B\right\} \right\} ,\\
\mathcal{A}(\mathcal{M}_{2},\mathcal{G}_{4}) & =\left\{ \left\{ A\right\} ,\left\{ C\right\} \right\} ,\\
\mathcal{A}(\mathcal{M}_{3},\mathcal{G}_{4}) & =\left\{ \left\{ B\right\} ,\left\{ C\right\} \right\} ,\\
\mathcal{A}(\mathcal{M}_{4},\mathcal{G}_{4}) & =\left\{ \left\{ A\right\} ,\left\{ B\right\} ,\left\{ C\right\} \right\} .
\end{align}

\begin{lemma} \label{lem:max ent state sq ent}Let $\mathcal{S}$
be a set of parties and let $\Psi_{\mathcal{S}}$ be the tensor product
of states defined in \eqref{eq:big-entangled-state}. Then for a given
nontrivial partition $\mathcal{G}$ of $\mathcal{S}$, the squashed
entanglements $E_{\operatorname{sq}}(\mathcal{G})_{\Psi}$
and $\widetilde{E}_{\operatorname{sq}}(\mathcal{G})_{\Psi}$
from Definition \ref{def:Mult_Sq ent} constrain the number of entangled  bits $E_{\mathcal{M}}$
and private bits $K_{\mathcal{M}}$ between the elements of $\mathcal{G}$ as follows:
\begin{equation}
\frac{1}{2}\sum_{\mathcal{M\in C}(\mathcal{G})}\left\vert \mathcal{A}\left(\mathcal{M},\mathcal{G}\right)\right\vert \left(K_{\mathcal{M}}+E_{\mathcal{M}}\right)\leq\min\left\{ E_{\operatorname{sq}}(\mathcal{G})_{\Psi_{\mathcal{S}}},\ \widetilde{E}_{\operatorname{sq}}(\mathcal{G})_{\Psi_{\mathcal{S}}}\right\} .\label{eq: Sq ent formula for max ent}
\end{equation}

\end{lemma}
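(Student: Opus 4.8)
The plan is to exploit the product structure of $\Psi_{\mathcal{S}}$ in \eqref{eq:big-entangled-state} and reduce the computation of $E_{\operatorname{sq}}(\mathcal{G})_{\Psi_{\mathcal{S}}}$ and $\widetilde{E}_{\operatorname{sq}}(\mathcal{G})_{\Psi_{\mathcal{S}}}$ to a sum of contributions from the individual tensor factors $\Phi_{\mathcal{K}}$ and $\gamma_{\mathcal{K}}$, exactly as was done for the three-party example in \eqref{eq:G1 E sq 2} and \eqref{eq:G4 E sq}. First I would use the additivity of both multipartite squashed entanglements with respect to tensor-product states to write, for the partition $\mathcal{G}$ regarded as grouping the parties of $\mathcal{S}$ (so that the blocks of $\mathcal{G}$ play the role of the parties),
\begin{equation}
E_{\operatorname{sq}}(\mathcal{G})_{\Psi_{\mathcal{S}}}=\sum_{\mathcal{K}\in\mathcal{P}_{\geq2}(\mathcal{S})}\left[E_{\operatorname{sq}}(\mathcal{G})_{\Phi_{\mathcal{K}}}+E_{\operatorname{sq}}(\mathcal{G})_{\gamma_{\mathcal{K}}}\right],
\end{equation}
and likewise for $\widetilde{E}_{\operatorname{sq}}$, it being understood that a block $\mathcal{X}\in\mathcal{G}$ holds only those subsystems of each factor that belong to parties in $\mathcal{X}$.

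Next I would simplify each summand. A factor $\Phi_{\mathcal{K}}$ (or $\gamma_{\mathcal{K}}$) is shared only among the parties in $\mathcal{K}$, so any block $\mathcal{X}\in\mathcal{G}$ with $\mathcal{X}\cap\mathcal{K}=\emptyset$ holds a trivial system and is in tensor product with the rest. Repeated application of the reduction for product states, Lemma~\ref{lem:prod-state-reduction}, discards all such blocks and leaves the factor evaluated with respect to the induced partition $\mathcal{A}(\mathcal{K},\mathcal{G})=\{\mathcal{X}\cap\mathcal{K}\mid\mathcal{X}\in\mathcal{G}\}\setminus\{\emptyset\}$. In particular, when $|\mathcal{A}(\mathcal{K},\mathcal{G})|=1$ the entire factor is held by a single block, so it is a product state across the partition and contributes zero. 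The surviving factors are precisely those $\mathcal{K}$ with $|\mathcal{A}(\mathcal{K},\mathcal{G})|\geq2$, which are exactly the elements of $\mathcal{C}(\mathcal{G})$.

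For a surviving factor $\mathcal{K}\in\mathcal{C}(\mathcal{G})$, I would recognize the regrouped states as lower-dimensional GHZ and private states. The regrouped GHZ state $\Phi_{\mathcal{K}}$, in which each block $\mathcal{Y}\in\mathcal{A}(\mathcal{K},\mathcal{G})$ holds the perfectly correlated qudits $\bigotimes_{j\in\mathcal{Y}}|i\rangle$, is related by a local isometry on each block (collapsing the repeated index) to a genuine $|\mathcal{A}(\mathcal{K},\mathcal{G})|$-party GHZ state of Schmidt rank $d$ with $E_{\mathcal{K}}=\log d$. Since squashed entanglement is invariant under local isometries, Lemma~\ref{lem:perfect-ent-key} gives $E_{\operatorname{sq}}(\mathcal{A}(\mathcal{K},\mathcal{G}))_{\Phi_{\mathcal{K}}}=\widetilde{E}_{\operatorname{sq}}(\mathcal{A}(\mathcal{K},\mathcal{G}))_{\Phi_{\mathcal{K}}}=\tfrac{1}{2}|\mathcal{A}(\mathcal{K},\mathcal{G})|\,E_{\mathcal{K}}$. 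Analogously, the regrouped private state $\gamma_{\mathcal{K}}$ is a $|\mathcal{A}(\mathcal{K},\mathcal{G})|$-party private state of $K_{\mathcal{K}}=\log d$ private bits, so the lower-bound half of Lemma~\ref{lem:perfect-ent-key} yields $\min\{E_{\operatorname{sq}},\widetilde{E}_{\operatorname{sq}}\}(\mathcal{A}(\mathcal{K},\mathcal{G}))_{\gamma_{\mathcal{K}}}\geq\tfrac{1}{2}|\mathcal{A}(\mathcal{K},\mathcal{G})|\,K_{\mathcal{K}}$. Summing these contributions over $\mathcal{K}\in\mathcal{C}(\mathcal{G})$ lower-bounds each of $E_{\operatorname{sq}}(\mathcal{G})_{\Psi_{\mathcal{S}}}$ and $\widetilde{E}_{\operatorname{sq}}(\mathcal{G})_{\Psi_{\mathcal{S}}}$ by $\tfrac{1}{2}\sum_{\mathcal{M}\in\mathcal{C}(\mathcal{G})}|\mathcal{A}(\mathcal{M},\mathcal{G})|(K_{\mathcal{M}}+E_{\mathcal{M}})$, and taking the minimum over the two measures gives \eqref{eq: Sq ent formula for max ent}.

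The hard part will be the claim in the third step that regrouping the parties of a multipartite private state yields a valid private state on the coarser set of super-parties, so that the lower bound of Lemma~\ref{lem:perfect-ent-key} actually applies at the coarse level. Note that monotonicity under groupings, Lemma~\ref{lem:Grouping systems}, runs the wrong way here, since it upper-bounds the coarse squashed entanglement by the fine one; hence I cannot avoid establishing the private-state structure directly. To settle this I would unpack the twisting-unitary representation \eqref{eq:mult_private state}: because the key registers of $\gamma_{\mathcal{K}}$ are perfectly correlated across all parties in $\mathcal{K}$, each super-party $\mathcal{Y}$ can read the common key value from its own key systems, and the controlled twisting unitary $U^{i_{1},\ldots,i_{|\mathcal{K}|}}$, supported on the diagonal where all indices coincide, can be viewed as controlled on the super-party key registers while acting on the merged shield systems. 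This exhibits the regrouped state in the form \eqref{eq:mult_private state} for the parties indexed by $\mathcal{A}(\mathcal{K},\mathcal{G})$, with each super-party key of dimension $d$, which is precisely what Lemma~\ref{lem:perfect-ent-key} requires.
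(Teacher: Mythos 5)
Your proposal is correct, and its skeleton matches the paper's: additivity over the tensor factors of $\Psi_{\mathcal{S}}$, discarding factors confined to a single block of $\mathcal{G}$, and summing over $\mathcal{M}\in\mathcal{C}(\mathcal{G})$. Where you genuinely diverge is in the per-factor bounds, which are the real content of the lemma. The paper computes these head-on: for $\Phi_{\mathcal{M}}$ it uses the fact that a pure state admits only product extensions to evaluate $I(\mathcal{G})_{\Phi_{\mathcal{M}}}$ exactly, and for $\gamma_{\mathcal{M}}$ it characterizes every extension of a private state through the twisting unitary, works through the conditional entropies, and finishes with strong subadditivity---in effect re-proving the private-state bound of \cite[Proposition~4.19]{Chr06} in the grouped setting (and then repeats the analogous computation for $\widetilde{E}_{\operatorname{sq}}$). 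You instead reduce each regrouped factor to a standard object---a genuine $\left\vert \mathcal{A}\left(\mathcal{M},\mathcal{G}\right)\right\vert$-party GHZ or private state---and invoke Lemma~\ref{lem:perfect-ent-key} as a black box. Your route buys brevity and modularity; the paper's route buys self-containedness, since Lemma~\ref{lem:perfect-ent-key} is only cited from the literature and its proof is essentially the computation the paper writes out anyway. Your flagged ``hard part'' is real but does go through, and the cleanest formalization avoids manipulating the twisting unitary at the coarse level: within each super-party, declare a single key qudit to be the new key system and absorb the remaining (perfectly correlated, hence redundant) key qudits, together with that super-party's original shields, into the new shield system. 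Then the defining property \eqref{eq:mult-secret-key} for the super-parties can be verified directly on a purification: measuring one qudit per super-party already collapses the state onto the diagonal key values, and since each diagonal unitary $U^{i,\ldots,i}$ acts only on shield systems, tracing those out leaves $\frac{1}{d}\sum_{i}\vert i\rangle\langle i\vert^{\otimes\left\vert \mathcal{A}\left(\mathcal{M},\mathcal{G}\right)\right\vert}\otimes\sigma_{E}$, as required, so Lemma~\ref{lem:perfect-ent-key} applies with key dimension $d$ (the GHZ case is handled the same way, or by the local isometry you describe, under which both squashed entanglements are invariant by LOCC monotonicity in both directions). Finally, you are right that Lemma~\ref{lem:Grouping systems} points the wrong way; that observation is exactly why either your structural reduction or the paper's direct computation is unavoidable.
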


\begin{proof} Let $\mathcal{G}$ be a nontrivial partition of $\mathcal{S}$.
We begin by considering $E_{\operatorname{sq}}(\mathcal{G})_{\Psi_{\mathcal{S}}}$:
\begin{align}
E_{\operatorname{sq}}(\mathcal{G})_{\Psi_{\mathcal{S}}} & =\sum_{\mathcal{K\in P}_{\geq2}(\mathcal{S})}\left(E_{\operatorname{sq}}(\mathcal{G})_{\Phi_{\mathcal{K}}}+E_{\operatorname{sq}}(\mathcal{G})_{\gamma_{\mathcal{K}}}\right)\label{eq:gen-lower-bound-1}\\
 & =\sum_{\mathcal{M\in C}(\mathcal{G})}\left(E_{\operatorname{sq}}(\mathcal{G})_{\Phi_{\mathcal{M}}}+E_{\operatorname{sq}}(\mathcal{G})_{\gamma_{\mathcal{M}}}\right).
\end{align}
The first equality is a consequence of the additivity of squashed
entanglement with respect to tensor product states. The second equality
follows because $E_{\operatorname{sq}}(\mathcal{G})_{\Phi_{\mathcal{K}}}=E_{\operatorname{sq}}(\mathcal{G})_{\gamma_{\mathcal{K}}}=0$
if $\mathcal{K\subseteq X}$ for some $\mathcal{X\in G}$ and the
algorithm that constructs $\mathcal{C}(\mathcal{G})$ removes
all such $\mathcal{K}$. Now consider that
\begin{align}
2E_{\operatorname{sq}}(\mathcal{G})_{\Phi_{\mathcal{M}}} & =I(\mathcal{G})_{\Phi_{\mathcal{M}}}\\
 & =\sum_{\mathcal{X\in G}}H\left(\mathcal{X}\cap\mathcal{M}\right)_{\Phi_{\mathcal{M}}}-H\left(\mathcal{M}\right)_{\Phi_{\mathcal{M}}}\\
 & =\sum_{\mathcal{X\in G}}H\left(\mathcal{X}\cap\mathcal{M}\right)_{\Phi_{\mathcal{M}}}\\
 & =\left\vert \mathcal{A}\left(\mathcal{M},\mathcal{G}\right)\right\vert E_{\mathcal{M}}.\label{eq:gen-lower-bound-last}
\end{align}
The first equality holds because any pure entangled state is not
extendible, so that any extension system is product with it. The next
equality is from the definition of $I(\mathcal{G})$ (similar
to \eqref{eq:cmmi_I} without the conditioning system). The third
equality follows because the state $\Phi_{\mathcal{M}}$ is pure.
The final equality is a consequence of the definition of the set $\mathcal{A}\left(\mathcal{M},\mathcal{G}\right)$
and the fact that for all $\mathcal{X}\in\mathcal{G}$, where $\mathcal{G}$
is a nontrivial partition of $\mathcal{\mathcal{S}},$ $\mathcal{X}\cap\mathcal{M}\subset\mathcal{M}$
and therefore $H\left(\mathcal{X}\cap\mathcal{M}\right)_{\Phi_{\mathcal{M}}}$
equals the number of entangled bits in $\Phi_{\mathcal{M}}$.

We now prove the following lower bound:
\begin{equation}
E_{\operatorname{sq}}(\mathcal{G})_{\gamma_{\mathcal{M}}}\geq\frac{1}{2}\left\vert \mathcal{A}\left(\mathcal{M},\mathcal{G}\right)\right\vert K_{\mathcal{M}},\label{eq:gen-key-lower-bound}
\end{equation}
which extends Lemma~\ref{lem:perfect-ent-key} and just applies the
idea behind \cite[Proposition~4.19]{Chr06}\ to this more general
case. To do so, let us consider both the key and shield systems of
$\gamma_{\mathcal{M}}$ and label them by $\mathcal{M}$ and $\mathcal{M}^{\prime}$,
respectively, so that we relabel $\gamma_{\mathcal{M}}$ as $\gamma_{\mathcal{MM}^{\prime}}$.
Then $\gamma_{\mathcal{MM}^{\prime}}$ has the following form:
\begin{equation}
\gamma_{\mathcal{MM}^{\prime}}=U_{\mathcal{MM}^{\prime}}\left(\Phi_{\mathcal{M}}\otimes\rho_{\mathcal{M}^{\prime}}\right)U_{\mathcal{MM}^{\prime}}^{\dag},
\end{equation}
where the twisting unitary is
\begin{equation}
U_{\mathcal{MM}^{\prime}}=\sum_{i=0}^{2^{K_{\mathcal{M}}}-1}\vert i\rangle \langle i\vert _{\mathcal{M}}\otimes U_{\mathcal{M}^{\prime}}^{i},
\end{equation}
and where $K_{\mathcal{M}}$ is the number of private bits contained
in $\gamma_{\mathcal{MM}^{\prime}}$. An extension $\gamma_{\mathcal{MM}^{\prime}E}$\ of
$\gamma_{\mathcal{MM}^{\prime}}$ has the following form:
\begin{equation}
\gamma_{\mathcal{MM}^{\prime}E}=U_{\mathcal{MM}^{\prime}}\left(\Phi_{\mathcal{M}}\otimes\rho_{\mathcal{M}^{\prime}E}\right)U_{\mathcal{MM}^{\prime}}^{\dag},\label{eq:gencase private state}
\end{equation}
where $\rho_{\mathcal{M}^{\prime}E}$ is some extension of $\rho_{\mathcal{M}^{\prime}}$.
Let $\gamma_{\mathcal{M}^{\prime}E}^{i}$ denote the following state:
\begin{equation}
\gamma_{\mathcal{M}^{\prime}E}^{i}\equiv U_{\mathcal{M}^{\prime}}^{i}\rho_{\mathcal{M}^{\prime}E}\left(U_{\mathcal{M}^{\prime}}^{i}\right)^{\dag}.
\end{equation}
Then consider that
\begin{align}
H\left(\mathcal{M}\mathcal{M}^{\prime}E\right)_{\gamma} & =H\left(\mathcal{M}^{\prime}E\right)_{\rho}=H\left(\mathcal{M}^{\prime}E\right)_{\gamma^{i}},\\
H(E)_{\gamma} & =H(E)_{\gamma^{i}},\label{eq:complex1}
\end{align}
where we have used some well-known properties of the von Neumann entropy,
namely that it is invariant under unitary transformations, it is additive
on tensor product states and that it is zero for pure states. Then,
for all $i$, we have
\begin{equation}
H\left(\mathcal{M}\mathcal{M}^{\prime}|E\right)_{\gamma}=H\left(\mathcal{M}^{\prime}|E\right)_{\gamma^{i}}.
\end{equation}
This allows us to conclude that
\begin{equation}
H\left(\mathcal{M}\mathcal{M}^{\prime}|E\right)_{\gamma}=\sum_{i=0}^{2^{K_{\mathcal{M}}}-1}\frac{1}{2^{K_{\mathcal{M}}}}H\left(\mathcal{M}^{\prime}|E\right)_{\gamma^{i}},
\end{equation}
where we have simply rewritten the right hand side, since $H\left(\mathcal{M}^{\prime}|E\right)_{\gamma^{i}}$
is the same for all $i$. For all $\mathcal{X\in G}$, we have that
\begin{align}
H\left(\left[\mathcal{X\cap M}\right]\left[\mathcal{X}^{\prime}\cap\mathcal{M}^{\prime}\right]E\right)_{\gamma} & =H\left(\mathcal{X\cap M}\right)_{\gamma}+H\left(\left[\mathcal{X}^{\prime}\cap\mathcal{M}^{\prime}\right]E|\left[\mathcal{X\cap M}\right]\right)_{\gamma}\\
 & =K_{\mathcal{M}}+\sum_{i=0}^{2^{K_{\mathcal{M}}}-1}\frac{1}{2^{K_{\mathcal{M}}}}H\left(\left[\mathcal{X}^{\prime}\cap\mathcal{M}^{\prime}\right]E\right)_{\gamma^{i}},\label{eq:complex2}
\end{align}
where we have used $\mathcal{X}^{\prime}$ to label the shield systems
corresponding to the key systems in $\mathcal{X}$. The first term
in \eqref{eq:complex2} follows because $H\left(\mathcal{X\cap M}\right)_{\gamma}=H\left(\mathcal{X\cap M}\right)_{\Phi_{\mathcal{M}}}$,
the number of entangled bits in $\Phi_{\mathcal{M}}$, which indeed
equals the number of private bits in $\gamma.$ Thus, from \eqref{eq:complex2}
and \eqref{eq:complex1}, we have 
\begin{equation}
H\left(\left[\mathcal{X\cap M}\right]\left[\mathcal{X}^{\prime}\cap\mathcal{M}^{\prime}\right]|E\right)_{\gamma}=K_{\mathcal{M}}+\sum_{i}\frac{1}{2^{K_{\mathcal{M}}}}H\left(\left[\mathcal{X}^{\prime}\cap\mathcal{M}^{\prime}\right]|E\right)_{\gamma^{i}}\label{eq:multiparty-key-squash}
\end{equation}
The multipartite conditional mutual information of $\gamma$ across
the partition $\mathcal{G}$ of the key systems and the analogous
partition $\mathcal{G}^{\prime}$ of the shield systems can thus be
written as
\begin{align}
I(\mathcal{GG}^{\prime}|E)_{\gamma} & =\sum_{\mathcal{X\in G}}H\left(\left[\mathcal{X\cap M}\right]\left[\mathcal{X}^{\prime}\cap\mathcal{M}^{\prime}\right]|E\right)_{\gamma}-H\left(\mathcal{M}\mathcal{M}^{\prime}|E\right)_{\gamma}\label{eq:squash-justify-1}\\
 & =\left\vert \mathcal{A}\left(\mathcal{M},\mathcal{G}\right)\right\vert K_{\mathcal{M}}+\sum_{\mathcal{X\in G}}\sum_{i}\frac{1}{2^{K_{\mathcal{M}}}}H\left(\left[\mathcal{X}^{\prime}\cap\mathcal{M}^{\prime}\right]|E\right)_{\gamma^{i}}\\
 & \ \ \ \ \ \ \ -\sum_{i}\frac{1}{2^{K_{\mathcal{M}}}}H\left(\mathcal{M}^{\prime}|E\right)_{\gamma^{i}}\\
 & =\left\vert \mathcal{A}\left(\mathcal{M},\mathcal{G}\right)\right\vert K_{\mathcal{M}}+\sum_{i}\frac{1}{2^{K_{\mathcal{M}}}}I\left(\mathcal{G}^{\prime}|E\right)_{\gamma^{i}}\\
 & \geq\left\vert \mathcal{A}\left(\mathcal{M},\mathcal{G}\right)\right\vert K_{\mathcal{M}}.\label{eq:squash-justify-last}
\end{align}
The first equality follows from the definition of conditional multipartite
information in \eqref{eq:cmmi_I}. The second equality follows from
(\ref{eq:complex2}) and (\ref{eq:multiparty-key-squash}) and the definition
of the set $\mathcal{A}\left(\mathcal{M},\mathcal{G}\right)$. The
third equality follows once again from the definition of conditional
multipartite information in \eqref{eq:cmmi_I}. Finally, the fourth
inequality in (\ref{eq:squash-justify-last}) from the strong subadditivity of quantum entropy,
namely that $I\left(\mathcal{G}^{\prime}|E\right)_{\gamma^{i}}\geq0$
for any quantum state $\gamma^{i}$. Since the inequality is independent
of the particular extension of $\gamma_{\mathcal{MM}^{\prime}}$,
from the definition of the multipartite squashed entanglement in \eqref{def:Mult_Sq ent},
we can conclude (\ref{eq:gen-key-lower-bound}).\ Putting together
(\ref{eq:gen-lower-bound-1})-(\ref{eq:gen-lower-bound-last}) and
(\ref{eq:gen-key-lower-bound}), we find that
\begin{equation}
E_{\operatorname{sq}}(\mathcal{G})_{\Psi_{\mathcal{S}}}\geq\frac{1}{2}\sum_{\mathcal{M\in C}(\mathcal{G})}\left\vert \mathcal{A}\left(\mathcal{M},\mathcal{G}\right)\right\vert \left(K_{\mathcal{M}}+E_{\mathcal{M}}\right)\label{eq:upper bound1}
\end{equation}

Similarly, we have
\begin{align}
\widetilde{E}_{\operatorname{sq}}(\mathcal{G})_{\Psi_{\mathcal{S}}} & =\sum_{\mathcal{K\in P}_{\geq2}(\mathcal{S})}\left(\widetilde{E}_{\operatorname{sq}}(\mathcal{G})_{\Phi_{\mathcal{K}}}+\widetilde{E}_{\operatorname{sq}}(\mathcal{G})_{\gamma_{\mathcal{K}}}\right)\\
 & =\sum_{\mathcal{M\in C}(\mathcal{G})}\left(\widetilde{E}_{\operatorname{sq}}(\mathcal{G})_{\Phi_{\mathcal{M}}}+\widetilde{E}_{\operatorname{sq}}(\mathcal{G})_{\gamma_{\mathcal{M}}}\right)
\end{align}
The proof is similar to the proof of \eqref{eq:upper bound1}. Using
the definition of $\tilde{I}(\mathcal{G})$ similar to
\eqref{eq:cmmi_S_2} (except for the conditioning system) and that
of the the multipartite squashed entanglement $\widetilde{E}_{\operatorname{sq}}$
in Definition \ref{def:Mult_Sq ent}, and a similar line of reasoning as given
before for (\ref{eq:gen-lower-bound-1})-(\ref{eq:gen-lower-bound-last}),
we obtain 
\begin{align}
2\widetilde{E}_{\operatorname{sq}}(\mathcal{G})_{\Phi_{\mathcal{M}}} & =\widetilde{I}(\mathcal{G})_{\Phi_{\mathcal{M}}}\\
 & =H(\mathcal{S})_{\Phi_{\mathcal{M}}}-\sum_{\mathcal{X\in G}}H\left(\mathcal{X}\cap\mathcal{M}|\left(\mathcal{S}\backslash\mathcal{X}\right)\cap\mathcal{M}\right)_{\Phi_{\mathcal{M}}}\\
 & =\sum_{\mathcal{X\in G}}H\left(\mathcal{X}\cap\mathcal{M}\right)_{\Phi_{\mathcal{M}}}\\
 & =\left\vert \mathcal{A}\left(\mathcal{M},\mathcal{G}\right)\right\vert E_{\mathcal{M}}.
\end{align}
By the same reasoning used to conclude\ (\ref{eq:multiparty-key-squash}),
we can conclude that
\begin{equation}
H\left(\left[\mathcal{M}\backslash\left[\mathcal{X\cap M}\right]\right]\left[\mathcal{M}^{\prime}\backslash\left[\mathcal{X}^{\prime}\mathcal{\cap M}^{\prime}\right]\right]|E\right)_{\gamma}
=K_{\mathcal{M}}+\sum_{i}\frac{1}{2^{K_{\mathcal{M}}}}H\left(\mathcal{M}^{\prime}\backslash\left[\mathcal{X}^{\prime}\mathcal{\cap M}^{\prime}\right]|E\right)_{\gamma^{i}}
\end{equation}
Consider that
\begin{equation}
\widetilde{E}_{\operatorname{sq}}(\mathcal{G})_{\gamma_{\mathcal{M}}}\geq\frac{1}{2}\left\vert \mathcal{A}\left(\mathcal{M},\mathcal{G}\right)\right\vert K_{\mathcal{M}}.\label{eq:tilde-I-private-state}
\end{equation}
This is because
\begin{align}
\widetilde{I}(\mathcal{GG}^{\prime}|E)_{\gamma_{\mathcal{MM}^{\prime}E}} & =\sum_{\mathcal{X\in G}}H\left(\left[\mathcal{M}\backslash\left[\mathcal{X\cap M}\right]\right]\left[\mathcal{M}^{\prime}\backslash\left[\mathcal{X}^{\prime}\mathcal{\cap M}^{\prime}\right]\right]|E\right)_{\gamma}\nonumber \\
 & \ \ \ \ -\left(\left\vert \mathcal{A}\left(\mathcal{M},\mathcal{G}\right)\right\vert -1\right)H\left(\mathcal{M}\mathcal{M}^{\prime}|E\right)_{\gamma}\\
 & =\left\vert \mathcal{A}\left(\mathcal{M},\mathcal{G}\right)\right\vert K_{\mathcal{M}}+\sum_{\mathcal{X\in G}}\sum_{i}\frac{1}{2^{K_{\mathcal{M}}}}H\left(\mathcal{M}^{\prime}\backslash\left[\mathcal{X}^{\prime}\mathcal{\cap M}^{\prime}\right]|E\right)_{\gamma^{i}}\nonumber \\
 & \ \ \ \ -\left(\left\vert \mathcal{A}\left(\mathcal{M},\mathcal{G}\right)\right\vert -1\right)\sum_{i}\frac{1}{2^{K_{\mathcal{M}}}}H\left(\mathcal{M}^{\prime}|E\right)_{\gamma^{i}}\\
 & =\left\vert \mathcal{A}\left(\mathcal{M},\mathcal{G}\right)\right\vert K_{\mathcal{M}}+\sum_{i}\frac{1}{2^{K_{\mathcal{M}}}}\widetilde{I}(\mathcal{G}^{\prime}|E)_{\gamma^{i}}\\
 & \geq\left\vert \mathcal{A}\left(\mathcal{M},\mathcal{G}\right)\right\vert K_{\mathcal{M}}
\end{align}
The reasons for these steps are similar to those used to justify (\ref{eq:squash-justify-1})-(\ref{eq:squash-justify-last}),
and we can then conclude (\ref{eq:tilde-I-private-state}). So this
implies that
\begin{equation}
\widetilde{E}_{\operatorname{sq}}(\mathcal{G})_{\Psi_{\mathcal{S}}}\geq\frac{1}{2}\sum_{\mathcal{M\in C}(\mathcal{G})}\left\vert \mathcal{A}\left(\mathcal{M},\mathcal{G}\right)\right\vert \left(K_{\mathcal{M}}+E_{\mathcal{M}}\right)\label{eq:uppbound 2}
\end{equation}
Equations \eqref{eq:upper bound1} and \eqref{eq:uppbound 2} together
conclude the proof.
\end{proof}

\section{Entanglement distillation and secret key agreement using a quantum
broadcast channel}

\label{sub:Quantum-broadcast-channels}A quantum broadcast channel
is a CPTP map $\mathcal{N}_{A\rightarrow B_{1}\cdots B_{m}}$ from
one sender $A$\ to multiple receivers $B_{1}$, \ldots{}, $B_{m}$.
Several communication tasks have already been considered for a quantum
broadcast channel \cite{YHD2006,DHL10,SW11,RSW14}, and in classical
information theory, the secret-key agreement capacity of certain classes
of point-to-multipoint noisy discrete memoryless channels has been
characterized \cite{CN08} and further generalizations have been obtained
as well \cite{GA10a,GA10b}.

In this work, we are interested in bounding the achievable entanglement
distillation and secret key agreement rates between any subset of
the parties when using a quantum broadcast channel an arbitrarily
large number of times, such that the sender and receivers are allowed
to engage in an arbitrary number of rounds of LOCC\ between each
channel use. It is customary to consider the paradigm of local operations
and classical communication (LOCC) for entanglement distillation and
local operations and public communication (LOPC) for secret key agreement.
However, as mentioned in Section~\ref{sec: locc pc}, the approximate
LOPC distillation of secret key is equivalent to the approximate LOCC
distillation of private states \cite{HHHO09,HHHO05}. Therefore, the
two tasks can be studied together under the common umbrella of LOCC.

Let us now define a general protocol for secret key agreement and
entanglement distillation by using a quantum broadcast channel and
LOCC. For simplicity, we begin by considering the case in which we
have a sender Alice and two receivers Bob and Charlie. The most general
$\left(n,E_{AB},E_{AC},E_{BC},E_{ABC},K_{AB},K_{AC},K_{BC},K_{ABC},\varepsilon\right)$
protocol to distill entanglement and secret key between all subsets
of the parties, where $\left(E_{AB},E_{AC},E_{BC},E_{ABC},K_{AB},K_{AC},K_{BC},K_{ABC}\right)$ denotes a rate tuple, involves the following steps:
\begin{enumerate}
\item Alice, Bob, and Charlie engage in a round of LOCC\ to prepare a state
$\rho_{A_{1}A_{1}^{\prime}B_{1}^{\prime}C_{1}^{\prime}}^{\left(1\right)}$.
Necessarily, this state is separable with respect to the cut $A_{1}A_{1}^{\prime}:B_{1}^{\prime}:C_{1}^{\prime}$.
Set $i=1$.
\item Alice transmits system $A_{i}$ through the broadcast channel $\mathcal{N}_{A_{i}\rightarrow B_{i}C_{i}}\equiv\mathcal{N}_{A\rightarrow BC}$,
leading to the state
\[
\sigma_{A_{i}^{\prime}B_{i}B_{i}^{\prime}C_{i}C_{i}^{\prime}}^{\left(i\right)}\equiv\mathcal{N}_{A_{i}\rightarrow B_{i}C_{i}}\left(\rho_{A_{i}A_{i}^{\prime}B_{i}^{\prime}C_{i}^{\prime}}^{\left(i\right)}\right).
\]
Thus, the primed registers $A_{i}^{\prime}B_{i}^{\prime}C_{i}^{\prime}$
represent ``scratch\textquotedblright \ registers of arbitrary size
that the three parties can use for their local processing.
\item Alice, Bob, and Charlie engage in a round of LOCC, leading to the
state $\rho_{A_{i+1}A_{i+1}^{\prime}B_{i+1}^{\prime}C_{i+1}^{\prime}}^{\left(i+1\right)}$.
Set $i:=i+1$.
\item If $i<n$, then go to step 2. Otherwise,\ Alice, Bob, and Charlie
engage in a final round of LOCC\ to produce a state $\Omega_{ABC}$.
The protocol is depicted in Figure~\ref{fig:protocol}. 
\end{enumerate}
%TCIMACRO{\FRAME{ftbpFU}{6.1713in}{1.4901in}{0pt}{\Qcb{A general protocol for%entanglement distillation and secret key agreement using LOCC and a quantum%broadcast channel $\QTR{cal}{N}_{A\rightarrow BC}$\ with one sender and two%receivers. The protocol uses the channel $n$ times, and the primed registers%represent \textquotedblleft scratch\textquotedblright\ registers that each%party uses for local processing. The state $\Omega_{ABC}$ at the end is%$\varepsilon$-close in trace distance to the ideal state given%in~(\ref{eq:example-ideal-state}).}}{\Qlb{fig:protocol}}{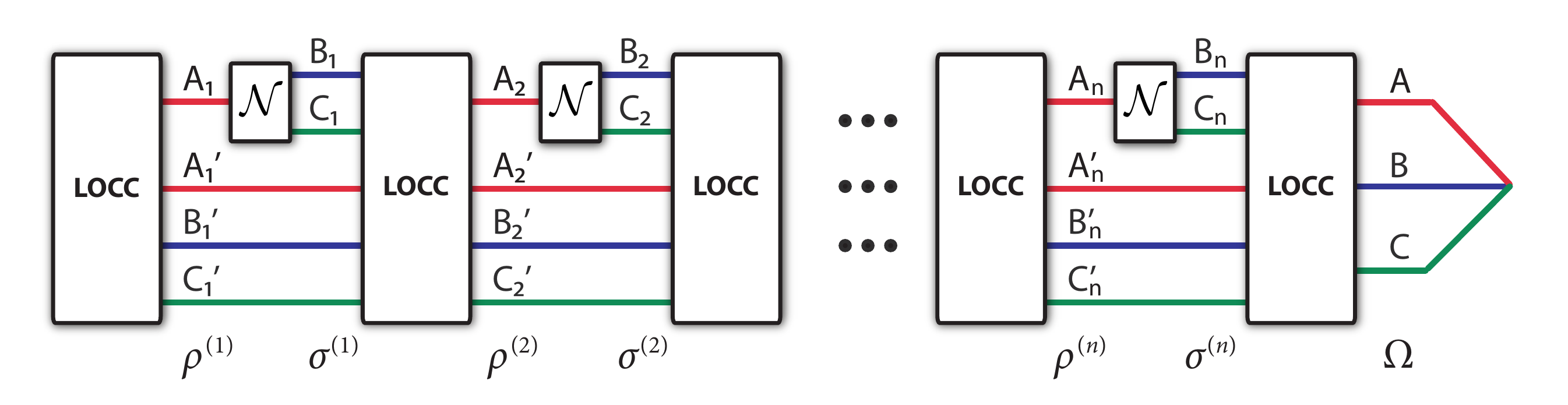}%%{\special{ language "Scientific Word";  type "GRAPHIC";%maintain-aspect-ratio TRUE;  display "USEDEF";  valid_file "F";%width 6.1713in;  height 1.4901in;  depth 0pt;  original-width 10.1071in;%original-height 2.4059in;  cropleft "0";  croptop "1";  cropright "1";%cropbottom "0";  filename 'protocol.pdf';file-properties "XNPEU";}}}%%BeginExpansion
\begin{figure}[ptb]
\centering{}\includegraphics[bb=0bp 0bp 10.107100in 2.405900in,width=6.1713in,height=1.4901in]{}\caption{A general protocol for entanglement distillation and secret key agreement
using LOCC and a quantum broadcast channel $\mathcal{N}_{A\rightarrow BC}$\ with
one sender and two receivers. The protocol uses the channel $n$ times,
and the primed registers represent ``scratch\textquotedblright \ registers
that each party uses for local processing. The state $\Omega_{ABC}$
at the end is $\varepsilon$-close in trace distance to the ideal
state given in~(\ref{eq:example-ideal-state}).}
\label{fig:protocol}
\end{figure}

%EndExpansion
At the end of the protocol, the state $\Omega_{ABC}$
is $\varepsilon$-close in trace distance to the ideal state $\Psi_{ABC}$\ given
in (\ref{eq:example-ideal-state}):
\begin{equation}
\left\Vert \Omega_{ABC}-\Psi_{ABC}\right\Vert _{1}\leq\varepsilon.\label{eq:protocol condition}
\end{equation}
Furthermore, the entanglement distillation and secret key agreement
rates (similar to (\ref{rate11}) and (\ref{rate12}), but with a factor of $1/n$ to take into account $n$ uses of the channel) are given as
\begin{align}
E_{AB} & \equiv\frac{1}{n}H\left(A_{1}\right)_{\Phi},\ \ \ \ \ E_{AC}\equiv\frac{1}{n}H\left(A_{2}\right)_{\Phi},\ \ \ \ \ E_{BC}\equiv\frac{1}{n}H\left(B_{3}\right)_{\Phi},\ \ \ \ \ E_{ABC}\equiv\frac{1}{n}H\left(A_{4}\right)_{\Phi},\\
K_{AB} & \equiv\frac{1}{n}H\left(A_{5}\right)_{\gamma},\ \ \ \ \ K_{AC}\equiv\frac{1}{n}H\left(A_{6}\right)_{\gamma},\ \ \ \ \ K_{BC}\equiv\frac{1}{n}H\left(B_{7}\right)_{\gamma},\ \ \ \ \ K_{ABC}\equiv\frac{1}{n}H\left(A_{8}\right)_{\gamma},
\end{align}
where the entropies are once again evaluated with respect to the ideal state
in (\ref{eq:example-ideal-state}) and for the private $\gamma$-states,
it is implicit that we are evaluating the entropies of the key systems
(so that the entropy is equal to the number of private bits in the
state).

A rate tuple $\left(E_{AB},E_{AC},E_{BC},E_{ABC},K_{AB},K_{AC},K_{BC},K_{ABC}\right)$\ is
achievable if for all $\varepsilon>0$ and sufficiently large $n$,
there exists an $\left(n,E_{AB},E_{AC},E_{BC},E_{ABC},K_{AB},K_{AC},K_{BC},K_{ABC},\varepsilon\right)$
protocol of the above form. The capacity region is defined to be the
closure of the set of all achievable rates.

The main goal of this paper is to give an outer bound on the capacity
region as defined above. As such, it is helpful to describe the action
of the channel in each round by an isometric extension $U_{A_{i}\rightarrow B_{i}C_{i}E_{i}}^{\mathcal{N}}$,
where $E_{i}$ is an environment system. Including the environment
systems, we then write the state at the conclusion of $i$ steps of
the protocol as $\sigma_{A_{i}^{\prime}B_{i}B_{i}^{\prime}C_{i}C_{i}^{\prime}E^{i}}^{\left(i\right)}$,
where $E^{i}\equiv E_{1}\cdots E_{i}$. It is also helpful to consider
a system $R^{\left(i\right)}$ that purifies the state before the
$i$th channel use, so that
\begin{equation}
\varphi_{A_{i}A_{i}^{\prime}B_{i}^{\prime}C_{i}^{\prime}E^{i-1}R^{\left(i\right)}}^{\rho^{\left(i\right)}}\label{eq:LOCC-input-purify}
\end{equation}
is a purification of $\rho_{A_{i}A_{i}^{\prime}B_{i}^{\prime}C_{i}^{\prime}E^{i-1}}^{\left(i\right)}$.
Let $\sigma_{A_{i}^{\prime}B_{i}B_{i}^{\prime}C_{i}C_{i}^{\prime}E^{i}R^{\left(i\right)}}^{\left(i\right)}$
denote the state which results from applying an isometric extension
$U_{A_{i}\rightarrow B_{i}C_{i}E_{i}}^{\mathcal{N}}$ of the channel
$\mathcal{N}_{A_{i}\rightarrow B_{i}C_{i}}$ to the purification $\varphi_{A_{i}A_{i}^{\prime}B_{i}^{\prime}C_{i}^{\prime}E^{i-1}R^{\left(i\right)}}^{\rho^{\left(i\right)}}$.

The generalization of the above protocol to multiple parties is straightforward,
so we only discuss the main points. The channel is $\mathcal{N}_{A\rightarrow B_{1}\cdots B_{m}}$
and let $\mathcal{S}=\left\{ A,B_{1},\ldots,B_{m}\right\} $. For
a given subset $\mathcal{K\in P}_{\geq2}(\mathcal{S})$,
let $K_{\mathcal{K}}$ denote the rate at which a $\left\vert \mathcal{K}\right\vert $\ multiparty
secret key can be distilled between the members of $\mathcal{K}$,
and let $E_{\mathcal{K}}$ denote the rate at which a $\left\vert \mathcal{K}\right\vert $\ multiparty
GHZ\ entangled state can be distilled between the members of $\mathcal{K}$.
The rate tuple is specified by $\left(E_{\mathcal{K}},K_{\mathcal{K}}\right)_{\mathcal{K\in P}_{\geq2}(\mathcal{S})}$.
After each round of LOCC, the state is $\rho_{A_{i}\mathcal{S}_{i}^{\prime}}^{\left(i\right)}$
and after each channel use, the state is $\sigma_{\left[\mathcal{S}_{i}\backslash A_{i}\right]\mathcal{S}_{i}^{\prime}}^{\left(i\right)}$.
The state generated after the last round of LOCC\ is $\Omega_{\mathcal{S}}$,
which is $\varepsilon$-close to the ideal state $\Psi_{\mathcal{S}}$
given in (\ref{eq:big-entangled-state}). Achievable rates and the
capacity region are defined in a similar way, and it is again helpful
to consider environments resulting from an isometric extension $U_{A\rightarrow B_{1}\cdots B_{m}E}^{\mathcal{N}}$\ of
the channel $\mathcal{N}_{A\rightarrow B_{1}\cdots B_{m}}$.

\section{Bounds on entanglement distillation and secret key agreement for
a two-receiver quantum broadcast channel}

\label{sec:Two Receiver UB}In this section, we establish constraints
on achievable rates for entanglement distillation and secret-key agreement
for a quantum broadcast channel with two receivers. The bounds are
given in terms of the squashed entanglement measures of Section~\ref{sub:Bi and Mul Sq Ent}.

\begin{theorem} \label{thm:two receivers}Let $\mathcal{N}_{A\rightarrow BC}$
be a quantum broadcast channel from a sender Alice to receivers Bob
and Charlie. If the rate tuple $\left(E_{AB},E_{AC},E_{BC},E_{ABC},K_{AB},K_{AC},K_{BC},K_{ABC}\right)$
is achievable, then there exists a pure state $\phi_{RA}$ with
\begin{equation}
\omega_{RBC}\equiv\mathcal{N}_{A\rightarrow BC}(\phi_{RA}),
\end{equation}
such that the following bounds hold
\begin{align}
E_{AB}+K_{AB}+E_{BC}+K_{BC}+E_{ABC}+K_{ABC} & \leq E_{\operatorname{sq}}\left(RC;B\right)_{\omega}\label{eq:AC cut B}\\
E_{AC}+K_{AC}+E_{BC}+K_{BC}+E_{ABC}+K_{ABC} & \leq E_{\operatorname{sq}}\left(RB;C\right)_{\omega}\label{eq:AB cut C}\\
E_{AB}+K_{AB}+E_{AC}+K_{AC}+E_{ABC}+K_{ABC} & \leq E_{\operatorname{sq}}\left(R;BC\right)_{\omega}\label{eq:A cut BC}\\
E_{AB}+K_{AB}+E_{AC}+K_{AC}+E_{BC}+K_{BC}\nonumber \\
+\frac{{3}}{2}\left(E_{ABC}+K_{ABC}\right) & \leq\min\left\{ E_{\operatorname{sq}}\left(R;B;C\right)_{\omega},\widetilde{E}_{\operatorname{sq}}\left(R;B;C\right)_{\omega}\right\} .\label{eq:ABC constr}
\end{align}
The dimension of system $R$ need not be any larger than the dimension
of the channel input. \end{theorem}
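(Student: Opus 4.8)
The plan is to bound, for each of the four nontrivial partitions $\mathcal{G}_1,\ldots,\mathcal{G}_4$ of $\mathcal{S}=\{A,B,C\}$, the corresponding multipartite squashed entanglement of the protocol output $\Omega_{ABC}$ by $n$ times a single-letter channel quantity, and then to combine this with the rate bounds already recorded for $\Psi_{ABC}$ in (\ref{eq:G1 E sq 2})--(\ref{eq:G4 E sq}). Since $\|\Omega_{ABC}-\Psi_{ABC}\|_1\le\varepsilon$, asymptotic continuity of $E_{\operatorname{sq}}$ and $\widetilde{E}_{\operatorname{sq}}$ gives $E_{\operatorname{sq}}(\mathcal{G}_k)_{\Psi}\le E_{\operatorname{sq}}(\mathcal{G}_k)_{\Omega}+f(\varepsilon,n)$, and combined with Lemma~\ref{lem:max ent state sq ent} specialized to $\mathcal{S}=\{A,B,C\}$, the left-hand sides of (\ref{eq:AC cut B})--(\ref{eq:ABC constr}) are each at most $\tfrac1n E_{\operatorname{sq}}(\mathcal{G}_k)_{\Omega}+f/n$. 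It therefore remains to single-letterize $E_{\operatorname{sq}}(\mathcal{G}_k)_{\Omega}$, and to check that the dimensions entering $f$ scale like $2^{n(\cdots)}$ so that $f/n\to 0$ as $\varepsilon\to0$.

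The heart of the argument is a per-round recursion obtained from the subadditivity Lemma~\ref{lem:TGW Mult}. I would apply it to the purified post-channel state $\sigma^{(i)}$ on $A_i'B_iB_i'C_iC_i'E^iR^{(i)}$ with the environment split $E_2=E_i$ (the fresh environment of the $i$th use) and $E_1=E^{i-1}R^{(i)}$ (everything purifying the input $A_i$). For the tripartition $\mathcal{G}_4$, taking $S=A_i'$, $P_1Q_1=B_i'B_i$, $P_2Q_2=C_i'C_i$ (old systems $P$, fresh outputs $Q$), inequality (\ref{eq:mult_TGW_1}) yields two terms. In the first, the group $SQ_1Q_2E_2=A_i'B_iC_iE_i$ is the isometric image $U^{\mathcal{N}}(A_i)$ adjoined to $A_i'$; applying the inverse isometry, which is local to Alice's side and leaves $E_{\operatorname{sq}}$ invariant, identifies this term with the \emph{pre-channel} quantity $E_{\operatorname{sq}}(A_iA_i';B_i';C_i')_{\rho^{(i)}}$. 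In the second term, $SP_1P_2E_1=A_i'B_i'C_i'E^{i-1}R^{(i)}$ is precisely a reference purifying the input $A_i$, so it equals the single-use quantity $E_{\operatorname{sq}}(R_i;B_i;C_i)_{\omega_i}$ with $\omega_i=\mathcal{N}_{A_i\to B_iC_i}(\phi_i)$. The same bookkeeping with $m=1$ and the appropriate grouping handles the bipartite cuts $\mathcal{G}_1,\mathcal{G}_2,\mathcal{G}_3$, producing single-use terms $E_{\operatorname{sq}}(R_i;B_iC_i)$, $E_{\operatorname{sq}}(R_iB_i;C_i)$, $E_{\operatorname{sq}}(R_iC_i;B_i)$; inequality (\ref{eq:mult_TGW_2}) does the analogous job for $\widetilde{E}_{\operatorname{sq}}$.

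Combining each recursion with LOCC monotonicity across the intervening rounds (local operations respect every grouping, so $E_{\operatorname{sq}}(\mathcal{G}_k)_{\rho^{(i+1)}}\le E_{\operatorname{sq}}(\mathcal{G}_k)_{\sigma^{(i)}}$), and using that the initial separable state and the final LOCC round contribute nothing, telescoping over $i=1,\ldots,n$ gives $E_{\operatorname{sq}}(\mathcal{G}_k)_{\Omega}\le\sum_{i=1}^n(\text{single-use term}_i)$. To convert the sum into $n$ copies of one single-letter quantity, I would introduce a uniform classical register $X$ recording the round index and form the round-averaged input; by Lemma~\ref{lem:sqent} the register splits off as an average when grouped on the reference side, and discarding the purifying copy of $X$ can only decrease $E_{\operatorname{sq}}$, so $\tfrac1n\sum_i(\text{single-use term}_i)$ is at most the single-use squashed entanglement of $\omega=\mathcal{N}_{A\to BC}(\phi_{RA})$ for the corresponding partition of $\{R,B,C\}$. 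Crucially, this $\phi_{RA}$ is built only from the averaged input and is therefore common to all four partitions; only the grouping of the outputs $B,C$ changes between the four bounds. Finally, since $E_{\operatorname{sq}}(R;BC)_{\mathcal{N}(\phi)}$ depends on $\phi_{RA}$ only through $\rho_A$ and is invariant under isometries on $R$, one may take $\dim R\le\dim A$.

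The step requiring the most care is the subadditivity application: one must verify that the split $E_1=E^{i-1}R^{(i)}$, $E_2=E_i$ makes the first term collapse, via a \emph{local} inverse isometry, exactly to the pre-channel state and the second into a genuine single-channel-use reference state, and that this works simultaneously for all four groupings. The second delicate point is the single-letterization producing one common pure $\phi_{RA}$ valid for all four inequalities at once, and confirming that the asymptotic-continuity correction is $o(n)$ even though the output dimension of $\Omega_{ABC}$ grows exponentially in $n$.
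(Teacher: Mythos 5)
Your proposal is correct and follows essentially the same route as the paper's proof: the ideal-state rate bounds (the three-party instances of Lemma~\ref{lem:max ent state sq ent}), asymptotic continuity to pass from $\Psi_{ABC}$ to $\Omega_{ABC}$, LOCC monotonicity plus Lemma~\ref{lem:TGW Mult} with exactly the split $E_2=E_i$, $E_1=E^{i-1}R^{(i)}$, inversion of the channel isometry to recurse, telescoping, and Lemma~\ref{lem:sqent} with a classical round-index register followed by purification and the Schmidt-rank argument for the dimension bound. The two points you flag as delicate are precisely the ones the paper handles, and in the same way (the continuity correction is $\varepsilon\cdot O(n)+o(n)$, hence vanishes after dividing by $n$ and taking $\varepsilon\searrow 0$, and the single common state arises because every partition's single-letterization uses the same averaged input $\tau$).
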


\begin{proof} It is important to realize that since we allow all
three parties to participate in each round of LOCC, the bounds we
give on these rates should involve all three of them. Consider an
arbitrary protocol as described in Section~\ref{sub:Quantum-broadcast-channels}.
We work our way backwards through the protocol, starting at the end
and unraveling it until we reach the beginning. The ideal state at
the end of the protocol is $\Psi_{ABC}$, as specified in (\ref{eq:example-ideal-state}),
and the actual state is $\Omega_{ABC}$, as described in Step~4 of
Section~\ref{sub:Quantum-broadcast-channels}. They are related by
(\ref{eq:protocol condition}). Recall the partitions of $\left\{ A,B,C\right\} $\ discussed
in (\ref{eq:partition-ABC-1})-(\ref{eq:partition-ABC-4}).

We begin by considering the constraint in $\left(\ref{eq:AC cut B}\right)$,
which corresponds to the partition $\mathcal{G}_{3}$ in (\ref{eq:partition-ABC-3}).
Consider that
\begin{align}
n\left(E_{AB}+K_{AB}+E_{BC}+K_{BC}+E_{ABC}+K_{ABC}\right) & \leq E_{\operatorname{sq}}\left(AC;B\right)_{\Psi}\\
 & \leq E_{\operatorname{sq}}\left(AC;B\right)_{\Omega}+f_{1}\left(n,\varepsilon\right).
\end{align}
The first inequality follows from (\ref{eq:part-g3}) and the second
follows from an application of the continuity of squashed entanglement
to (\ref{eq:protocol condition}), with $f_{i}\left(n,\varepsilon\right)$
a function such that $\lim_{\varepsilon\searrow0}\lim_{n\rightarrow\infty}\frac{1}{n}f_{i}\left(n,\varepsilon\right)=0$
(we will have more such functions later on). Continuing, we have that
\begin{align}
 & E_{\operatorname{sq}}\left(AC;B\right)_{\Omega}\nonumber \\
 & \leq E_{\operatorname{sq}}\left(A_{n}^{\prime}C_{n}^{\prime}C_{n};B_{n}^{\prime}B_{n}\right)_{\sigma^{\left(n\right)}}\\
 & \leq E_{\operatorname{sq}}\left(A_{n}^{\prime}C_{n}^{\prime}C_{n}B_{n}E_{n};B_{n}^{\prime}\right)_{\sigma^{\left(n\right)}}+E_{\operatorname{sq}}\left(A_{n}^{\prime}C_{n}^{\prime}C_{n}B_{n}^{\prime}E_{1}\cdots E_{n-1}R^{\left(n\right)};B_{n}\right)_{\sigma^{\left(n\right)}}\label{eq:iter-1}\\
 & =E_{\operatorname{sq}}\left(A_{n}^{\prime}C_{n}^{\prime}A_{n};B_{n}^{\prime}\right)_{\rho^{\left(n\right)}}+E_{\operatorname{sq}}\left(A_{n}^{\prime}C_{n}^{\prime}C_{n}B_{n}^{\prime}E_{1}\cdots E_{n-1}R^{\left(n\right)};B_{n}\right)_{\sigma^{\left(n\right)}}\\
 & \leq E_{\operatorname{sq}}\left(A_{n-1}^{\prime}C_{n-1}^{\prime}C_{n-1};B_{n-1}^{\prime}B_{n-1}\right)_{\sigma^{\left(n-1\right)}}+E_{\operatorname{sq}}\left(A_{n}^{\prime}C_{n}^{\prime}C_{n}B_{n}^{\prime}E_{1}\cdots E_{n-1}R^{\left(n\right)};B_{n}\right)_{\sigma^{\left(n\right)}}\label{eq:iter-3}\\
 & \leq\sum_{i=1}^{n}E_{\operatorname{sq}}\left(A_{i}^{\prime}C_{i}^{\prime}C_{i}B_{i}^{\prime}E_{1}\cdots E_{i-1}R^{\left(i\right)};B_{i}\right)_{\sigma^{\left(i\right)}}.\label{eq:two receiver G1 final}
\end{align}
The first inequality follows from monotonicity of the squashed entanglement
under LOCC. The second inequality follows from applying Lemma~\ref{lem:TGW Mult}.
The equality follows from the fact that systems $A_{n}$ and $B_{n}C_{n}E_{n}$
are related by an isometry (i.e., an isometric extension of the channel).
The third inequality is again monotonicity under LOCC. To conclude
the final inequality, we repeat (\ref{eq:iter-1})-(\ref{eq:iter-3})
iteratively. Putting the two inequality chains together, we find that
\begin{align}
E_{AB}+K_{AB} & +E_{BC}+K_{BC}+E_{ABC}+K_{ABC}\nonumber \\
 & \leq\frac{1}{n}\sum_{i=1}^{n}E_{\operatorname{sq}}\left(A_{i}^{\prime}C_{i}^{\prime}C_{i}B_{i}^{\prime}E_{1}\cdots E_{i-1}R^{\left(i\right)};B_{i}\right)_{\sigma^{\left(i\right)}}+\frac{1}{n}f_{1}\left(n,\varepsilon\right)\\
 & =E_{\operatorname{sq}}\left(QSC;B\right)_{\tau}+\frac{1}{n}f_{1}\left(n,\varepsilon\right),\label{eq:app grp lemma}
\end{align}
where
\begin{equation}
\tau_{QSBC}\equiv\sum_{i=1}^{n}\frac{{1}}{n}\vert i\rangle \langle i\vert _{Q}\otimes\mathcal{N}_{A\rightarrow BC}\left(\varphi_{A_{i}^{\prime}B_{i}^{\prime}C_{i}^{\prime}E_{1}\cdots E_{i-1}R^{\left(i\right)}A}^{\left(i\right)}\right),\label{eq:single letter state}
\end{equation}
and $\varphi_{A_{i}^{\prime}B_{i}^{\prime}C_{i}^{\prime}E_{1}\cdots E_{i-1}R^{\left(i\right)}A}^{\left(i\right)}$
is the purification defined in (\ref{eq:LOCC-input-purify}), with
it understood that systems $B_{i}C_{i}$ are isomorphic to systems
$BC$. In the above, $Q$ is a time-sharing or auxiliary classical
system, and $S$ is a register with size
\begin{equation}
\left\vert S\right\vert \geq\max_{i}\left\vert A_{i}^{\prime}C_{i}^{\prime}B_{i}^{\prime}E_{1}\cdots E_{i-1}R^{\left(i\right)}\right\vert ,
\end{equation}
such that it is large enough to contain the largest of the systems
$A_{i}^{\prime}C_{i}^{\prime}B_{i}^{\prime}E_{1}\cdots E_{i-1}R^{\left(i\right)}$
(and simply padded with zeros for smaller systems). Observe that the
state $\tau_{QSBC}$ is constructed from the given protocol. The equality
in $\left(\ref{eq:app grp lemma}\right)$ follows from the application
of Lemma~\ref{lem:sqent}. Thus, we arrive at a single-letter bound.

A similar line of reasoning leads to the following inequalities:
\begin{align}
E_{AC}+K_{AC}+E_{BC}+K_{BC}+E_{ABC}+K_{ABC} & \leq E_{\operatorname{sq}}\left(QSB;C\right)_{\tau}+\frac{1}{n}f_{2}\left(n,\varepsilon\right),\\
E_{AB}+K_{AB}+E_{AC}+K_{AC}+E_{ABC}+K_{ABC} & \leq E_{\operatorname{sq}}\left(QS;BC\right)_{\tau}+\frac{1}{n}f_{3}\left(n,\varepsilon\right),
\end{align}
where we observe that the constraints involve the same state $\tau_{QSBC}$
from (\ref{eq:single letter state}).

We now consider the constraint in $\left(\ref{eq:ABC constr}\right)$.
The reasoning that follows holds for both multipartite squashed entanglements
$E_{\operatorname{sq}}$ and $\widetilde{E}_{\operatorname{sq}}$.
The entanglement distillation and secret key agreement rates of any
protocol can be upper bounded as follows: 
\begin{multline}
n\left(E_{AB}+K_{AB}+E_{AC}+K_{AC}+E_{BC}+K_{BC}+\frac{{3}}{2}\left[E_{ABC}+K_{ABC}\right]\right)\\
\leq E_{\operatorname{sq}}(A;B;C)_{\Psi}\leq E_{\operatorname{sq}}(A;B;C)_{\Omega}+f_{4}\left(n,\varepsilon\right),
\end{multline}
where the first inequality is a consequence of (\ref{eq:G4 E sq})
and the second from applying continuity of squashed entanglement to
(\ref{eq:protocol condition}). Continuing, we find that
\begin{align}
 & E_{\operatorname{sq}}(A;B;C)_{\Omega}\nonumber \\
 & \leq E_{\operatorname{sq}}\left(A_{n}^{\prime};B_{n}^{\prime}B_{n};C_{n}^{\prime}C_{n}\right)_{\sigma^{\left(n\right)}}\\
 & \leq E_{\operatorname{sq}}\left(A_{n}^{\prime}C_{n}B_{n}E_{n};B_{n}^{\prime};C_{n}^{\prime}\right)_{\sigma^{\left(n\right)}}+E_{\operatorname{sq}}\left(A_{n}^{\prime}B_{n}^{\prime}C_{n}^{\prime}E_{1}\cdots E_{n-1}R^{\left(n\right)};B_{n};C_{n}\right)_{\sigma^{\left(n\right)}}\\
 & =E_{\operatorname{sq}}\left(A_{n}^{\prime}A_{n};B_{n}^{\prime};C_{n}^{\prime}\right)_{\rho^{\left(n\right)}}+E_{\operatorname{sq}}\left(A_{n}^{\prime}B_{n}^{\prime}C_{n}^{\prime}E_{1}\cdots E_{n-1}R^{\left(n\right)};B_{n};C_{n}\right)_{\sigma^{\left(n\right)}}\\
 & \leq E_{\operatorname{sq}}\left(A_{n-1}^{\prime};B_{n-1}^{\prime}B_{n-1};C_{n-1}^{\prime}C_{n-1}\right)_{\sigma^{\left(n-1\right)}}+E_{\operatorname{sq}}\left(A_{n}^{\prime}B_{n}^{\prime}C_{n}^{\prime}E_{1}\cdots E_{n-1}R^{\left(n\right)};B_{n};C_{n}\right)_{\sigma^{\left(n\right)}}\\
 & \leq\sum_{i=1}^{n}E_{\operatorname{sq}}\left(A_{i}^{\prime}B_{i}^{\prime}C_{i}^{\prime}E_{1}\cdots E_{i-1}R^{\left(i\right)};B_{i};C_{i}\right)_{\sigma^{\left(i\right)}}.
\end{align}
Putting the above two inequality chains together, we find that
\begin{multline}
E_{AB}+K_{AB}+E_{AC}+K_{AC}+E_{BC}+K_{BC}+\frac{{3}}{2}\left(E_{ABC}+K_{ABC}\right)\\
\leq\frac{1}{n}\sum_{i=1}^{n}E_{\operatorname{sq}}\left(A_{i}^{\prime}B_{i}^{\prime}C_{i}^{\prime}E_{1}\cdots E_{i-1}R^{\left(i\right)};B_{i};C_{i}\right)_{\sigma^{\left(i\right)}}+\frac{1}{n}f_{4}\left(n,\varepsilon\right)\\
=E_{\operatorname{sq}}\left(QS;B;C\right)_{\tau}+\frac{1}{n}f_{4}\left(n,\varepsilon\right),\label{eq:state-rewrite}
\end{multline}
where $\tau$ is defined in $\left(\ref{eq:single letter state}\right)$.
By the same reasoning, we have that
\begin{multline}
E_{AB}+K_{AB}+E_{AC}+K_{AC}+E_{BC}+K_{BC}+\frac{{3}}{2}\left(E_{ABC}+K_{ABC}\right)\\
\leq\widetilde{E}_{\operatorname{sq}}\left(QS;B;C\right)_{\tau}+\frac{1}{n}f_{5}\left(n,\varepsilon\right),
\end{multline}
Note that unlike in the bipartite case, in the multipartite case with
three or more parties, we have the two possible squashed entanglement
measures $\widetilde{E}_{\operatorname{sq}}$ and $E_{\operatorname{sq}}$.
Since in general they are incomparable, either could give a tighter
bound.

The assumption that the rate tuple $\left(E_{AB},E_{AC},E_{BC},E_{ABC},K_{AB},K_{AC},K_{BC},K_{ABC}\right)$\ is
achievable implies that we can take $\varepsilon\searrow0$ as $n\rightarrow\infty$.
So we have shown that the rate tuple satisfies (\ref{eq:AC cut B})-(\ref{eq:ABC constr})
for some input state $\rho_{RA}$ and $\omega_{RBC}\equiv\mathcal{N}_{A\rightarrow BC}(\rho_{RA})$.
Let $\phi_{R^{\prime}RA}^{\rho}$ be a purification of $\rho_{RA}$
and let $\omega_{R^{\prime}RBC}\equiv\mathcal{N}_{A\rightarrow BC}(\phi_{R^{\prime}RA}^{\rho})$.
By monotonicity of squashed entanglement under quantum operations,
we have that
\begin{align}
E_{\operatorname{sq}}\left(RC;B\right)_{\omega} & \leq E_{\operatorname{sq}}\left(R^{\prime}RC;B\right)_{\omega},\\
E_{\operatorname{sq}}\left(RB;C\right)_{\omega} & \leq E_{\operatorname{sq}}\left(R^{\prime}RB;C\right)_{\omega},\\
E_{\operatorname{sq}}\left(R;BC\right)_{\omega} & \leq E_{\operatorname{sq}}\left(R^{\prime}R;BC\right)_{\omega},\\
E_{\operatorname{sq}}\left(R;B;C\right)_{\omega} & \leq E_{\operatorname{sq}}\left(R^{\prime}R;B;C\right)_{\omega},\\
\widetilde{E}_{\operatorname{sq}}\left(R;B;C\right)_{\omega} & \leq\widetilde{E}_{\operatorname{sq}}\left(R^{\prime}R;B;C\right)_{\omega}.
\end{align}
By the Schmidt decomposition, note that we can take $\left\vert RR^{\prime}\right\vert =\left\vert A\right\vert $.
The dimension bound appearing in the statement of theorem comes about
by redefining $R:=RR^{\prime}$. \end{proof}

\section{Bounds on entanglement distillation and secret key agreement for
an $m$-receiver quantum broadcast channel}

\label{sec:m receiver UB}Let $\mathcal{N}_{A\rightarrow B_{1}\cdots B_{m}}$
be a quantum broadcast channel with one sender $A$ and $m$ receivers
$B_{1}$, \ldots{}, $B_{m}$. Let $\mathcal{S}=\left\{ R,B_{1},\ldots,B_{m}\right\} $,
where $R$ is a system that the sender possesses.\ For a given subset
$\mathcal{K\in P}_{\geq2}(\mathcal{S})$, let $K_{\mathcal{K}}$
denote the rate at which a $\left\vert \mathcal{K}\right\vert $\ multiparty
secret key can be distilled between the members of $\mathcal{K}$,
and let $E_{\mathcal{K}}$ denote the rate at which a $\left\vert \mathcal{K}\right\vert $\ multiparty
GHZ\ entangled state can be distilled between the members of $\mathcal{K}$.
We now state our main theorem:

\begin{theorem} \label{thm:m receivers}If the rates $\left(K_{\mathcal{K}},E_{\mathcal{K}}\right)_{\mathcal{K\in P}_{\geq2}(\mathcal{S})}$
are achievable, then there exists a pure state $\phi_{RA}$ with 
\begin{equation}
\omega_{RB_{1}\cdots B_{m}}\equiv\mathcal{N}_{A\rightarrow B_{1}\cdots B_{m}}(\phi_{RA}),
\end{equation}
such that the following bounds hold. For all partitions $\mathcal{G}$
of $\mathcal{S}$, 
\begin{equation}
\frac{1}{2}\sum_{\mathcal{M\in C}(\mathcal{G})}\left\vert \mathcal{A}\left(\mathcal{M},\mathcal{G}\right)\right\vert \left(K_{\mathcal{M}}+E_{\mathcal{M}}\right)\leq\min\left\{ E_{\operatorname{sq}}(\mathcal{G})_{\omega},\ \widetilde{E}_{\operatorname{sq}}(\mathcal{G})_{\omega}\right\} ,
\end{equation}
and
\begin{equation}
\mathcal{A}\left(\mathcal{M},\mathcal{G}\right)\equiv\left\{ X\cap\mathcal{M}\ |\ X\in\mathcal{G}\right\} \backslash\{\emptyset\}.
\end{equation}
The dimension of system $R$ need not be any larger than the dimension
of the channel input. \end{theorem}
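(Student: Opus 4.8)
The plan is to follow the template of the proof of Theorem~\ref{thm:two receivers}, but to replace the four explicit partitions of a three-element set by an arbitrary nontrivial partition $\mathcal{G}$ of $\mathcal{S}$ and to replace the hand-computed rate combinations by the general lower bound of Lemma~\ref{lem:max ent state sq ent}. Fix such a $\mathcal{G}$ and let $\mathcal{X}_0$ be the block containing the sender's reference system (Alice's block). First I would apply Lemma~\ref{lem:max ent state sq ent} to the ideal target state $\Psi_{\mathcal{S}}$ of \eqref{eq:big-entangled-state}, which at the end of an $n$-use protocol carries $nE_{\mathcal{M}}$ entangled bits and $nK_{\mathcal{M}}$ private bits for each $\mathcal{M}\in\mathcal{C}(\mathcal{G})$; this gives
\[
\tfrac{n}{2}\sum_{\mathcal{M}\in\mathcal{C}(\mathcal{G})}\left\vert\mathcal{A}(\mathcal{M},\mathcal{G})\right\vert\left(K_{\mathcal{M}}+E_{\mathcal{M}}\right)\le\min\left\{E_{\operatorname{sq}}(\mathcal{G})_{\Psi_{\mathcal{S}}},\ \widetilde{E}_{\operatorname{sq}}(\mathcal{G})_{\Psi_{\mathcal{S}}}\right\}.
\]
Running the remaining argument separately for $E_{\operatorname{sq}}$ and $\widetilde{E}_{\operatorname{sq}}$, I would then use the asymptotic continuity of both measures together with $\left\Vert\Omega_{\mathcal{S}}-\Psi_{\mathcal{S}}\right\Vert_1\le\varepsilon$ to pass from $\Psi_{\mathcal{S}}$ to the actual output state $\Omega_{\mathcal{S}}$, incurring an additive term $f(n,\varepsilon)$ with $\tfrac1n f(n,\varepsilon)\to0$ as $\varepsilon\searrow0$, $n\to\infty$.

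The core of the argument is the backward peeling of the $n$ channel uses, which I would carry out uniformly in $\mathcal{G}$. At the last step, monotonicity under the final LOCC round lets me pass from $\Omega_{\mathcal{S}}$ to the post-channel state $\sigma^{(n)}$ with all systems grouped according to $\mathcal{G}$. I would then invoke the subadditivity inequality of Lemma~\ref{lem:TGW Mult}, with the blocks of $\mathcal{G}$ in the role of the receiver-parties: take $S$ to be Alice's scratch together with both the scratch and the channel outputs of whatever receivers lie in $\mathcal{X}_0$; for each remaining block $\mathcal{X}_j$ let $P_j$ be its scratch and $Q_j$ its current channel outputs $\{B_{l,n}:B_l\in\mathcal{X}_j\}$; and take $E_2=E_n$ and $E_1=E_1\cdots E_{n-1}R^{(n)}$ as the two purifying environments of the pure state $\sigma^{(n)}$. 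In the resulting first term the group $SQ_{[k]}E_n$ contains every channel output together with $E_n$, which is isometric to Alice's input $A_n$; hence this term equals the corresponding quantity evaluated on the pre-channel state $\rho^{(n)}$, and one more use of LOCC monotonicity reduces it to the identical expression at step $n-1$, so the procedure iterates. The second term, in which the fresh channel outputs appear as separate parties while all scratch, environment, and purification systems are absorbed into Alice's block, is the single-use contribution that accumulates as $i$ runs from $1$ to $n$.

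With the telescoped sum in hand, I would single-letterize exactly as in \eqref{eq:single letter state}--\eqref{eq:app grp lemma}: introduce a classical time-sharing register $Q$ and a register $S$ large enough to hold every Alice-side system, and apply Lemma~\ref{lem:sqent} to rewrite $\tfrac1n\sum_i E_{\operatorname{sq}}(\mathcal{G})_{\sigma^{(i)}}$ as $E_{\operatorname{sq}}(\mathcal{G})_{\tau}$, and likewise for $\widetilde{E}_{\operatorname{sq}}$, where $\tau_{QSB_1\cdots B_m}=\mathcal{N}_{A\to B_1\cdots B_m}(\varphi_{QSA})$ and $\mathcal{G}$ is read on $\tau$ with $R:=QS$. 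Letting $\varepsilon\searrow0$ and $n\to\infty$ eliminates the $\tfrac1n f(n,\varepsilon)$ terms and yields the two bounds whose minimum is the claimed inequality, with $\omega_{RB_1\cdots B_m}=\mathcal{N}_{A\to B_1\cdots B_m}(\phi_{RA})$. The dimension statement then follows, as in the two-receiver case, by purifying $\rho_{RA}$ to $\phi_{R'RA}^{\rho}$, using monotonicity under discarding $R'$, and invoking the Schmidt decomposition to take $\left\vert RR'\right\vert=\left\vert A\right\vert$ before relabelling $R:=RR'$.

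The step I expect to be the main obstacle is the peeling, and within it the handling of the block $\mathcal{X}_0$ that contains the sender. Because a partition may group receivers together with Alice, the outputs of those receivers must be kept inside $S$ throughout, so that together with the remaining outputs and $E_n$ they reconstitute the channel input under the isometry; the bookkeeping that makes the choice of $S$, $P_j$, $Q_j$ simultaneously compatible with the subadditivity step, the isometric identification, and the LOCC reduction must be verified for every partition rather than only for the few partitions of a three-element set. This reduces to checking that Lemma~\ref{lem:TGW Mult}, although phrased for individual receiver systems, applies verbatim when each $P_j$ and $Q_j$ is the composite scratch and output of an entire block of $\mathcal{G}$ (alternatively, one may combine it with the grouping monotonicity of Lemma~\ref{lem:Grouping systems}).
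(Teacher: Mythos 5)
Your proposal is correct and follows essentially the same route as the paper's own (sketched) proof: apply Lemma~\ref{lem:max ent state sq ent} to the ideal state $\Psi_{\mathcal{S}}$, pass to $\Omega_{\mathcal{S}}$ by continuity, then peel back the protocol via LOCC monotonicity, the subadditivity of Lemma~\ref{lem:TGW Mult}, and the isometric identification $B_{1,i}\cdots B_{m,i}E_{i}\leftrightarrow A_{i}$, iterating and single-letterizing with Lemma~\ref{lem:sqent} before restricting to a pure input via the Schmidt decomposition. Your block-wise assignment of $S$, $P_{j}$, $Q_{j}$, $E_{1}$, $E_{2}$ is exactly the bookkeeping the paper leaves implicit, and your concern about composite blocks is harmless since Lemma~\ref{lem:TGW Mult} places no restriction on the systems $S$, $P_{j}$, $Q_{j}$ being elementary.
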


\begin{proof} We sketch a proof of this theorem, which proceeds along
the lines of reasoning employed in proving Theorem~\ref{thm:two receivers}.
The proof involves the following steps, again working backwards through
the protocol:
\begin{enumerate}
\item Let $\mathcal{G}$ be a partition of $\mathcal{S}$. The ideal state
at the end of the protocol is $\Psi_{\mathcal{S}}$, as given in (\ref{eq:big-entangled-state}).
Lemma~\ref{lem:max ent state sq ent} establishes the following bound:
\begin{equation}
\frac{1}{2}\sum_{\mathcal{M\in C}(\mathcal{G})}\left\vert \mathcal{A}\left(\mathcal{M},\mathcal{G}\right)\right\vert \left(K_{\mathcal{M}}+E_{\mathcal{M}}\right)\leq\min\left\{ E_{\operatorname{sq}}(\mathcal{G})_{\Psi_{\mathcal{S}}},\ \widetilde{E}_{\operatorname{sq}}(\mathcal{G})_{\Psi_{\mathcal{S}}}\right\} .
\end{equation}

\item The actual state generated by the protocol is $\Omega_{\mathcal{S}}$,
as specified in Section~\ref{sub:Quantum-broadcast-channels}. Use
the fact that $\Omega_{\mathcal{S}}$ is $\varepsilon$-close to $\Psi_{\mathcal{S}}$
and the continuity of squashed entanglement to establish that
\begin{align}
E_{\operatorname{sq}}(\mathcal{G})_{\Psi_{\mathcal{S}}} & \leq E_{\operatorname{sq}}(\mathcal{G})_{\Omega_{\mathcal{S}}}+f_{\mathcal{G}}^{1}\left(n,\varepsilon\right),\\
\widetilde{E}_{\operatorname{sq}}(\mathcal{G})_{\Psi_{\mathcal{S}}} & \leq\widetilde{E}_{\operatorname{sq}}(\mathcal{G})_{\Omega_{\mathcal{S}}}+f_{\mathcal{G}}^{2}\left(n,\varepsilon\right),
\end{align}
for $f_{\mathcal{G}}^{i}\left(n,\varepsilon\right)$ some function
with the property that $\lim_{\varepsilon\searrow0}\lim_{n\rightarrow\infty}\frac{1}{n}f_{\mathcal{G}}^{i}\left(n,\varepsilon\right)=0$.
\item Use the fact that the squashed entanglement is non-increasing under
LOCC.
\item Invoke the subadditivity lemma (Lemma \ref{lem:TGW Mult}).
\item Invert the action of the channel on the $i$th input to replace $B_{i}C_{i}E_{i}\rightarrow A_{i}$
since the systems $B_{i}C_{i}E_{i}$ and $A_{i}$ are related by an
isometric extension of the channel.
\item Iterate Steps 3-5 for every step of the protocol.
\item As in (\ref{eq:state-rewrite}), use Lemma~\ref{lem:sqent} to rewrite
the linear sum of squashed entanglements as the squashed entanglement
of a single state to obain a single letter bound. We can use the same
reasoning as at the end of the proof of Theorem~\ref{thm:two receivers}
to restrict the input state to be pure. 
\end{enumerate}
\noindent That concludes the proof sketch. \end{proof}

\section{Application to a pure-loss bosonic broadcast channel}

\label{sec:BBC}We now apply our results to a pure-loss bosonic broadcast
channel, generalizing prior results for the single-sender single-receiver
case \cite{TGW14IEEE,TGW14Nat}. For simplicity, we consider a one-sender
two-receiver channel from a sender Alice to receivers Bob and Charlie.
However, note that the methods given here can be combined with Theorem~\ref{thm:m receivers}\ to
determine bounds for an arbitrary number of receivers. A pure-loss
bosonic channel of the above type can be modeled as
\begin{align}
\label{eq:QBC_in-out1}
\hat{b} & =  \sqrt{\eta_B} \, \hat{a}' 
+ \sqrt{\frac{\eta_B (1-\eta_B-\eta_C)}{\eta_B+\eta_C}} \, \hat{f} 
+ \sqrt{\frac{\eta_C}{\eta_B+\eta_C}} \, \hat{g} ,
\\
\label{eq:QBC_in-out2}
\hat{c} & =  -\sqrt{\eta_C} \, \hat{a}' 
- \sqrt{\frac{\eta_C (1-\eta_B-\eta_C)}{\eta_B+\eta_C}} \, \hat{f} 
+ \sqrt{\frac{\eta_C}{\eta_B+\eta_C}} \, \hat{g} ,
\\
\label{eq:QBC_in-out3}%\label{eq:bosonic-broadcast-1-to-2}
\hat{e} & =  -\sqrt{1-\eta_B-\eta_C} \, \hat{a}' 
+ \sqrt{\eta_B+\eta_C} \, \hat{f} ,
\end{align}
%%%%%%%%%%%%%%%%%%%%%%%%%%%
where $\hat{a}'$, $\hat{b}$, $\hat{c}$, $\hat{e}$ are 
annihilation operators for Alice's input, Bob's output, and Charlie's output 
modes, respectively, $\hat{f}$ and $\hat{g}$ are annihilation operators 
for vacuum inputs from the environment, and the $\eta_B,\ \eta_C>0$ are transmission coefficients such that $\eta_B+\eta_C\leq 1$. The model generalizes the bosonic
broadcast channel from prior work \cite{GS07,GSE07}, in that all
of the light does not necessarily make it to the two receivers and
that which does not is given to the eavesdropper.

In any protocol for entanglement distillation and secret key agreement,
we assume that the final step of the protocol outputs a finite-dimensional
state, i.e., the goal is to generate maximally entangled states of
finite Schmidt rank and finite-dimensional private states. This is
a common approach in continuous-variable quantum information theory
\cite{HW01,GGLMSY04,WHG11,WHG12}, simply because both quantum capacity
and private capacity are measured in qubits and private bits per channel
use, respectively. This approach furthermore provides a mathematical
convenience:\ the only aspect of our analysis here which requires
finite-dimensional states is when we apply continuity of squashed
entanglement at the end of the protocol. All other steps rely on properties
of entropy or the quantum data processing inequality, which is known
to hold in very general settings \cite{U77}. Furthermore, we begin
by assuming that each channel input has a mean photon number constraint
$\left\langle \hat{a}^{\dagger}\hat{a}\right\rangle \leq N_{S}$ for
some $N_{S}$ such that $0\leq N_{S}<\infty$, but we eventually take
a limit as $N_{S}\rightarrow\infty$, indicating that our bounds are
photon-number independent as is the case in \cite{TGW14IEEE,TGW14Nat}.

We then need to determine the (multipartite) squashed entanglement.
In this regard, it is not necessarily an easy task to optimize over
all possible squashing channels of Eve. However, since any squashing
channel can be used to give an upper bound for the rates, we choose
to optimize over squashing channels that are also pure-loss bosonic
channels, modeled by a beamsplitter given by the following mode transformation:
\begin{equation}
\hat{e}\rightarrow\sqrt{\eta_{E^{\prime}}}\hat{e}^{\prime}+\sqrt{1-\eta_{E^{\prime}}}\hat{f}^{\prime},\quad\eta_{E^{\prime}}\in\left[0,1\right].\label{eq:bosonic-squashing-channel}
\end{equation}

We begin by stating and proving the following proposition, which is
more general than what we need, but the proof indicates a general
approach that we employ to establish the main theorem of this section.

\begin{proposition} \label{prop:bosonic-multi-squash} Let $\mathcal{N}_{A\rightarrow B_1B_2\ldots B_m}$ be a pure-loss bosonic
broadcast channel from a sender $A$ to $m$ receivers $B_{1}$,
\ldots{}, $B_{m}$ with transmission coefficients $\eta_i\geq 0$ for all $i\in\left\{ 1,\ldots,m\right\} $, such that 
\begin{equation}
\eta\equiv\sum_{i=1}^{m}\eta_{i}\leq 1.
\end{equation}
Then the following upper bound holds for the squashed entanglements
of the bosonic broadcast channel
\begin{multline}
\max\left\{ \sup_{\phi_{RA}}E_{\operatorname{sq}}\left(R;B_{1};\cdots;B_{m}\right)_{\omega},\ \sup_{\phi_{RA}}\widetilde{E}_{\operatorname{sq}}\left(R;B_{1};\cdots;B_{m}\right)_{\omega}\right\} \\
\leq\frac{1}{2}\left[\sum_{i=1}^{m}\log\left(\frac{\eta_{i}}{\left(1-\eta\right)\eta_{E^{\prime}}^{\ast}}+1\right)+\log\left(\frac{\eta}{\left(1-\eta\right)\left(1-\eta_{E^{\prime}}^{\ast}\right)}+1\right)\right],\label{eq:general-bosonic-bound}
\end{multline}
where
\begin{equation}
\omega_{RB_{1}\cdots B_{m}}\equiv\mathcal{N}_{A\rightarrow B_{1}\cdots B_{m}}\left(\phi_{RA}\right),
\end{equation}
and $\eta_{E^{\prime}}^{\ast}$
is the solution of
\begin{equation}
\sum_{i=1}^{m}\frac{1}{\eta_{E^{\prime}}^{2}\left(1-\eta\right)/\eta_{i}+\eta_{E^{\prime}}}=\frac{1}{\left(1-\eta_{E^{\prime}}\right)^{2}\left(1-\eta\right)/\eta+1-\eta_{E^{\prime}}}.
\end{equation}

\end{proposition}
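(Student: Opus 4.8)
The plan is to upper-bound both multipartite squashed entanglements by a \emph{single} conditional multipartite information evaluated on a carefully chosen squashing channel, and then to optimize the Gaussian bound over the squashing parameter $\eta_{E'}$. First I would restrict attention to the bosonic squashing channel in (\ref{eq:bosonic-squashing-channel}), which acts on Eve's mode $\hat{e}$ by a beamsplitter of transmissivity $\eta_{E'}$, producing output modes $E'$ and $F'$. Since the squashed entanglements are infima over \emph{all} squashing channels, any fixed choice yields a valid upper bound; this immediately gives
\begin{equation}
E_{\operatorname{sq}}(R;B_1;\cdots;B_m)_\omega \leq \tfrac{1}{2} I(R;B_1;\cdots;B_m|E')_{\theta}, \nonumber
\end{equation}
and the analogous bound for $\widetilde{E}_{\operatorname{sq}}$ with $\widetilde{I}$, where $\theta$ is the global state after applying the isometric channel extension followed by the squashing beamsplitter. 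The key simplifying observation is that for this pure-loss model, each receiver output $B_i$ conditioned on $E'$ behaves like the output of an independent single-mode lossy channel with an effective transmissivity, so the two multipartite information quantities $I$ and $\widetilde{I}$ should coincide (or be bounded by the same expression) on the relevant Gaussian extremal state; I would verify this by writing out both via (\ref{eq:cmmi_I}) and (\ref{eq:cmmi_S}) and checking that the cross terms match under the symmetry of the beamsplitter network.

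Next I would invoke Gaussian extremality: for a fixed mean-photon-number constraint $N_S$, the conditional entropies $H(B_i|E')$ and $H(B_1\cdots B_m|E')$ appearing in $I$ and $\widetilde{I}$ are maximized (in the relevant direction) by a two-mode-squeezed-vacuum input $\phi_{RA}$, so the optimization over $\phi_{RA}$ reduces to a Gaussian computation. Tracking the mode transformations (\ref{eq:QBC_in-out1})--(\ref{eq:QBC_in-out3}) composed with (\ref{eq:bosonic-squashing-channel}), each term $\tfrac12 I(R;B_i|E')$ contributes a single-mode lossy-channel squashed-entanglement expression of the form $\tfrac12\log\!\left(\tfrac{\eta_i}{(1-\eta)\eta_{E'}}+1\right)$ after taking $N_S\to\infty$, and the joint term contributes $\tfrac12\log\!\left(\tfrac{\eta}{(1-\eta)(1-\eta_{E'})}+1\right)$. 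The photon-number independence in the limit follows exactly as in \cite{TGW14IEEE,TGW14Nat}: the squashed-entanglement upper bound saturates to a finite value because the beamsplitter squashing channel splits the environment's information between $E'$ and $F'$ in a way that caps the conditional mutual information.

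Finally I would optimize the resulting expression over $\eta_{E'}\in[0,1]$. Differentiating the right-hand side of (\ref{eq:general-bosonic-bound}) with respect to $\eta_{E'}$ and setting the derivative to zero yields precisely the stationarity condition
\begin{equation}
\sum_{i=1}^{m}\frac{1}{\eta_{E'}^{2}(1-\eta)/\eta_i+\eta_{E'}} = \frac{1}{(1-\eta_{E'})^{2}(1-\eta)/\eta+1-\eta_{E'}}, \nonumber
\end{equation}
whose solution defines $\eta_{E'}^{\ast}$; substituting back gives the claimed bound. I expect the main obstacle to be the Gaussian-extremality step in the multipartite setting: one must argue carefully that the \emph{same} Gaussian input state simultaneously controls both $I$ and $\widetilde{I}$ (so that the $\max$ over the two measures on the left of (\ref{eq:general-bosonic-bound}) is genuinely bounded by one expression), and that the $N_S\to\infty$ limit of the conditional entropy differences converges to the stated logarithms rather than diverging. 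Justifying the interchange of the supremum over $\phi_{RA}$ with the infinite-energy limit, and confirming that the extremal Gaussian state is diagonal in the mode basis adapted to the beamsplitter network, is where the real work lies; the optimization over $\eta_{E'}$ is then a routine calculus exercise.
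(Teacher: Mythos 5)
Your proposal follows the same overall route as the paper's proof: fix the beamsplitter squashing channel (\ref{eq:bosonic-squashing-channel}), invoke Gaussian extremality of the conditional entropy to reduce the optimization over inputs to a thermal state of mean photon number $N_S$, take $N_S\rightarrow\infty$, and minimize over $\eta_{E'}$, arriving at exactly the stationarity equation defining $\eta_{E'}^{\ast}$. The genuine gap is in how you handle $\widetilde{E}_{\operatorname{sq}}$. Your ``key simplifying observation''---that $I$ and $\widetilde{I}$ coincide on the Gaussian extremal state for the chosen squashing channel---is false, so the verification you propose (matching cross terms for a fixed beamsplitter) would fail. For a fixed $\eta_{E'}$, expanding via (\ref{eq:cmmi_I}) and using duality of conditional entropy on the global pure state gives
\[
2E_{\operatorname{sq}}\left(R;B_{1};\cdots;B_{m}\right)_{\omega}\leq\sum_{i=1}^{m}H\left(B_{i}|E^{\prime}\right)+H\left(B_{1}\cdots B_{m}|F^{\prime}\right),
\]
while expanding via (\ref{eq:cmmi_S_2}) gives
\[
2\widetilde{E}_{\operatorname{sq}}\left(R;B_{1};\cdots;B_{m}\right)_{\omega}\leq H\left(B_{1}\cdots B_{m}|E^{\prime}\right)+\sum_{i=1}^{m}H\left(B_{i}|F^{\prime}\right),
\]
so the single-system entropies are conditioned on $E'$ in the first case but on $F'$ in the second (also note these terms are conditional entropies, not the conditional mutual informations $I(R;B_i|E')$ you wrote). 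After the Gaussian computation and the $N_S\rightarrow\infty$ limit, the first expression becomes $\sum_{i}\log\left(\frac{\eta_{i}}{\left(1-\eta\right)\eta_{E^{\prime}}}+1\right)+\log\left(\frac{\eta}{\left(1-\eta\right)\left(1-\eta_{E^{\prime}}\right)}+1\right)$, while the second becomes the same expression with $\eta_{E^{\prime}}$ replaced by $1-\eta_{E^{\prime}}$. These differ for every $\eta_{E'}\neq1/2$; what saves the proposition is that $\eta_{E'}\leftrightarrow1-\eta_{E'}$ is a bijection of $\left[0,1\right]$, so the two bounds have the same minimum over $\eta_{E'}$, and that common minimum---attained at $\eta_{E'}^{\ast}$ for the first expression---bounds the maximum of the two squashed entanglements. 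Without this symmetry-of-minima argument, your proof covers only $E_{\operatorname{sq}}$.

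The other issue you flag, the behavior as $N_S\rightarrow\infty$, is resolved by an elementary argument you should supply rather than defer: every entropy difference appearing in the bounds has the form $g(x)-g\left(\lambda x\right)$ with $\lambda\in\left[0,1\right]$ and $x$ proportional to $N_S$; this function is monotonically increasing in $x$ and converges to $\log\left(1/\lambda\right)$, so replacing each difference by its limit only loosens the bound, uniformly over all inputs of finite mean photon number---no interchange of the supremum over $\phi_{RA}$ with the infinite-energy limit is needed. Finally, to conclude that the stationary point $\eta_{E'}^{\ast}$ is the global minimizer (rather than merely a critical point), note that $\log\left(1+a/x\right)$ is convex in $x>0$ for $a\geq0$, so the right-hand side of the bound is a sum of convex functions of $\eta_{E'}$.
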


\begin{proof} Our proof of this proposition generalizes the proof
of \cite[Eq.~(27)]{TGW14IEEE}. Let $\varphi_{RB_{1}\cdots B_{m}E}$
be the pure state that results from applying the channel to a pure
state input $\phi_{RA}$ satisfying $\left\langle \hat{a}^{\dagger}\hat{a}\right\rangle _{\phi_{A}}\leq N_{S}$.
Let $\varphi_{RB_{1}\cdots B_{m}E^{\prime}F^{\prime}}$ be the state resulting
from applying the squashing transformation in (\ref{eq:bosonic-squashing-channel})
to the system $E$ of $\varphi_{RB_{1}\cdots B_{m}E}$. Then
\begin{align}
2E_{\operatorname{sq}}\left(R;B_{1};\cdots;B_{m}\right)_{\omega} & \leq H\left(R|E^{\prime}\right)_{\varphi}+\sum_{i=1}^{m}H\left(B_{i}|E^{\prime}\right)_{\varphi}-H\left(RB_{1}\cdots B_{m}|E^{\prime}\right)_{\varphi}\\
 & =\sum_{i=1}^{m}H\left(B_{i}|E^{\prime}\right)_{\varphi}-H\left(B_{1}\cdots B_{m}|RE^{\prime}\right)_{\varphi}\\
 & =\sum_{i=1}^{m}H\left(B_{i}|E^{\prime}\right)_{\varphi}+H\left(B_{1}\cdots B_{m}|F^{\prime}\right)_{\varphi}
\end{align}
As written, the conditional entropies in the last line are now functions
of the input density operator $\phi_{A}$. Applying the extremality
of Gaussian states for the conditional entropy \cite{EW07,WGC06},
we can conclude that these quantities are all optimized by a thermal
state of mean photon number $N_{S}$. For such a state, one can work
out using the symplectic formalism for bosonic states \cite{WPGCRSL12}
that
\begin{align}
H\left(B_{i}E^{\prime}\right) & =g\left(\left[\eta_{i}+\left(1-\eta\right)\eta_{E^{\prime}}\right]N_{S}\right),\\
H\left(E^{\prime}\right) & =g\left(\left(1-\eta\right)\eta_{E^{\prime}}N_{S}\right),\\
H\left(B_{1}\cdots B_{m}F^{\prime}\right) & =g\left(\left[\eta+\left(1-\eta\right)\left(1-\eta_{E^{\prime}}\right)\right]N_{S}\right),\\
H\left(F^{\prime}\right) & =g\left(\left(1-\eta\right)\left(1-\eta_{E^{\prime}}\right)N_{S}\right),
\end{align}
where
\begin{equation}
g(x)\equiv\left(x+1\right)\log_{2}\left(x+1\right)-x\log_{2}x
\end{equation}
is the entropy of a thermal state of mean photon number $x$. Each
entropy above can be understood in a simple way:\ for a simple pure loss bosonic broadcast
channel $\mathcal{N}_{A\rightarrow B_1B_2\ldots B_m}$, each
state held by any group of parties at the receiving end is unitarily
equivalent to a thermal state with mean photon number proportional
to the fraction of light that makes it to them. So this leads to the
photon-number dependent upper bound
\begin{multline}
2E_{\operatorname{sq}}\left(R;B_{1};\cdots;B_{m}\right)_{\omega}\leq\sum_{i=1}^{m}\bigg( g\left(\left[\eta_{i}+\left(1-\eta\right)\eta_{E^{\prime}}\right]N_{S}\right)-g\left(\left(1-\eta\right)\eta_{E^{\prime}}N_{S}\right)\bigg)\\
+g\left(\left[\eta+\left(1-\eta\right)\left(1-\eta_{E^{\prime}}\right)\right]N_{S}\right)-g\left(\left(1-\eta\right)\left(1-\eta_{E^{\prime}}\right)N_{S}\right).
\end{multline}
One can easily compute the derivative of $g(x)-g\left(\lambda x\right)$
to show that this function is monotonically increasing in $x$ for
$x\geq0$ and $\lambda\in\left[0,1\right]$, and furthermore, one
can easily show that
\begin{equation}
\lim_{x\rightarrow\infty}g(x)-g\left(\lambda x\right)=\log\left(1/\lambda\right).
\end{equation}
So we can conclude that the right hand side (RHS)\ above is a monotonically
increasing function of $N_{S}\geq0$ and taking the limit $N_{S}\rightarrow\infty$
only increases the upper bound. This leads to the following photon-number
independent upper bound:
\begin{equation}
2E_{\operatorname{sq}}\left(R;B_{1};\cdots;B_{m}\right)_{\omega}\leq\sum_{i=1}^{m}\log\left(\frac{\eta_{i}}{\left(1-\eta\right)\eta_{E^{\prime}}}+1\right)+\log\left(\frac{\eta}{\left(1-\eta\right)\left(1-\eta_{E^{\prime}}\right)}+1\right).\label{eq:sq-bound-bosonic}
\end{equation}
which holds for arbitrary $\eta_{E^{\prime}}\in\left[0,1\right]$.
To get the tightest upper bound, we should minimize the RHS of (\ref{eq:sq-bound-bosonic})\ with
respect to $\eta_{E^{\prime}}$. Any local minimum of this function
is a global minimum because the function $\log\left(1+a/x\right)$
is convex in $x$ for $a\geq0$ and $x\geq0$ (as can be checked by
computing the second derivative) and the RHS\ of (\ref{eq:sq-bound-bosonic})
is convex in $\eta_{E^{\prime}}$ as it is a sum of convex functions.
Since we need to solve for $\eta_{E^{\prime}}$ in 
\begin{equation}
\frac{\partial}{\partial\eta_{E^{\prime}}}\left[\sum_{i=1}^{m}\log\left(\frac{\eta_{i}}{\left(1-\eta\right)\eta_{E^{\prime}}}+1\right)+\log\left(\frac{\eta}{\left(1-\eta\right)\left(1-\eta_{E^{\prime}}\right)}+1\right)\right]=0,
\end{equation}
we can use that the first derivative of $\log\left(1+a/x\right)$
is equal to $-1/\left(x^{2}/a+x\right)$, which leads to solving the
following equation for $\eta_{E^{\prime}}$:
\begin{equation}
\sum_{i=1}^{m}\frac{1}{\eta_{E^{\prime}}^{2}\left(1-\eta\right)/\eta_{i}+\eta_{E^{\prime}}}=\frac{1}{\left(1-\eta_{E^{\prime}}\right)^{2}\left(1-\eta\right)/\eta+1-\eta_{E^{\prime}}}.
\end{equation}
This establishes one of the inequalities in (\ref{eq:general-bosonic-bound}).

By a similar line of reasoning as above, consider that
\begin{align}
 & 2\widetilde{E}_{\operatorname{sq}}\left(R;B_{1};\cdots;B_{m}\right)_{\omega}\nonumber \\
 & \leq H\left(RB_{1}\cdots B_{m}|E^{\prime}\right)_{\varphi}-H\left(R|B_{1}\cdots B_{m}E^{\prime}\right)_{\varphi}-\sum_{i=1}^{m}H\left(B_{i}|RB_{\left[m\right]\backslash\left\{ i\right\} }E^{\prime}\right)_{\varphi}\\
 & =H\left(B_{1}\cdots B_{m}|E^{\prime}\right)_{\varphi}+\sum_{i=1}^{m}H\left(B_{i}|F^{\prime}\right)_{\varphi}\label{eq:tilde-E-to-bound-bosonic}
\end{align}
Here again we have written the entropies as a function of the input
density operator, which we know from the extremality of Gaussian states
is optimized by a thermal state for a fixed photon number. For such
an input, we have that
\begin{align}
H\left(B_{1}\cdots B_{m}E^{\prime}\right) & =g\left(\left[\eta+\left(1-\eta\right)\eta_{E^{\prime}}\right]N_{S}\right),\\
H\left(E^{\prime}\right) & =g\left(\left(1-\eta\right)\eta_{E^{\prime}}N_{S}\right),\\
H\left(B_{i}F{\prime}\right) & =g\left(\left[\eta_{i}+\left(1-\eta\right)\left(1-\eta_{E^{\prime}}\right)\right]N_{S}\right),\\
H\left(F{\prime}\right) & =g\left(\left(1-\eta\right)\left(1-\eta_{E^{\prime}}\right)N_{S}\right),
\end{align}
so that (\ref{eq:tilde-E-to-bound-bosonic}) is bounded from above
by
\begin{multline}
g\left(\left[\eta+\left(1-\eta\right)\eta_{E^{\prime}}\right]N_{S}\right)-g\left(\left(1-\eta\right)\eta_{E^{\prime}}N_{S}\right)\\
+\sum_{i=1}^{m}\bigg(g\left(\left[\eta_{i}+\left(1-\eta\right)\left(1-\eta_{E^{\prime}}\right)\right]N_{S}\right)-g\left(\left(1-\eta\right)\left(1-\eta_{E^{\prime}}\right)N_{S}\right)\bigg)\\
\leq\sum_{i=1}^{m}\log\left(\frac{\eta_{i}}{\left(1-\eta\right)\left(1-\eta_{E^{\prime}}\right)}+1\right)+\log\left(\frac{\eta}{\left(1-\eta\right)\eta_{E^{\prime}}}+1\right).\label{eq:tilde-sq-bound-bosonic}
\end{multline}
This bound applies for every $\eta_{E^{\prime}}\in\left[0,1\right]$,
so we can take a minimum over all such $\eta_{E^{\prime}}$. However,
we can now observe that this minimum is exactly the same as the one
above because the RHS\ of (\ref{eq:tilde-sq-bound-bosonic}) is related
to the RHS\ of (\ref{eq:sq-bound-bosonic}) by $\eta_{E^{\prime}}\leftrightarrow1-\eta_{E^{\prime}}$.
\end{proof}

\bigskip{}

We now state the main theorem of this section, which bounds the entanglement
distillation and secret key agreement rates achievable with a pure-loss
bosonic broadcast channel that has one sender and two receivers.

\begin{theorem} \label{thm:two receivers-PLBosonic}Let a pure-loss
bosonic broadcast channel from a sender Alice to receivers Bob and
Charlie be described by the mode transformations in \eqref{eq:QBC_in-out1}-\eqref{eq:QBC_in-out3}.
Then the achievable entanglement distillation and secret key agreement
rates (see Section~\ref{sub:Quantum-broadcast-channels}) are bounded
as follows:
\begin{align}
E_{AB}+K_{AB}+E_{BC}+K_{BC}+E_{ABC}+K_{ABC} & \leq\log\left(\frac{1+\eta_{B}-\eta_{C}}{1-\eta_{B}-\eta_{C}}\right),\\
E_{AC}+K_{AC}+E_{BC}+K_{BC}+E_{ABC}+K_{ABC} & \leq\log\left(\frac{1+\eta_{C}-\eta_{B}}{1-\eta_{B}-\eta_{C}}\right),\\
E_{AB}+K_{AB}+E_{AC}+K_{AC}+E_{ABC}+K_{ABC} & \leq\log\left(\frac{1+\eta_{B}+\eta_{C}}{1-\eta_{B}-\eta_{C}}\right),
\end{align}
and
\begin{multline}
E_{AB}+K_{AB}+E_{AC}+K_{AC}+E_{BC}+K_{BC}+\frac{{3}}{2}\left(E_{ABC}+K_{ABC}\right)\\
\leq\frac{1}{2}\left[\log\left(\frac{\eta_{B}}{\left(1-\eta\right)\left(1-\eta_{E^{\prime}}^{\ast}\right)}+1\right)+\log\left(\frac{\eta_{C}}{\left(1-\eta\right)\left(1-\eta_{E^{\prime}}^{\ast}\right)}+1\right)+\log\left(\frac{\eta}{\left(1-\eta\right)\eta_{E^{\prime}}^{\ast}}+1\right)\right],
\end{multline}
where $\eta_{E^{\prime}}^{\ast}$ is the solution of
\begin{equation}
\frac{1}{\eta_{E^{\prime}}^{2}\left(1-\eta\right)/\eta_{B}+\eta_{E^{\prime}}}+\frac{1}{\eta_{E^{\prime}}^{2}\left(1-\eta\right)/\eta_{C}+\eta_{E^{\prime}}}=\frac{1}{\left(1-\eta_{E^{\prime}}\right)^{2}\left(1-\eta\right)/\eta+1-\eta_{E^{\prime}}}.
\end{equation}

\end{theorem}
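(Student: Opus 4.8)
The plan is to combine the channel-independent outer bounds of Theorem~\ref{thm:two receivers} with explicit evaluations of the relevant (multipartite) squashed entanglements for the Gaussian channel of \eqref{eq:QBC_in-out1}--\eqref{eq:QBC_in-out3}. Theorem~\ref{thm:two receivers} supplies a pure state $\phi_{RA}$ with $\omega_{RBC}=\mathcal{N}_{A\rightarrow BC}(\phi_{RA})$ for which the four rate sums are bounded by $E_{\operatorname{sq}}(RC;B)_{\omega}$, $E_{\operatorname{sq}}(RB;C)_{\omega}$, $E_{\operatorname{sq}}(R;BC)_{\omega}$, and $\min\{E_{\operatorname{sq}}(R;B;C)_{\omega},\widetilde{E}_{\operatorname{sq}}(R;B;C)_{\omega}\}$. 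It therefore suffices to upper bound each of these four quantities for the pure-loss bosonic broadcast channel, maximized over all pure inputs $\phi_{RA}$ (equivalently over all input density operators $\phi_{A}$, since $R$ may purify $A$ and the measures are invariant under isometries on $R$). I would dispatch the three bipartite cuts first and then the single tripartite cut.

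For each bipartite cut I would run the single-sender single-receiver argument of \cite{TGW14IEEE}. Taking the environment mode $\hat{e}$ of \eqref{eq:QBC_in-out3}, applying the beamsplitter squashing channel \eqref{eq:bosonic-squashing-channel} with transmissivity $\eta_{E^{\prime}}$, and using purity of the global state together with duality of conditional entropy, I obtain for the cut $RC;B$ the bound
\begin{equation}
2E_{\operatorname{sq}}(RC;B)_{\omega}\leq H(B|E^{\prime})_{\varphi}+H(B|F^{\prime})_{\varphi},
\end{equation}
and likewise with $B$ replaced by $C$ or by $BC$ for the other two cuts. By the extremality of Gaussian states \cite{EW07,WGC06}, these conditional entropies are maximized by a thermal input of mean photon number $N_{S}$, and the symplectic formalism \cite{WPGCRSL12} evaluates them through the function $g$: the receiving group $B$, $C$, or $BC$ sees transmissivity $\eta_{B}$, $\eta_{C}$, or $\eta=\eta_{B}+\eta_{C}$, while $E^{\prime}$ sees $\eta_{E^{\prime}}(1-\eta)$ and $F^{\prime}$ sees $(1-\eta_{E^{\prime}})(1-\eta)$. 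Sending $N_{S}\rightarrow\infty$ via $\lim_{x\rightarrow\infty}[g(x)-g(\lambda x)]=\log(1/\lambda)$ (the right-hand side being monotone increasing in $N_{S}$, so the limit only loosens the bound) yields, for the cut $RC;B$, the $\eta_{E^{\prime}}$-dependent expression $\log\!\big(1+\tfrac{\eta_{B}}{(1-\eta)\eta_{E^{\prime}}}\big)+\log\!\big(1+\tfrac{\eta_{B}}{(1-\eta)(1-\eta_{E^{\prime}})}\big)$. This is symmetric under $\eta_{E^{\prime}}\leftrightarrow 1-\eta_{E^{\prime}}$ and convex in $\eta_{E^{\prime}}$, so it is minimized at $\eta_{E^{\prime}}=1/2$, giving $E_{\operatorname{sq}}(RC;B)_{\omega}\leq\log\!\big(1+\tfrac{2\eta_{B}}{1-\eta}\big)=\log\tfrac{1+\eta_{B}-\eta_{C}}{1-\eta_{B}-\eta_{C}}$. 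The identical choice $\eta_{E^{\prime}}=1/2$ for the cuts $RB;C$ and $R;BC$ produces $\log\tfrac{1+\eta_{C}-\eta_{B}}{1-\eta_{B}-\eta_{C}}$ and $\log\tfrac{1+\eta_{B}+\eta_{C}}{1-\eta_{B}-\eta_{C}}$, establishing the first three inequalities.

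The fourth inequality follows immediately from Proposition~\ref{prop:bosonic-multi-squash}. Applying it with $m=2$, $\eta_{1}=\eta_{B}$, and $\eta_{2}=\eta_{C}$ bounds both $E_{\operatorname{sq}}(R;B;C)_{\omega}$ and $\widetilde{E}_{\operatorname{sq}}(R;B;C)_{\omega}$, and hence their minimum, which is exactly the quantity appearing in \eqref{eq:ABC constr}. The optimal squashing parameter $\eta_{E^{\prime}}^{\ast}$ is the root of the transcendental stationarity condition recorded in the proposition (obtained by setting the $\eta_{E^{\prime}}$-derivative of the bound to zero), which convexity certifies to be the global minimum and which specializes to the equation stated in the theorem. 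Combining this with the three bipartite cuts yields all four claimed inequalities.

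The technical heart of the argument is the invocation of Gaussian extremality followed by the symplectic evaluation of the conditional entropies: I would need to confirm that the covariance matrix of each grouping reproduces exactly the transmissivity bookkeeping asserted above (effective receiver fractions $\eta_{B}$, $\eta_{C}$, $\eta$, and split environment fractions $\eta_{E^{\prime}}(1-\eta)$ and $(1-\eta_{E^{\prime}})(1-\eta)$), and that $g(x)-g(\lambda x)$ is genuinely monotone so the $N_{S}\rightarrow\infty$ limit is a legitimate (if lossy) upper bound. The only optimization-flavored subtlety is the minimization over $\eta_{E^{\prime}}$: for the bipartite cuts the reflection symmetry forces $\eta_{E^{\prime}}=1/2$, whereas for the tripartite cut the asymmetry between the $\eta_{i}$ and $\eta$ leaves a genuine transcendental condition, namely the one already handled in Proposition~\ref{prop:bosonic-multi-squash}.
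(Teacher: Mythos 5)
Your proposal is correct and follows essentially the same route as the paper: bound the four squashed-entanglement quantities from Theorem~\ref{thm:two receivers} using the beamsplitter squashing channel of \eqref{eq:bosonic-squashing-channel}, reduce each bipartite cut to $H(B|E')+H(B|F')$ via duality of conditional entropy, evaluate with Gaussian extremality and the $g$ function in the $N_S\rightarrow\infty$ limit, pick $\eta_{E'}=1/2$ by convexity and reflection symmetry for the three bipartite cuts, and invoke Proposition~\ref{prop:bosonic-multi-squash} with $m=2$, $\eta_1=\eta_B$, $\eta_2=\eta_C$ for the tripartite cut. The paper's own proof of this theorem merely sketches these steps (deferring the Gaussian-extremality and photon-number-limit details to the proof of Proposition~\ref{prop:bosonic-multi-squash}), and your write-up fills them in consistently.
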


\begin{proof} Here we only highlight the main steps without giving
reasons, as much of it is the same as in the proof of Proposition~\ref{prop:bosonic-multi-squash}.
Our approach is simply to bound the quantities $E_{\operatorname{sq}}\left(RC;B\right)_{\omega}$,
$E_{\operatorname{sq}}\left(RB;C\right)_{\omega}$, $E_{\operatorname{sq}}\left(R;BC\right)_{\omega}$,
$E_{\operatorname{sq}}\left(R;B;C\right)_{\omega}$ from Theorem~\ref{thm:two receivers}.
Consider that
\begin{align}
2E_{\operatorname{sq}}\left(RC;B\right)_{\omega} & \leq I\left(RC;B|E^{\prime}\right)\\
 & =H\left(B|E^{\prime}\right)-H\left(B|RCE^{\prime}\right)\\
 & =H\left(B|E^{\prime}\right)+H\left(B|F^{\prime}\right)\\
 & \leq\log\left(\frac{\eta_{B}}{\left(1-\eta\right)\eta_{E^{\prime}}}+1\right)+\log\left(\frac{\eta_{B}}{\left(1-\eta\right)\left(1-\eta_{E^{\prime}}\right)}+1\right)\label{eq:last-line-bosonic}
\end{align}
Picking $\eta_{E^{\prime}}=1/2$ then gives the bound:
\begin{equation}
E_{\operatorname{sq}}\left(RC;B\right)_{\omega}\leq\log\left(\frac{2\eta_{B}}{1-\eta}+1\right)=\log\left(\frac{2\eta_{B}+1-\eta}{1-\eta}\right)=\log\left(\frac{1+\eta_{B}-\eta_{C}}{1-\eta_{B}-\eta_{C}}\right)
\end{equation}
This is optimal because the function in (\ref{eq:last-line-bosonic})
is convex in $\eta_{E^{\prime}}$ and symmetric about $\eta_{E^{\prime}}=1/2$.

Similarly, we have
\begin{equation}
E_{\operatorname{sq}}\left(RB;C\right)_{\omega}\leq\log\left(\frac{1+\eta_{C}-\eta_{B}}{1-\eta_{B}-\eta_{C}}\right),\ \ \ \ \ \ \ \ \ \ E_{\operatorname{sq}}\left(R;BC\right)_{\omega}\leq\log\left(\frac{1+\eta_{B}+\eta_{C}}{1-\eta_{B}-\eta_{C}}\right).
\end{equation}
Furthermore, we can apply Proposition~\ref{prop:bosonic-multi-squash}
to find that
\begin{multline}
2E_{\operatorname{sq}}\left(R;B;C\right)_{\omega}\leq\log\left(\frac{\eta_{B}}{\left(1-\eta\right)\left(1-\eta_{E^{\prime}}^{\ast}\right)}+1\right)\\
+\log\left(\frac{\eta_{C}}{\left(1-\eta\right)\left(1-\eta_{E^{\prime}}^{\ast}\right)}+1\right)+\log\left(\frac{\eta}{\left(1-\eta\right)\eta_{E^{\prime}}^{\ast}}+1\right),
\end{multline}
where $\eta_{E^{\prime}}^{\ast}$ is the solution of
\begin{equation}
\frac{1}{\eta_{E^{\prime}}^{2}\left(1-\eta\right)/\eta_{B}+\eta_{E^{\prime}}}+\frac{1}{\eta_{E^{\prime}}^{2}\left(1-\eta\right)/\eta_{C}+\eta_{E^{\prime}}}=\frac{1}{\left(1-\eta_{E^{\prime}}\right)^{2}\left(1-\eta\right)/\eta+1-\eta_{E^{\prime}}}.
\end{equation}
This completes the proof.
\end{proof}

\section{Conclusion}

\label{sec:DC}We have shown how multipartite generalizations of the
squashed entanglement \cite{YHHHOS09,AHS08} lead to several constraints
on the rates at which secret key and entanglement can be generated
between any subset of the users of a quantum broadcast channel. Along
the way, we developed several new properties of these measures, which
include the subadditivity lemma (Lemma~\ref{lem:TGW Mult}), monotonicity
under groupings, reductions for product states, and the evaluation
of the measures for a tensor product of entangled and private states
shared between all subsets of a given set of parties. Finally, we
applied our results to a single-sender two-receiver bosonic broadcast
channel.

Some future directions include to determine upper bounds on
the secret key agreement and entanglement distillation capacity of
a multiple access or more general quantum network channel. One could
also attempt the challenging task of proving that the bounds given
here are strong converse rates. However, it is not yet known whether
the single-sender single-receiver squashed entanglement is a strong
converse rate.

\bigskip{}

\textbf{Acknowledgements.} We are grateful to Saikat Guha for discussions
related to the topic of this paper. KS acknowledges support from NSF\ Grant
No.~CCF-1350397, the DARPA Quiness Program through US Army Research
Office award W31P4Q-12-1-0019, and the Graduate School of Louisiana
State University for the 2014-2015 Dissertation Year Fellowship. MMW
acknowledges support from startup funds from the Department of Physics
and Astronomy at LSU, the NSF\ under Award No.~CCF-1350397, and
the DARPA Quiness Program through US Army Research Office award W31P4Q-12-1-0019.
MT acknowledges support from Open Partnership Joint Projects of JSPS
Bilateral Joint Research Projects and ImPACT Program of Council for
Science, Technology and Innovation, Japan. He is also grateful to
members of the Hearne Institute for Theoretical Physics at LSU for
their hospitality during his research visit in February 2015.

 \bibliographystyle{plain}
\bibliography{MSKED-Ref}

\end{document}